\newcommand{\LONG}[1]{#1}\newcommand{\SHORT}[1]{}
\newtheorem{theorem}{Theorem}[section]
\newtheorem{lemma}[theorem]{Lemma}
\newtheorem{corollary}[theorem]{Corollary}
\newtheorem{rmk}[theorem]{Remark}
\newenvironment{remark}{\begin{rmk}\em}{\end{rmk}}
\newcommand{\eps}{\varepsilon}
\newcommand{\Z}{\mathbb{Z}}
\newcommand{\OO}{\widetilde{O}}
\newcommand{\CELL}{\textrm{cell}}
\newcommand{\IGNORE}[1]{}
\newcommand{\polylog}{\rm{polylog}}
\begin{document}

\sloppypar
\title{Clustered Integer 3SUM via Additive Combinatorics}
\author{Timothy M. Chan%
\thanks{Cheriton School of Computer Science, University of Waterloo
(tmchan@uwaterloo.ca).  This research was done in part while
the author was on sabbatical in
Hong Kong University of Science and Technology.
The research is supported by an NSERC grant.}
 \and Moshe Lewenstein%
\thanks{Department of Computer Science, Bar-Ilan University
(moshe@cs.biu.ac.il).
This research was done in part while the author was on sabbatical in the University of Waterloo. The research is supported by BSF grant 2010437 and GIF grant 1147/2011.}}

\date{}
\maketitle
\setcounter{page}{0}
\thispagestyle{empty}

\begin{abstract}
We present a collection of new results on problems
related to 3SUM, including:
\begin{itemize}
\item The first truly subquadratic algorithm for
\begin{itemize}
\item computing the
(min,+) convolution for monotone increasing sequences with
integer values bounded by $O(n)$,
\item solving 3SUM for monotone sets in 2D with integer coordinates
bounded by $O(n)$, and
\item preprocessing a binary string for histogram indexing
(also called jumbled indexing).
\end{itemize}
The running time is
$O(n^{(9+\sqrt{177})/12}\,\textrm{polylog}\,n)=O(n^{1.859})$ with randomization, or $O(n^{1.864})$ deterministically.
This greatly improves the previous $n^2/2^{\Omega(\sqrt{\log n})}$
time bound obtained from Williams' recent result
on all-pairs shortest paths [STOC'14], and answers an open question raised by several researchers studying the histogram indexing problem.
\item The first algorithm for histogram indexing for any constant alphabet
size that achieves truly subquadratic preprocessing time and truly sublinear query time.
\item A truly subquadratic algorithm for integer 3SUM in the
case when the given set can be partitioned into $n^{1-\delta}$
clusters each covered by an interval of length $n$, for any
constant $\delta>0$.
\item An algorithm to preprocess any set of $n$ integers so that
subsequently 3SUM on any given subset can be solved in
$O(n^{13/7}\,\textrm{polylog}\,n)$ time.
\end{itemize}
All these results are
obtained by a surprising new technique,
based on the Balog--Szemer\'edi--Gowers Theorem from additive
combinatorics.
\IGNORE{
Additive combinatorics is the field concerned with combinatorial properties of sums (and differences) of sets. For example, rich structure can be shown for dense sets.

Several well known algorithmic problems are sum-set problems or neighboring problems. Perhaps, the most prominent is 3SUM, and its extension $k$-SUM. Another important neighboring problem is that of (min,+)-convolutions, which is strongly related to the All Pairs Shortest Path (APSP) problem.

We show how to apply results from additive combinatorics in order to obtain efficient algorithms for problems with sum-set nature. While there have been applications of additive combinatorics to computational complexity, these may be the first applications to algorithmic problems.

We first show how to efficiently implement an additive combinatoric result and then show truly subquadratic, that is $O(n^{2-\epsilon})$ time, algorithms for; (a) {\em 3SUM for monotone sets in $[n]^d$}, (b){\em monotone(min,+) convolution} (c) (the open problem of) a preprocessing algorithm for {\em histogram indexing}, (c) 3SUM when the data is clustered, (d) "3-SUM queries", left open in~\cite{BW12}.
}
\end{abstract}

\newpage

\section{Introduction}

\subsection{Motivation: Bounded Monotone (min,+) Convolution}

Our work touches on two of the most tantalizing open algorithmic
questions:
\begin{itemize}
\item Is there a truly subcubic ($O(n^{3-\delta})$-time) algorithm
for \emph{all-pairs shortest paths} (APSP) in general dense edge-weighted
graphs?  If all the edge weights are small integers bounded
by a constant, then the
answer is known to be yes, using fast matrix multiplication~\cite{ZwickSURVEY},
but the question remains open not only for arbitrary real weights,
but even for integer weights in, say, $[n]$.%
\footnote{$[n]$ denotes $\{0,1,\ldots,n-1\}$.}
The current best combinatorial algorithms run
in slightly subcubic $O((n^3/\log^2 n)(\log\log n)^{O(1)})$ time \cite{Chan10,HanTak}.  The recent breakthrough by
Williams~\cite{Williams14} achieves
$n^3 / 2^{\Omega(\sqrt{\log n})}$ expected time (using
fast rectangular matrix multiplication).
\item Is there a truly subquadratic ($O(n^{2-\delta})$-time) algorithm for the
\emph{3SUM} problem?  One way to state the problem (out of
several equivalent ways) is:
given sets $A,B,S$ of size $n$, decide whether
there exists a triple $(a,b,s) \in A\times B\times S$ such that $a+b=s$; in other words, decide whether
$(A+B)\cap S \not= \emptyset$.
All 3SUM algorithms we are aware of actually solve
a slight extension which we will call
the \emph{3SUM$^+$} problem:
decide for every element $s\in S$ whether
$a+b=s$ for some $(a,b)\in A\times B$; in other words,
report all elements in $(A+B)\cap S$.
If $A,B,S\subseteq [cn]$, then the problem can
be solved in $O(cn\log n)$ time by fast Fourier transform (FFT),
since it reduces to convolution for 0-1 sequences of length $cn$.
However, the question remains open
for general real values, or just integers from $[n]^2$.
The myriad ``3SUM-hardness'' results showing reductions
from both real and integer 3SUM to different problems
about computational geometry, graphs, and strings
\cite{GO95,Patrascu10,BDP08,PW10,VW09,AWW14,ACLL14,JV13,KPP14} tacitly
assume that the answer could be no.
The current best algorithm for integer 3SUM or 3SUM$^+$ by
Baran et al.~\cite{BDP08} runs in slightly subquadratic
$O((n^2/\log^2 n)(\log\log n)^2)$ expected time.
Gr\o nlund and Pettie's recent breakthrough for
general real 3SUM or 3SUM$^+$~\cite{GP14} achieves
$O((n^2/\log^{2/3}n)(\log\log n)^{2/3})$ deterministic
time or $O((n^2/\log n)(\log\log n)^2)$ expected time.
\end{itemize}

Our starting point concerns one of the most basic special cases---of both problems simultaneously---for which finding a truly subquadratic algorithm has remained open.  Put another way, solving the problem below is
a prerequisite towards solving APSP in truly subcubic
or 3SUM$^+$ in truly subquadratic time:

\begin{description}
\item[The Bounded Monotone (min,+) Convolution Problem:]
Given two monotone increasing sequences $a_0,\ldots,a_{n-1}$ and $b_0,\ldots,b_{n-1}$ lying in $[O(n)]$, compute their \emph{(min,+)
convolution} $s_0,\ldots,s_{2n-2}$, where
$s_k = \min_{i=0}^k (a_i+b_{k-i})$.
\end{description}

If all the $a_i$'s and $b_i$'s are small integers bounded by $c$, then (min,+) convolution can be reduced to classical convolution and can
thus be computed in $O(cn\log n)$ time by FFT\@.
If the differences $a_{i+1}-a_i$ and $b_{i+1}-b_i$ are randomly
chosen from $\{0,1\}$, then we can subtract a linear function $i/2$
from $a_i$ and $b_i$ to get sequences lying in a smaller
range $[\OO(\sqrt{n})]$
and thus solve the problem by FFT in $\OO(n^{3/2})$ time w.h.p.%
\footnote{
The $\OO$ notation hides polylogarithmic factors;
``w.h.p.'' means ``with high probability'', i.e., with probability
at least $1-1/n^c$ for input size $n$ and an arbitrarily large
constant $c$.
}
However, these observations do not seem to help in obtaining truly subquadratic worst-case time
for arbitrary bounded monotone sequences.

We reveal the connection to APSP and 3SUM$^+$:
\begin{itemize}
\item
A simple argument~\cite{BCDEHILPT14} shows that
(min,+) convolution can be reduced to
\emph{(min,+) matrix multiplication}, which in turn is known to
be equivalent to APSP\@.  More precisely, if we can
compute the (min,+) matrix multiplication of two $n\times n$ matrices, or solve APSP, in $T(n)$ time, then we can compute the (min,+) convolution of two sequences of length $n$ in $O(\sqrt{n}T(\sqrt{n}))$ time.  The APSP result by Williams immediately leads to an ${n^2 / 2^{\Omega(\sqrt{\log n})}}$-time algorithm for (min,+) convolution, the best result known
to date.  The challenge is to see if the bounded monotone case
can be solved more quickly.
\item
Alternatively, we observe that the bounded monotone (min,+) convolution problem can be reduced to 3SUM$^+$
for integer point sets in 2D, with at most a logarithmic-factor
slowdown, by setting
$A=\{(i,a): a_i\le a < a_{i+1}\}$ and
$B=\{(i,b): b_i\le b < b_{i+1}\}$ in $[O(n)]^2$, and
using $O(\log n)$ appropriately chosen sets $S$ via
a simultaneous binary search for all the minima (see Section~\ref{sec:mono:appl} for the
details).
Two-dimensional 3SUM$^+$ in $[O(n)]^2$ can be easily
reduced to one-dimensional 3SUM$^+$ in $[O(n^2)]$.
The current best result for integer 3SUM$^+$
leads to worse bounds, but the above reduction requires only
a special case of 3SUM$^+$, when the points in each of the 2D sets
$A,B,S$ in $[O(n)]^2$ form a monotone increasing sequence in both
coordinates simultaneously.  The hope is that the 3SUM$^+$
in this \emph{2D monotone} case can be solved more quickly.
\end{itemize}

The bounded monotone (min,+) convolution problem has a number
of different natural formulations and applications:
\begin{itemize}
\item
Computing the (min,+) convolution  for two integer sequences in
the \emph{bounded differences} case, where
$|a_{i+1}-a_i|,|b_{i+1}-b_i|\le c$ for some constant $c$,
can be reduced to the bounded monotone case by
just adding a linear function $ci$ to both $a_i$ and $b_i$.
(The two cases turn out to be equivalent\LONG{; see Remark~\ref{rmk-bounded}}.)
\item
Our original motivation concerns {\em histogram indexing}
(a.k.a.\ {\em jumbled indexing}) for a binary alphabet: the problem
is to preprocess a string $c_1\cdots c_n\in\{0,1\}^*$, so that we can decide whether there is a substring with exactly $i$ 0's and $j$ 1's for any given $i,j$ (or equivalently,
with length $k$ and exactly $j$ 1's for any given $j,k$).
Histogram indexing has been studied in over a dozen papers in the string algorithms literature in the last several years, and
the question of obtaining a truly subquadratic preprocessing
algorithm in the binary alphabet case has been raised several times
(e.g., see \cite{BCFL10,MR10,MR12} and the introduction of~\cite{ACLL14} for a more detailed survey).
In the binary case,
preprocessing amounts to computing the minimum number $s_k$
(and similarly the maximum number $s'_k$) of $1$'s
over all length-$k$ substrings for every $k$.
Setting $a_i$ to be the prefix sum $c_1+\cdots + c_i$, we see
that $s_k=\min_{i=k}^n (a_i-a_{i-k})$, which is precisely
a (min,+) convolution after negating and reversing the second
sequence.  The sequences are monotone increasing and lie in $\pm [n]$
(and incidentally also satisfy the bounded differences property).
Thus, binary histogram indexing can be reduced to bounded monotone
(min,+) convolution.
(In fact, the two problems turn out to be equivalent\LONG{; see Remark~\ref{rmk-jumble}}.)
\item
In another formulation of the problem, we are
given $n$ integers in $[O(n)]$ and want to find an interval
of length $\ell$ containing the smallest/largest number of elements,
for every $\ell\in [O(n)]$; or find an interval containing
$k$ elements with the smallest/largest length, for every $k\in [n]$.
This is easily seen to be equivalent to binary histogram indexing.
\item
For yet another application,
a ``necklace alignment'' problem
studied by Bremner et al.~\cite{BCDEHILPT14}, when restricted to input
sequences in $[O(n)]$, can also be reduced to bounded monotone (min,+) convolution.
\end{itemize}

\subsection{New Result}
We present the first truly subquadratic algorithm
for bounded monotone (min,+) convolution, and thus for
all its related applications such as binary histogram indexing.
The randomized version of our algorithm runs in
$\OO(n^{1.859})$ expected time; the curious-looking exponent is
more precisely $(9+\sqrt{177})/12$.
The deterministic version of the algorithm has a slightly worse
$O(n^{1.864})$ running time.  Our randomized algorithm uses FFT,
while
our deterministic algorithm uses both FFT and fast (rectangular) matrix multiplication.

\subsection{New Technique via Additive Combinatorics}

Even more interesting than the specific result
is our solution, which surprisingly relies on tools from a
different area: \emph{additive combinatorics}.  We explain
how we are led to that direction.

It is more convenient to consider the reformulation of
the bounded (min,+) monotone convolution problem, in terms
of solving 3SUM$^+$ over certain 2D monotone sets $A,B,S$
in $[O(n)]^2$, as mentioned earlier.
A natural approach to get a truly subquadratic algorithm
is via divide-and-conquer.  For example, we can
partition each input set into
subsets by considering a grid of side length $\ell$ and
taking all the nonempty grid cells;
because of monotonicity of the sets, there are $O(n/\ell)$ nonempty grid cells
each containing $O(\ell)$ points.
For every pair of a nonempty grid cell of $A$
and a nonempty grid cell of $B$, if their sum lands
in or near a nonempty grid cell of $S$, we need to
recursively solve the problem for the subsets of points in these
cells.  ``Usually'',
not many pairs out of the $O((n/\ell)^2)$ possible pairs would satisfy this condition and require recursive calls.
However, there are exceptions;
the most obvious case is when the nonempty grid cells of $A,B,S$ all lie on or near a line.
But in that case, we can subtract a linear function from
the $y$-coordinates to make all $y$-values small integers, and
then solve the problem by FFT directly!

Thus, we seek some combinatorial theorem roughly stating that
if many pairs of $A\times B$ have sum
in or near $S$, the sets $A$ and $B$ must be ``special'', namely,
close to a line.
It turns out that
the celebrated {\em Balog--Szemer\'edi--Gowers Theorem} (henceforth,
the BSG Theorem) from additive combinatorics accomplishes exactly
what we need.
One version of
the BSG Theorem (out of several different versions) states:
\begin{quote}
Given sets $A,B,S$ of size
$N$ in any abelian group such that $|\{(a,b)\in A\times B: a+b\in S\}|\ge \alpha N^2$,
we must have $|A'+B'| \le O((1/\alpha)^5 N)$ for some
large subsets $A'\subseteq A$ and $B'\subseteq B$ with
$|A'|,|B'|\ge \Omega(\alpha N)$.
\end{quote}
(We will apply the theorem to the sets of nonempty grid cells
in $\Z^2$, with $N=O(n/\ell)$.)

The original proof by
Balog and Szemer\'edi~\cite{BS94}
used heavy machinery, namely, the regularity lemma, and
had a much weaker superexponential dependency on $\alpha$.
A much simpler proof with a polynomial $\alpha$-dependency later appeared in a (small part of a famous)
paper by Gowers~\cite{Gowers01}.  Balog~\cite{Balog07} and Sudakov
et al.~\cite{SSV94} further refined the factor to the stated $(1/\alpha)^5$.
Since then, the theorem has appeared in books~\cite{TV06} and surveys~\cite{Lovett14,Viola11}.
Although additive combinatorics, and specifically the BSG Theorem, have found some applications in theoretical computer science
before~\cite{Lovett14,Viola11} (for example, in
property testing~\cite{BLR}), we are not aware of any applications in classical algorithms---we believe this adds further interest to our work.

Four points about the BSG Theorem statement are relevant
to our algorithmic applications:
\begin{itemize}
\item
First, as it reveals, the right criterion of ``special'' is not that
the two sets $A$ and $B$ are close to a line, but rather that
their sumset $A+B$ has small size.
According to another celebrated
theorem from additive combinatorics, {\em Freiman's Theorem}~\cite{Fre,TV06},
if a sumset $A+A$ has size $O(|A|)$, then $A$
indeed has special structure in the sense that it must
be close to a projection of a higher-dimensional
lattice.  Fortunately, we do not need this theorem (which
requires a more complicated proof and has
superexponential $\alpha$-dependency): if $A+B$ has small size,
we can actually compute $A+B$ by FFT directly, as explained in the
``FFT Lemma'' of Section~\ref{sec:bsg}.
\item
Second, the theorem does not state that $A$ and $B$ themselves
must be special,
but rather that we can extract large subsets $A'$ and $B'$ which are special.
In our applications, we need to ``cover'' all possible pairs
in $\{(a,b)\in A\times B:a+b\in S\}$, and so we need a stronger version of the
BSG Theorem which allows us to remove already covered pairs and iterate.  Fortunately,
Balog~\cite{Balog07} and Szemer\'edi et al.~\cite{SSV94} provided
a version of the BSG Theorem that did precisely this;
see Section~\ref{sec:bsg} for the precise statement.
The resulting corollary on pairs covering is dubbed
the ``BSG Corollary'' in Section~\ref{sec:bsg}.
\item
The BSG Theorem was originally conceived with the setting
of constant $\alpha$ in mind, but
polynomial $\alpha$-dependency (which fortunately we have)
will be critical in obtaining truly
subquadratic algorithms in our applications, as we need to
choose $\alpha$ (and $\ell$) to balance the contribution of
the ``usual'' vs.\ ``special'' cases.
\item
The BSG Theorem is originally
a mathematical result, but the time complexity of the
construction will matter in our applications.
We present, to our knowledge, the first time bounds in
Theorem~\ref{runtime-corollary}.
\end{itemize}

Once all the components involving the BSG Corollary and the FFT Lemma
are in place, our main algorithm for bounded monotone (min,+)
convolution can be described simply, as revealed in Section~\ref{sec:mono}.

\subsection{Other Consequences of the New Technique}

The problem we have started with, bounded monotone (min,+) convolution, is just one of many applications that can
be solved with this technique.  We briefly list our
other results:
\begin{itemize}
\item
We can solve 3SUM$^+$ not only in the 2D monotone case, but
also in the $d$-dimensional monotone case
in truly subquadratic
$\OO(n^{(11-d+\sqrt{(d-11)^2+48d})/12})$ expected time
for any constant $d$ (Theorem~\ref{thm-monotone}).
If only $A$ and $B$ are monotone,
a slightly weaker bound $\OO(n^{2-2/(d+13)})$ still holds;
if just $A$ is monotone, another
weaker bound $\OO(n^{2-1/(d+6)})$ holds (Theorem~\ref{cor-monotone-offline}).
\item
In 1D,
we can solve integer 3SUM$^+$ in truly subquadratic
$n^{2-\Omega(\delta)}$ time
if the input sets are {\em clustered} in the sense that
they can be covered by $n^{1-\delta}$
intervals of length $n$ (Corollary~\ref{cor-general}).
In fact, just one of the sets $A$ needs to be clustered.
This is the most general setting of 3SUM we know that
can be solved in truly subquadratic time
(hence, the title of the paper).
In some sense, it ``explains'' all the other results.
For example, $d$-dimensional
monotone sets, when mapped down to 1D in an appropriate way, become
clustered integer sets.

\item We can also solve a data structure version of 3SUM$^+$ where
$S$ is given online:
preprocess $A$ and $B$ so that we can decide whether
any query point $s$ is in $A+B$.
For example, if $A$ and $B$ are monotone in $[n]^d$,
we get truly subquadratic $\OO(n^{2-\delta})$ expected
preprocessing time and truly sublinear $\OO(n^{2/3+\delta(d+13)/6})$
query time for any sufficiently small $\delta>0$
(\LONG{Corollary~\ref{cor-monotone-online}}\SHORT{see the full paper}).
\item
As an immediate application,
we can solve the histogram indexing problem for any constant alphabet size $d$: we can preprocess any string $c_1\cdots c_n\in [d]^*$ in truly
subquadratic $\OO(n^{2-\delta})$ expected time, so that we can decide
whether there is a substring whose vector of character counts
matches exactly a query vector in truly sublinear
$\OO(n^{2/3+\delta(d+13)/6})$ time for any sufficiently
small $\delta>0$ (\LONG{Corollary~\ref{cor-jumble-online}}\SHORT{see the full paper}).
This answers an open question and improves a
recent work by Kociumaka et al.~\cite{KRR13}.
Furthermore, if $n$ queries are given offline, we can answer all queries in total $\OO(n^{2-2/(d+13)})$ expected time
(Corollary~\ref{cor-jumble-offline}).
As $d$ gets large,
this upper bound approaches a conditional lower bound recently
shown by Amir et al.~\cite{ACLL14}.
\item
For another intriguing consequence, we can preprocess
any universes $A_0,B_0,S_0\subseteq\Z$ of $n$ integers so that given any subsets
$A\subseteq A_0,B\subseteq B_0, S\subseteq S_0$, we can solve 3SUM$^+$ for $A,B,S$ in truly subquadratic
$\OO(n^{13/7})$ time (\LONG{Theorem~\ref{thm-preproc1}}\SHORT{see the full paper}).  Remarkably, this is a result about
general integer sets.  One of the results in
Bansal and Williams' paper~\cite{BW12} mentioned precisely
this problem but obtained much weaker polylogarithmic speedups.
When $S_0$ is not given, we can still achieve $\OO(n^{1.9})$ time
(\LONG{Theorem~\ref{thm-preproc2}}\SHORT{see the full paper}).
\end{itemize}

\IGNORE{

 showed slightly better upper bounds for integer 3SUM in the RAM word model. Specifically, they showed that one can reduce a general 3SUM integer input to $[n^3]$ and save a logarithmic factor on $[n^3]$ input, i.e. achieve running time of $O({n^2 \over \log n})$. Moreover, they pointed out that if the input $\in [n^{\delta}]$ one can generate an $O(n^{\delta})$ FFT-based algorithm giving an improvement for input from $[n^{2-\epsilon}]$. All this points to the fact that 3SUM is indeed a hard problem, even for integers from $[n^2]$.

It is tempting to seek for a relation between (min,+) convolution and 3SUM, as both operate on sums of sets. In Section~\ref{sec:mono} we show that (min,+) convolution for monotone bounded sequences is in fact a form of monotone 3SUM for monotone sets in $[n]^d$ which can be viewed more generally as {\em clustered 3SUM}, see formal definitions there. The reduction from monotone (min,+) in $[cn]$ to 3SUM for montone set in $[n]^d$ is an elegant binary search on the structure of the sum of the sets. It appears in Corollary~\ref{cor-monotone-2d}.

For the problems above we propose a novel algorithm that is based upon structural theorems from additive combinatorics that we will expand upon shortly. We succeed in obtaining an $O(n^{(9+\sqrt{177})/12})$ randomized algorithm and a deterministic algorithm of running time with a slightly larger exponent. The deterministic algorithm uses both FFT, in fact - sparse FFTs, and rectangular matrix multiplication. This answers the open question of binary histogram indexing and yields algorithms for the (min,+) convolution and 3SUM for monotone sequences. We later show how our method is useful in obtaining several other results.

The realization that at the core of (min,+) convolutions and 3SUM lies a common structure of sums of sets is of ultimate importance. In the 3SUM problem one asks whether for additive sets $A$, $B$ and $S$, that is sets of elements from a field with a $+$ operator, there exists a triplet $(a,b,s) \in A\times B\times S$ such that $a+b=s$. In other words,
is $(A+B)\cap S \not= \emptyset$?.  We also define its natural extension 3SUM$^+$.

\noindent
{\em The 3SUM$^+$ problem:} decide for every element $s\in S$ whether
$s=a+b$ for some $a\in A$ and $b\in B$.  In other words,
report all elements in $(A+B)\cap S$.

The 3SUM (or 3SUM$^+$) problem can be solved straightforwardly in $O(n^2)$ time, but it seems hard to beat this running time. This seeming hardness led Gajentaan and Overmars~\cite{GO95} to define 3SUM-hardness. Moreover, they reduced many computation geometry problems to 3SUM. For example, the problems of minimum-area triangle, finding 3 collinear points, and determining whether $n$ axis-aligned rectangles cover a given rectangle. 3SUM-hardness has been shown for numerous other problems.

P\u{a}tra\c{s}cu~\cite{Patrascu10}, inspired by his own works~\cite{BDP08,PW10} and by the work of Vassilevska and Williams~\cite{VW09}, showed how  to use 3SUM for reductions to purely combinatorial problems, such as those on graphs or strings. The P\u{a}tra\c{s}cu~\cite{Patrascu10} result has led to a slew of new 3SUM-hardness reductions, e.g.~\cite{AWW14,ACLL14,JV13,KPP14}, that are not by common arithmetic.

On the other hand, Baran et al.~\cite{BDP08} showed slightly better upper bounds for integer 3SUM in the RAM word model. Specifically, they showed that one can reduce a general 3SUM integer input to $[n^3]$ and save a logarithmic factor on $[n^3]$ input, i.e. achieve running time of $O({n^2 \over \log n})$. Moreover, they pointed out that if the input $\in [n^{\delta}]$ one can generate an $O(n^{\delta})$ FFT-based algorithm giving an improvement for input from $[n^{2-\epsilon}]$. All this points to the fact that 3SUM is indeed a hard problem, even for integers from $[n^2]$.

This leads us back to the structure of the 3SUM problem, i.e. the sum of additive sets, which can be viewed as a structure of high interest in the field of {\em additive combinatorics}.

-------------------------------

Additive combinatorics, see~\cite{TV06}, is the field concerned with the structure of additive sets, that is subsets of an abelian group $\Z$ with group operation $+$. For additive sets $A$ and $B$ in $\Z$ we define the sum set to be

$$A+B = \{a+b\ |\ a\in A, b\in B\}$$

and the difference set to be

$$A-B = \{a-b\ |\ a\in A, b\in B\}$$.

Typical questions addressed in the field relate to the structure of sets. For example for one additive set one may ask; is $A+A$ small? Can $A-A$ be covered by a small number of translates of $A$? Are there many quarduples $(a_1,a_2,a_3,a_4) \in A\times A\times A\times A$ such that $a_1+a_2 = a_3+a_4$? These questions have different answers depending on $A$. For example, it is easy to verify that if $A$ is a progression there will be much structure and when $A$ is random and sparse then there will be little structure. Even more questions arise when there are two additive sets involved.

Additive combinatorics has a rich collection of techniques and results obtained from elementary combinatorics, additive geometry, harmonic analysis and graph theory among others. It also has a large set of applications. In fact it has been incorporated for use in computer science in the realm of computational complexity, see surveys~\cite{Lovett14,Viola11}. However, additive combinatorics has seemingly not been applied to the construction of efficient algorithms.

Recently we became interested in 3SUM for monotone sets in $[n]^d$, for constant $d$, because of its importance for a collection of other problems. Monotone sets are sets whose elements can be ordered into a sequence that is increasing/decreasing in every dimension. It seemed that a potential approach for solving it (in better than $\tilde{O}(n^2) time)$ might be to divide the problem into two cases:

\begin{enumerate}
\item
if the input does not have too many "sums", then
   brute force would take subquadratic time;
\item
otherwise, argue that the input must be "specially structured",
   namely, it is close to a linear sequence, in which case some form of FFT perhaps could
   solve the problem efficiently.
\end{enumerate}

This is exactly what additive combinatorics is about. There are several theorems that seem to be potentially useful for the problems we consider, e.g. Freiman's theorem and Rusza theorem. But, it turns out that the Balog-Szemer\'{e}di-Gowers (BSG) theorem seems to exactly accomplish what we need. In fact, we succeed in obtaining an $O(n^{1.859})$ running time randomized algorithm for 3SUM$^+$ for monotone sets in $[n]^d$ and an $O(n^{1.864})$ deterministic algorithm.

One of this result's consequences is that we obtain the same running times for (min,+) convolution for sequences of elements $\in [O(n)]$ that are monotonically increasing. Surprisingly, we prove that this problem is {\em equivalent} to the preprocessing of binary histogram indexing ((a.k.a. binary histogram indexing). The problem is to preprocess a binary string to allow for queries of the form "is there a substring with $i a$'s and $j b$'s?". This has been intensively researched for the last few years with a couple of dozen papers on the topic, see~\cite{ACLL14} for a more detailed background. The best result is based on the new APSP result of Williams~\cite{Williams14} and is $O({n^2\over 2^{\sqrt{n}}})$. Beforehand, it was $O(n^2 \over \log^2n)$~\cite{}. As a result of the equivalence we obtain an $O(n^{1.859})$ time preprocessing algorithm with $O(1)$ query time, i.e. the first truly subquadratic preprocessing time algorithm. All these results appear in detail in Section~\ref{sec:mono}.

\subsection{BSG Outline}

The BSG theorem, fully stated in Section~\ref{sec:bsg}, is the result of the work that appeared in two papers~\cite{BS94,Gowers01}. A later paper by Balog~\cite{Balog07} expands upon the theorem. Terence and Vu~\cite{TV06} and Sudakov, Szemer\'{e}di and Vu~\cite{SSV94} reprove the theorem in simpler format. Viola~\cite{Viola11} and, lately, Lovett~\cite{Lovett14} present the theorem and applications from a TCS perspective, for applications of computational complexity and combinatorics.

There are different variants of the theorem.
For (i), "sums" may be defined in terms of the number of $a,b,c,d$
in a set $S$ with $a+b=c+d$ ("additive quadruples"), or the number of $a,b,c$
in $S$ with $a+b=c$, which are the types of condition necessary.

For (ii), the
condition that we need is: the size of the sumset $A+B$ is close to linear.
This case can be handled directly by FFT.

There is an additional complication: in (ii), the theorem doesn't
say S has the special structure, but rather that we can extract a large
subset S' of S that has this special structure.  So, for our
purposes, we need to remove the external set and iterate when we apply the theorem. We give the detailed theorem and the iteration result in Section~\ref{sec:bsg}.

\subsection{More results}

In Section~\ref{sec:clustered} we show that 3SUM$^+$ for monotone sets in $[n]^d$ can be generalized to the more general 3SUM$^+$ for clustered sets of elements $\in \Z^d$. A clustering of a set is its partition into disjoint hypercubes. The result is dependant on the parameters of the number of hypercubes, the volume of the hypercubes and the upper bound on the number of points in each hypercube. For example, if $A, B$ and $S$ each contain $n$ points $\in \Z^d$ for a constant $d$ then if $A$ is covered by $n^{1-\delta}$ disjoint hypercubes of volume $n$ then we can solve 3SUM$^+$ in $O(n^{2-\delta})$ expected time.

This result has additional truly subquadratic applications for offline batched histogram indexing and the 3SUM$^+$ monotone case, even when only $A$ is monotone.

In Section~\ref{sec:online} we show that the techniques carry over, with additional ideas, to the online setting, i.e. when $S$ is not given in advance, for clustered sets. More applications follow. An application worth noting is online histogram indexing for alphabets of size $d$. Kociumaka et al. proposed an $O(n^{2-\delta})$ space and $O(n^{\delta(2d-1)})$ query time algorithm. However, the preprocessing time was not truly subquadratic. We obtain, with the same space bound, an $O(n^{\delta({d\over 2})})$ query time algorithm with truly subquadratic preprocessing!

Finally, in Section~\ref{sec:preprocessed}  we show how to solve 3SUM in a preprocessed universe. The problem is given $A,B,S \in \Z$ preprocess the sets to answer 3SUM for subsets $A_0 \subset A, B_0 \subset B, S_0 \subset S$ given as a query. This was considered by Bansal and Williams~\cite{BW12} where the question was attributed to Avrim Blum. In~\cite{BW12} a solution was proposed with $O(n^{2+\epsilon})$ preprocessing time, for any $\epsilon>0$, and $O({n^2\over \polylog n})$ query time. We show how to use the BSG theorem to achieve preprocessing time of $\tilde{O}(n^2)$ preprocessing time and $\tilde{O}(n^{2-1/7})$ query time.

\bigskip
To summarize and point out the additional highlights that we obtain:

\bigskip
\noindent
{\bf Highlights}
\begin{enumerate}
\itemsep0em
\item
The use of additive combinatorics for various algorithmic problems.
\item
An $O(n^{1.859})$ time algorithm for {\em monotone 3SUM} $\in [n]^d$ and an $O(n^{2-\epsilon})$ algorithm for its generalization {\em clustered 3SUM} ($\epsilon$ based on the clustering).
\item
A reduction from {\em monotone 3SUM} $\in [n]^d$ to {\em bounded monotone (min,+) convolution}.
\item
An equivalence between {\em binary histogram indexing} and {\em bounded monotone (min,+) convolution}.
\item
The implication of the former is an $O(n^{1.859})$ preprocessing time algorithm for binary histogram indexing (the first truly subquadratic algorithm for this open problem).
\item
Algorithms for higher dimension histogram indexing improving upon~\cite{KRR13} in query time with (first time) truly subquadratic preprocessing time.
\item
An $O(n^{2-1/7})$ algorithm for 3SUM in a preprocessed universe. This is the first truly subquadratic algorithm for the problem (improving upon~\cite{BW12}).
\item
An $O(n^{1+\epsilon})$ time deterministic algorithm for sparse convolutions~\cite{CH02,AKP07}. This improves over the previous best result of $O(n^2)$~\cite{AKP07}.
\end{enumerate}

}

\section{Ingredients: The BSG Theorem/Corollary
and FFT Lemma}\label{sec:bsg}


As noted in the introduction,
the key ingredient behind all of our results is the Balog--Szemer\'edi--Gowers Theorem.  Below,
we state the particular version of the theorem we need, which can be found
in the papers by Balog~\cite{Balog07} and Sudakov et al.~\cite{SSV94}.  A complete proof is redescribed in
\LONG{Sections \ref{sec-graph} and \ref{sec-bsg-proof}}\SHORT{the full paper}.

\begin{theorem} {\bf(BSG Theorem)}\ \ Let $A$ and $B$ be finite subsets of an abelian group, and $G\subseteq A \times B$.
  Suppose that $|A||B|=\Theta(N^2)$, $|\{a+b: (a,b)\in G\}| \le tN$, and $|G| \ge \alpha N^2$.
  Then there exist subsets $A'\subseteq A$ and $B'\subseteq B$ such that

\begin{itemize}
\item[\rm (i)]   $|A' + B'| \:\leq\:  O((1/\alpha)^5 t^3 N)$, and
\item[\rm (ii)]  $|G\cap (A'\times B')| \:\geq\:
\Omega(\alpha |A'||B|) \:\geq\: \Omega(\alpha^2 N^2)$.
\end{itemize}
\end{theorem}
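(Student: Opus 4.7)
The argument splits into a short sumset double-count and a graph-theoretic cleaning step; the latter is where the real work lies. Viewing $G$ as a bipartite graph on $A\cup B$ with edge set $G$, write $S_G := \{a+b : (a,b)\in G\}$, so $|S_G|\le tN$. The starting identity is
\[
a' + b' \;=\; (a'+b)\;-\;(a+b)\;+\;(a+b'),
\]
so each length-$3$ path $a'\sim b\sim a\sim b'$ in $G$ (i.e., a pair $(a,b)\in A\times B$ with $(a',b),(a,b),(a,b')\in G$) represents $a'+b'$ as a signed sum of three elements of $S_G$. If every $(a',b')\in A'\times B'$ is joined by at least $M$ such paths, then each $v\in A'+B'$ admits at least $M$ representations $v=s_1-s_2+s_3$ with $s_i\in S_G$ (pick any $(a',b')$ realising $v$ and expand each of its paths); summing over $v$ and comparing with $|S_G^3|$ yields $|A'+B'|\le |S_G|^3/M \le t^3N^3/M$. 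Choosing $M = \Omega(\alpha^5 N^2)$ then delivers~(i).

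The heart of the proof is therefore the following graph lemma: there exist $A'\subseteq A$ and $B'\subseteq B$ of size $\Omega(\alpha N)$ such that every $(a',b')\in A'\times B'$ is joined by $\Omega(\alpha^5 N^2)$ length-$3$ paths in $G$. I would prove it by dependent random choice. First, restrict attention to the heavy set $A^* := \{a\in A : \deg_G(a)\ge (\alpha/2)|B|\}$; a standard averaging argument shows $A^*$ absorbs at least half of the edges of $G$. Next, sample a random $b_0\in B$ (weighted by degree) and set $A'' := N_G(b_0)\cap A^*$; with constant probability $|A''|=\Omega(\alpha N)$. Declare $a',a''\in A''$ to be \emph{linked} when $|N_G(a')\cap N_G(a'')|\ge \Omega(\alpha^2 N)$; a Cauchy--Schwarz count on the $2$-paths of $G$ through $B$ shows that only an $O(\alpha^2)$-fraction of pairs in $A''\times A''$ are unlinked, so a Markov-style cleaning step extracts $A'\subseteq A''$ of size $\Omega(\alpha N)$ in which every vertex is linked to a constant fraction of the rest. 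Finally, let $B' := \{b'\in B : |N_G(b')\cap A'|\ge \Omega(\alpha|A'|)\}$; since every $a'\in A'$ is heavy, a second averaging forces $|B'|\ge \Omega(\alpha N)$. For any $(a',b')\in A'\times B'$, each of the $\Omega(\alpha N)$ neighbours $a''\in N_G(b')\cap A'$ that is linked to $a'$ contributes $\Omega(\alpha^2 N)$ length-$3$ paths $a'\sim b\sim a''\sim b'$ through the common neighbours of $a'$ and $a''$; summing over $a''$ and tracking the sampling losses gives $\Omega(\alpha^5 N^2)$ paths total.

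The main obstacle is balancing the various $\alpha$-exponents through the sampling and cleaning steps so that both subset sizes remain $\Omega(\alpha N)$ while the path count lands at $\Omega(\alpha^5 N^2)$; slack anywhere degrades the final exponent in $(1/\alpha)$. Once the graph lemma is in hand, conclusion~(ii) is immediate: by construction of $B'$, every $a'\in A'$ sends $\Omega(\alpha |B|)$ edges of $G$ into $B'$, so $|G\cap (A'\times B')|\ge \Omega(\alpha|A'||B|) \ge \Omega(\alpha^2 N^2)$.
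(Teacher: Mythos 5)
Your proposal follows essentially the same route as the paper's proof: a dependent-random-choice ``graph lemma'' producing $A',B'$ with $\Omega(\alpha^5 N^2)$ length-$3$ paths between every pair, followed by the double-count based on the identity $a'+b'=(a'+b)-(a+b)+(a+b')$. Two points deserve a closer look.

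First, your intermediate exponents do not compose to $\alpha^5N^2$. With a ``linked'' threshold of $\Omega(\alpha^2 N)$ on $\cdeg_G(a',a'')$, the expected number of unlinked pairs landing in $A''\times A''$ after sampling $b_0$ is $\le |A|^2\cdot\Theta(\alpha^2 N)/|B|=\Theta(\alpha^2 N^2)$, which is already comparable to $|A''|^2=\Theta(\alpha^2 N^2)$; so the unlinked \emph{fraction} is only $O(1)$, not $O(\alpha^2)$, and Markov cleaning then gives no useful bound on the bad-degree of vertices in $A'$. The threshold must be $\Theta(\alpha^3 N)$ (as in the paper's $\alpha^3|B|/2048$), which makes the expected number of bad pairs $O(\alpha^3 |A|^2)$, the bad fraction $O(\alpha)$, and the per-vertex bad-degree after cleaning $O(\alpha^2|A|)$. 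Combined with the fact that each $b'\in B'$ has $\ge\Omega(\alpha|A'|)=\Omega(\alpha^2 N)$ (not $\Omega(\alpha N)$) neighbours in $A'$, the count is $\Omega(\alpha^2 N)\cdot\Omega(\alpha^3 N)=\Omega(\alpha^5 N^2)$, as required. So your final figure is right, but the route you sketched to it is not.

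Second, in the double-count you need the representations to be \emph{distinct} across $v\in A'+B'$ before you may conclude $M\,|A'+B'|\le |S_G|^3$. Fixing, for each $v$, a canonical $(a',b')$ with $a'+b'=v$, the triple $(a'+b,\,a+b,\,a+b')$ determines $v=(a'+b)-(a+b)+(a+b')$, hence $(a',b')$, hence $a=(a+b')-b'$ and $b=(a'+b)-a'$, so distinct $(v,\text{path})$ yield distinct triples. You gesture at this by expanding the paths, but the injectivity is the substance of the step and should be stated.

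Finally, your derivation of (ii) is phrased as if every $a'\in A'$ individually retains $\Omega(\alpha|B|)$ edges into $B'$; what the construction actually gives is a bound on the \emph{total}: every $a'\in A'$ has $\ge\alpha|B|/2$ edges into $B$, and the edges lost to $B\setminus B'$ number at most $|B|\cdot\alpha|A'|/4$, giving $|G\cap(A'\times B')|\ge\alpha|A'||B|/4$. The conclusion is correct; the per-vertex claim is not what you proved.
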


The main case to keep in mind is when $|A|=|B|=N$ and $t=1$,
which is sufficient for many of our applications, although the
more general ``asymmetric'' setting does arise in at least two of the
applications.

In some versions of the BSG Theorem,
$A=B$ (or $A=-B$) and we further insist that $A'=B'$ (or $A'=-B'$); there, the $\alpha$-dependencies are a bit worse.

In some simpler versions of the BSG Theorem that appeared in many papers (including the version mentioned in the introduction), we are not given $G$,
but rather a set $S$ of size $tN$ with $|\{(a,b):a+b\in S\}|\ge
\alpha N^2$; in other words, we are considering the special case $G=\{(a,b):a+b\in S\}$.  Condition (ii) is replaced by
$|A'|,|B'|\ge\Omega(\alpha N)$.
For our applications, it is crucial to consider the version with
a general $G$.  This is because of the need to apply the
theorem iteratively.

If we apply the theorem iteratively, starting with
$G = \{(a,b): a+b\in S\}$ for a given set $S$, and repeatedly removing
$A' \times B'$ from $G$, we obtain the following corollary, which
is the combinatorial result we will actually use in all our applications
(and which, to our knowledge, has not been stated explicitly before):

\begin{corollary}~\label{cor-BSG} {\bf(BSG Corollary)}\ \
Let $A,B,S$ be finite subsets of an abelian group.
  Suppose that $|A||B|=O(N^2)$ and $|S| \le tN$.  For any $\alpha<1$, there exist subsets
  $A_1,\ldots,A_k\subseteq A$ and $B_1,\ldots,B_k\subseteq B$ such that

\begin{enumerate}
\item[\rm (i)]
the \emph{remainder set} $R = \{(a,b)\in A\times B: a+b\in S\}
\setminus \bigcup_{i=1}^k (A_i \times B_i)$ has
size at most $\alpha N^2$,
\item[\rm (ii)]
  $|A_i + B_i| \:\le\: O((1/\alpha)^5 t^3 N)$ for each $i=1,\ldots,k$, and
\item[\rm (iii)] $k = O(1/\alpha)$.
\end{enumerate}
\end{corollary}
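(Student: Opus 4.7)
The plan is to apply the BSG Theorem iteratively, peeling off product rectangles $A_i \times B_i$ one at a time until the set of still-uncovered ``good pairs'' is small. I would initialize $G := \{(a,b) \in A \times B : a+b \in S\}$ and run the loop: if $|G| \le \alpha N^2$, halt and declare $R := G$; otherwise invoke the BSG Theorem on the current $(A, B, G)$ to obtain the next pair $(A_i, B_i)$, then update $G := G \setminus (A_i \times B_i)$ and continue.

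The hypotheses of the BSG Theorem are preserved throughout the iteration: $|A||B| = O(N^2)$ is fixed from the start, and the sumset of the current $G$ remains contained in $S$, hence of size $\le tN$. As long as $|G| > \alpha N^2$, the density hypothesis holds as well, so the theorem yields $A_i \subseteq A$ and $B_i \subseteq B$ with $|A_i + B_i| \le O((1/\alpha)^5 t^3 N)$, which is exactly condition (ii) of the corollary, together with $|G \cap (A_i \times B_i)| \ge \Omega(\alpha^2 N^2)$. Condition (i) is then automatic from the termination rule, since the loop halts only when $|G| \le \alpha N^2$.

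The remaining work is bounding the number of iterations $k$ to match condition (iii). The naive count, using only $|G \cap (A_i \times B_i)| \ge \Omega(\alpha^2 N^2)$ and $|G_0| \le |A||B| = O(N^2)$, yields only $k = O(1/\alpha^2)$. To obtain the sharper $k = O(1/\alpha)$, I would instead run the loop adaptively, applying the BSG Theorem at iteration $i$ with parameter $\alpha_i := |G_i|/N^2 \ge \alpha$ in place of $\alpha$. Since $1/\alpha_i \le 1/\alpha$, the sumset bound in (ii) degrades only up to the quantity already asserted. The payoff is that the intersection guarantee becomes $\Omega(\alpha_i^2 N^2) = \Omega(|G_i|^2/N^2)$, so $|G_{i+1}| \le (1-c)|G_i|$ for some constant $c > 0$; this geometric decay yields $k = O(\log(1/\alpha)) = O(1/\alpha)$.

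The main obstacle is precisely this termination count: conditions (i) and (ii) fall out directly from each BSG invocation, but recovering the claimed $k = O(1/\alpha)$ rather than the loose $O(1/\alpha^2)$ requires the adaptive choice of $\alpha_i$ (or an equivalent refined accounting that exploits the finer $\Omega(\alpha |A_i||B|)$ form of condition (ii) in the theorem). Everything else is routine bookkeeping over the sequence of peeled rectangles.
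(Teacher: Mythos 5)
Your overall strategy is exactly the paper's: initialize $G$ with the set of good pairs, iterate the BSG Theorem on the current $G$ with the adaptive parameter $\alpha_i := |G_i|/N^2$, peel off $A_i\times B_i$, and halt once $|G|\le\alpha N^2$. Conditions (i) and (ii) are handled correctly, including the observation that the sumset of $G_i$ stays inside $S$ so the $t$-bound persists, and that $1/\alpha_i\le 1/\alpha$ keeps (ii) within the stated bound.

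The gap is in the iteration count. From $|G\cap(A_i\times B_i)|\ge\Omega(\alpha_i^2 N^2)=\Omega(|G_i|^2/N^2)$ you get
\[
|G_{i+1}|\;\le\;|G_i|-\Omega\!\left(\frac{|G_i|^2}{N^2}\right)\;=\;|G_i|\left(1-\Omega\!\left(\frac{|G_i|}{N^2}\right)\right),
\]
and the multiplicative factor here is $1-\Omega(|G_i|/N^2)$, \emph{not} $1-c$ for a fixed constant $c$. As the iteration proceeds, $|G_i|/N^2$ shrinks toward $\alpha$, so the per-step removal fraction degrades; the decay is not geometric, and the claimed $k=O(\log(1/\alpha))$ does not follow (and in fact is not what the corollary asserts). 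The correct accounting is to pass to reciprocals: the recursion above gives
\[
\frac{N^2}{|G_{i+1}|}\;\ge\;\frac{N^2}{|G_i|}\left(1+\Omega\!\left(\frac{|G_i|}{N^2}\right)\right)\;=\;\frac{N^2}{|G_i|}+\Omega(1),
\]
so $N^2/|G_k|\ge\Omega(k)$, and since the loop did not halt at step $k$ we have $|G_k|>\alpha N^2$, whence $k\le O(1/\alpha)$. You should replace your geometric-decay sentence with this telescoping-of-reciprocals argument. (A second, cosmetic point: the paper first normalizes so that $|A||B|=\Theta(N^2)$ by rescaling $N,t,\alpha$; without some such normalization the density hypothesis $|G|\ge\alpha_i |A||B|$ you feed the BSG Theorem is only correct up to the implicit constant in $|A||B|=O(N^2)$.)
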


A naive argument gives only $k=O((1/\alpha)^2)$, as each
iteration removes $\Omega(\alpha^2 |A||B|)$ edges from $G$,
but a slightly more refined analysis, given
in \LONG{Section~\ref{sec-bsg-corollary-proof}}\SHORT{the full paper}, lowers
the bound to $k=O(1/\alpha)$.

None of the previous papers on the BSG Theorem addresses the running
time of the construction, which will of course be
important for our
algorithmic applications.  A polynomial time bound can be easily
seen from most known proofs of the BSG Theorem, and
is already sufficient to yield some nontrivial result
for bounded monotone (min,+)-convolution and
binary histogram indexing.  However, a quadratic time bound is necessary to get
nontrivial results for other applications such as histogram
indexing for larger alphabet sizes.
In \LONG{Sections \ref{sec-graph-time} and \ref{sec-bsg-corollary-time}}\SHORT{the full paper, using a number of additional
ideas (e.g., sampling tricks for sublinear algorithms)}, we
show that the construction in the BSG Theorem/Corollary
can indeed be done in near quadratic time
with randomization, or in matrix multiplication time deterministically.


\newcommand{\MM}{{\cal M}}

\begin{theorem}~\label{runtime-corollary}
In the BSG Corollary, the subsets $A_1,\ldots,A_k,B_1,\ldots,B_k$, the remainder set $R$,
and all the sumsets $A_i+B_i$ can be constructed by
\begin{itemize}
\item[\rm (i)] a deterministic algorithm in time $O((1/\alpha)^{0.4651}N^{2.3729})$,
or more precisely, $O((1/\alpha)\MM(\alpha|A|,|A|,|B|))$,
where $\MM(n_1,n_2,n_3)$ is the complexity of multiplying
an $n_1\times n_2$ and an $n_2\times n_3$ matrix, or
\item[\rm (ii)] a randomized Las Vegas algorithm in expected
time $\OO(N^2)$ for $t\ge 1$, or $\OO(N^2 + (1/\alpha)^5|A|)$  otherwise.
\end{itemize}
\end{theorem}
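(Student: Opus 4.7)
The plan is to implement the standard graph-theoretic BSG Theorem algorithmically and iterate it $k=O(1/\alpha)$ times, exactly as the combinatorial proof of Corollary~\ref{cor-BSG} already prescribes. I would maintain the current edge set $G\subseteq A\times B$ explicitly (initialized to $\{(a,b):a+b\in S\}$ via a hash table on $S$), invoke a BSG subroutine that returns $A_i,B_i$ and the sumset $A_i+B_i$, remove the pairs $A_i\times B_i$ from $G$, and repeat. Since $k=O(1/\alpha)$ rounds suffice by the Corollary's combinatorial analysis, both statements reduce to bounding the cost of a single round.

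For part (i), I would represent $G$ by its $|A|\times|B|$ adjacency matrix $M$. The BSG proof locates $A'$ and $B'$ by inspecting pairs of vertices of $A$ whose $G$-neighborhoods in $B$ share many elements, information read directly off $MM^{T}$. Following the proof, only $O(\alpha|A|)$ pivot rows need be considered, so I would compute the rectangular product of this pivot submatrix with $M^{T}$ at cost $\MM(\alpha|A|,|B|,|A|)=\MM(\alpha|A|,|A|,|B|)$. The sumset $A_i+B_i$, of size $O((1/\alpha)^5 t^3 N)$, can then be produced by one FFT convolution whose cost is dominated by the matrix product; multiplying by $k$ rounds and plugging in current rectangular matrix-multiplication exponents yields the numerical $O((1/\alpha)^{0.4651}N^{2.3729})$ bound.

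For part (ii), I would replace the matrix product by Monte-Carlo sampling: for any two candidate vertices in $A$, a random sample of $\polylog(N)/\alpha^{O(1)}$ elements of $B$ combined with $O(1)$-time hash-table queries into $S$ gives a sufficiently accurate estimate of their common-neighborhood size w.h.p. The per-round cost drops to $\OO(\alpha N^2)$, hence $\OO(N^2)$ across all rounds in the main case $t\ge 1$. When $t<1$ the graph $G$ has subquadratic size $O(tN^2)$, so the dominant cost shifts to the $k$ sparse sumset computations; using a hashing-based sparse FFT, each $A_i+B_i$ is built in time proportional to its output size, producing the stated extra $\OO((1/\alpha)^5|A|)$ term.

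I expect the main obstacle to be keeping the round count at exactly $O(1/\alpha)$ once the BSG subroutine is implemented with inexact, sampling-based estimates. I would address this by inflating the internal threshold $\alpha$ by a constant factor so that the approximate decisions still certify the exact BSG Theorem guarantees w.h.p., and by verifying each round's output $|A_i+B_i|\le O((1/\alpha)^5 t^3 N)$ \emph{after} the FFT call; any round whose certificate fails is repeated, turning the procedure into a Las Vegas algorithm with the claimed expected time.
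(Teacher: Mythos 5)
Your high-level plan (implement the Graph-Lemma-based BSG proof, iterate $k=O(1/\alpha)$ times, remove $A_i\times B_i$ each round) matches the paper, but the cost accounting inside a round is where your argument breaks.

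For part (i), you claim each round only needs a rectangular product with an $O(\alpha|A|)$-row ``pivot submatrix,'' giving $\MM(\alpha|A|,|A|,|B|)$ per round. The Graph Lemma does not support this: the deterministic implementation must try every $b^*\in B$ and, to score each candidate, needs $\cdeg_G(a,a')$ for all pairs $a,a'\in A_0$, where $|A_0|$ can be $\Theta(|A|)$. So a fresh computation per round costs $\MM(|A|,|A|,|B|)$, and $O(1/\alpha)$ rounds would give only the weaker $O((1/\alpha)\MM(|A|,|A|,|B|))$. The paper instead computes the full products $X_1Y_1$ (for $\cdeg$) and $X_2Y_2$ once, then performs \emph{dynamic updates} across rounds: deleting $A_i\times B_i$ changes only $|A_i|$ rows/columns of the relevant matrices, so the update cost is $\MM(|A_i|+z_i,|A|,|B|)$. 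The amortized bound then comes from the fact (which you never prove or invoke) that $\sum_{i=1}^k|A_i|=O(|A|\log(1/\alpha))$; this is what converts a naive per-round $\times\,k$ estimate into $O((1/\alpha)\MM(\alpha|A|,|A|,|B|))$.

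The same omission hurts part (ii). The paper's sampling-based Monte Carlo Graph Lemma costs $\OO((1/\alpha)^5|A'|+(1/\alpha)|B|+(1/\alpha)^6)$ per round, not $\OO(\alpha N^2)$, and summing over $O(1/\alpha)$ rounds would give a spurious extra $1/\alpha$ factor on the $|A'|$ term without $\sum|A_i|=\OO(|A|)$. One also needs the reductions to the assumptions $|A|\ge(1/\alpha)^2$ and (when $t\ge 1$) $N\ge(1/\alpha)^5 t^3$ to drop the lower-order $(1/\alpha)^2|B|$ and $(1/\alpha)^7$ terms; your explanation (``$G$ has subquadratic size $O(tN^2)$ when $t<1$, so the cost shifts to the sparse sumset FFTs'') is not the source of the $\OO((1/\alpha)^5|A|)$ term---that term is the aggregated sampling cost of line~7 of the Graph Lemma algorithm. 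Your closing paragraph about threshold inflation and Las-Vegas re-verification is fine in spirit, but the core quantitative argument is missing two load-bearing ingredients: the dynamic matrix-product maintenance and the $\sum_i|A_i|=\OO(|A|)$ amortization.
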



We need one more ingredient.  The BSG Theorem/Corollary produces
subsets that have small sumsets.  The following lemma shows
that if the sumset is small, we can compute
the sumset efficiently:

\begin{lemma} {\bf(FFT Lemma)}\ \ Given sets $A,B\subseteq[U]^d$
of size $O(N)$ for a constant $d$ with $|A+B|\le O(N)$, and
given a set $T$ of size $O(N)$
which is known to be a superset of $A+B$, we can compute $A+B$ by
\begin{itemize}
\item[\rm (i)] a randomized Las Vegas algorithm in $\OO(N)$ expected time, or
\item[\rm (ii)] a deterministic algorithm that runs in $\OO(N)$
time after preprocessing $T$ in $\OO(N^{1+\eps})$ time for
an arbitrarily small constant $\eps>0$.
\end{itemize}
\end{lemma}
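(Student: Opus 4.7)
The plan is to first reduce the $d$-dimensional problem to one dimension via the embedding $(x_1,\ldots,x_d)\mapsto\sum_{i=1}^d x_i(2U)^{i-1}$, which is injective on $[U]^d$ and additive because the coordinates of $a+b$ stay in $[2U]$, so no carries occur. After this reduction, $A,B,T\subseteq[U']$ with $U'=(2U)^d=\text{poly}(U)$, and the task is to output $(A+B)\cap T$ in $\OO(N)$ time using the hypothesis $|A+B|=O(N)$.

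For the randomized bound (i), my plan is a hashing-by-modular-FFT scheme with independent random primes. Let $A(x)=\sum_{a\in A}x^a$ and $B(x)=\sum_{b\in B}x^b$. Pick $k=O(\log N)$ independent uniformly random primes $p_1,\ldots,p_k$ of size $\Theta(N\log^2 U)$, compute $Q_i(x)=A(x)B(x)\bmod(x^{p_i}-1)$ by FFT in $\OO(N)$ time each, and declare $t\in T$ a candidate iff $Q_i[t\bmod p_i]>0$ for every $i$. Every element of $A+B$ is a candidate. A spurious $t\in T\setminus(A+B)$ survives prime $p_i$ only if some $t'\in A+B$ satisfies $t'\equiv t\pmod{p_i}$; since $|t-t'|\le U'$ has at most $O(\log U)$ prime divisors, a random prime of the chosen size collides with probability $O(\log U\log p/p)=O(1/N)$, so independence across the $k$ primes plus a union bound over $O(N^2)$ pairs $(t,t')$ gives zero spurious candidates w.h.p. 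For the Las Vegas upgrade, verify the output $S'$ by one additional round of FFTs on a fresh set of primes whose CRT product exceeds $U'$ (allowing exact recovery of each multiplicity $m(s)$) and check $\sum_{s\in S'}m(s)=|A|\cdot|B|$; restart on failure. Since each round succeeds with constant probability, the expected total time is $\OO(N)$.

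For the deterministic bound (ii), the obstacle is derandomizing the primes while keeping the query time at $\OO(N)$. The plan is to use the $\OO(N^{1+\eps})$ preprocessing budget on $T$ to find an explicit constant-sized family of primes $p_1,\ldots,p_k$ of size $\OO(N)$ that jointly makes the elements of $T$ distinct under CRT: enumerate $\OO(N^\eps)$ candidate primes in a range around $\Theta(N)$, test each candidate's separation properties on $T$ by sorting residues in $\OO(N)$ time, and greedily accumulate primes until CRT-injectivity on $T$ is achieved; a PNT-based counting argument ensures this succeeds within the budget. Also precompute a perfect hash of $T$ keyed by the resulting residue tuple. At query time, perform FFT modulo each $p_i$ in $\OO(N)$ total, flag each $t\in T$ whose coefficients are all positive, and verify each flagged $t$ deterministically via the preprocessed CRT hash. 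The hardest step is this verification, because a positive coefficient modulo each $p_i$ does not by itself certify $t\in A+B$---different witness pairs $(a,b)$ could account for different primes---so the preprocessed structure must be designed to enable exact CRT matching of integer contributors within the $\OO(N)$ query budget; the $N^\eps$ slack in preprocessing is precisely what allows a rich enough family of primes to make this matching unambiguous.
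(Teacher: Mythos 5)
Your plan shares the paper's skeleton (reduce to $1$D, hash modulo a small set of primes, one FFT per prime, decide each $t\in T$ from its residue bucket), but both halves have real gaps. In the randomized half, the union bound over $O(N^2)$ pairs $(t,t')$ controls the wrong event: it shows w.h.p.\ that every \emph{fixed} pair is separated by some prime, but a spurious $t$ survives prime $p_i$ whenever \emph{any} $t'\in A+B$ (possibly a different $t'$ for each $i$) collides with $t$ modulo $p_i$, and ``$\forall i\,\exists t'_i$'' is strictly weaker than ``$\exists t'\,\forall i$''. The correct argument, which is the paper's, fixes $t$ and unions over colliders \emph{for a single prime}: with primes from $[cN\log^2U]$ and $c$ large, the probability that some $y\in T$ collides with $t$ under one random prime is at most $1/2$; independence over $k=\log N+1$ primes then gives $2^{-k}$, and a final union over $t\in T$ gives overall failure $\le 1/2$. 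Your Las Vegas check is also broken: $\sum_{s\in S'}m(s)=|A||B|$ holds identically whenever $S'\supseteq A+B$ (spurious $s$ contribute $m(s)=0$), so it catches nothing; and ``exact CRT recovery of $m(s)$'' on the indices is circular, since the folded coefficient $Q_i[s\bmod p_i]$ equals $m(s)$ only when $s$ is the unique element of $A+B$ in its residue class---exactly the isolation property one is trying to certify. The paper instead verifies in $\OO(N)$ time that the chosen primes form a \emph{pseudo-perfect} family for $T$ (every $x\in T$ lands alone in its residue bucket under at least one prime), retrying if not; once this holds, the membership test is deterministically correct.

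In the deterministic half you correctly flag the obstacle but do not resolve it, and CRT-injectivity of the full residue tuple on $T$ is not the property that makes the FFT test sound: it rules out a single $s$ mimicking $t$ across all coordinates simultaneously, but not different witnesses for different primes, which is precisely the ambiguity you identify. What the paper builds is again a pseudo-perfect family, consisting of $O(2^\ell\log N)$ \emph{single} pseudo-additive hash functions $h_{p_1,\ldots,p_\ell}(x)=(x\bmod p_1,\ldots,x\bmod p_\ell)$ into $[\OO(N^{1/\ell})]^\ell\cong[\OO(N)]$, with the property that each $x\in T$ is collision-free under at least one of them; a single FFT per compound hash then decides each $x$ unambiguously, with no need to combine evidence across hashes. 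The primes are chosen greedily in $\ell$ rounds, scanning only $\OO(N^{1/\ell})$ candidates per round, maintaining the invariant that after round $i$ at least a $2^{-i}$ fraction of the surviving $T$ has fewer than $N^{1-i/\ell}$ collisions under the current prefix $(p_1,\ldots,p_i)$; after $\ell$ rounds a $2^{-\ell}$ fraction is collision-free and retired. This per-element perfect-hash selection---rather than a global CRT-injective code---is the missing idea, and it yields $\OO(N^{1+1/\ell})=\OO(N^{1+\eps})$ preprocessing for any constant $\eps>0$.
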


As the name indicates, the proof of the lemma uses fast Fourier transform. The randomized version
was proved by Cole and Hariharan~\cite{CH02}, who
actually obtained a more general result where the superset $T$
need not be given: they addressed the problem
of computing the (classical) convolution of two sparse vectors and
presented a Las Vegas algorithm that runs in time sensitive to the
number of nonzero entries in the output vector; computing
the sumset $A+B$ can be viewed as an instance of the
sparse convolution problem and can be solved by their algorithm
in $O(|A+B|\log^2 N)$ expected time.
Amir et al.~\cite{AKP07} have given a derandomization technique
for a related problem (sparse wilcard matching), which can also
produce a deterministic algorithm for computing $A+B$
in the setting when $T$ is given and has been preprocessed, but the preprocessing of $T$ requires $\OO(N^2)$ time.

In \LONG{Section~\ref{sec-fft}}\SHORT{the full paper},
we give self-contained proofs of both the randomized
and deterministic versions of the FFT Lemma.  For the randomized
version, we do not need the extra complications of Cole and
Hariharan's algorithm, since $T$ is given in our applications.
For the deterministic version, we significantly reduce
Amir et al.'s preprocessing cost to $\OO(N^{1+\eps})$, which is
of independent interest.

\IGNORE{
\noindent
{\bf Proof outline:}

First we can replace $Z\times Z$ with $Z$, by mapping
  $(a_1,a_2)$ to $Ma_1+a_2$ for a sufficiently large $M$.

  Use a small number of "nearly additive" hash functions
  $h_j:Z \rightarrow [O(N \polylog N)]$
  (e.g., $h_j(a) = a \mod p_j$ for some random prime $p_j$, or
  the hash function from the Baran-Demaine-Patrascu 3SUM paper). NEEDS MORE WORK.

  For each $j$, multiply the polynomials
  $\sum_{a \in A} x^{h_j(a)}$ and $\sum_{b \in B} x^{h_j(b)}$ with a sparse convolution.

  A sparse convolution is a convolution whose input has a high number of zero values. Hence, the input is represented as a list of the non-zero values, and its size $O(n)$ = the number of non-zero values. The size of the (non-zero) output is denoted by $k$. Cole and Hariharan~\cite{CH02} claimed a randomized algorithm that runs in time $\tilde{O}(k)$ yielding the desired.

  In Section~\ref{determinstic-algo} we show our claim for the deterministic case. We point out that a weaker deterministic result that runs in $O(n^2)$ time~\cite{AKP07} existed previously. AMIR-PORAT DO NOT REALLY NEED T, IS IT COMPARABLE?
  \qed

}

\section{3SUM$^+$ for Monotone Sets in $[n]^d$}\label{sec:mono}

We say that a set in $\Z^d$ is \emph{monotone (increasing/decreasing)}
if it can be written as $\{a_1,\ldots,a_n\}$ where
the $j$-th coordinates of
$a_1,\ldots,a_n$ form a monotone (increasing/decreasing) sequence
for each $j=1,\ldots,d$.
Note that a monotone set in $[n]^d$ can have size at most $dn$.

\subsection{The Main Algorithm}

\begin{theorem}\label{thm-monotone}
Given monotone sets $A,B,S\subseteq [n]^d$ for a constant~$d$,
we can solve 3SUM$^+$ by
\begin{itemize}
\item[\rm (i)] a randomized Las Vegas algorithm in expected time
$\OO(n^{(9+\sqrt{177})/12})=O(n^{1.859})$ for $d=2$,
$\OO(n^{(8+\sqrt{208})/12})=O(n^{1.869})$ for $d=3$,
or more generally,
$\OO(n^{(11-d+\sqrt{(d-11)^2+48d})/12})$ for any~$d$, or
\item[\rm (ii)] a deterministic algorithm in time
$O(n^{1.864})$ for $d=2$,
$O(n^{1.901})$ for $d=3$,
$O(n^{1.930})$ for $d=4$,
$O(n^{1.955})$ for $d=5$,
$O(n^{1.976})$ for $d=6$,
or $O(n^{1.995})$ for $d=7$.
\end{itemize}
\end{theorem}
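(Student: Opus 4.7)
My plan is divide-and-conquer driven by a grid of side $\ell$. Since $A,B,S$ are monotone in $[n]^d$, each visits $N := O(n/\ell)$ nonempty cells of side $\ell$ and places $O(\ell)$ points in each cell. Let $\hat{A},\hat{B},\hat{S}\subseteq\Z^d$ denote the sets of occupied cells (each of size $O(N)$), and let $\hat{S}^{\ast}:=\hat{S}-\{0,1\}^d$, so that any point-pair $(a,b)\in A\times B$ with $a+b\in S$ is witnessed by a cell-pair $(\hat{a},\hat{b})$ with $\hat{a}+\hat{b}\in\hat{S}^{\ast}$ (the extra $\{0,1\}^d$ absorbs intra-cell carry when points are added). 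Note $|\hat{S}^{\ast}|=O(N)$. I apply the BSG Corollary to $(\hat{A},\hat{B},\hat{S}^{\ast})$ with a parameter $\alpha$ to be tuned, obtaining $k=O(1/\alpha)$ subset pairs $(\hat{A}_i,\hat{B}_i)$ with $|\hat{A}_i+\hat{B}_i|=O((1/\alpha)^5 N)$, plus a remainder $\hat{R}$ of at most $\alpha N^2$ cell-pairs.

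For each structured group $i$, let $A_i\subseteq A$ (resp.\ $B_i\subseteq B$) be the points in cells of $\hat{A}_i$ (resp.\ $\hat{B}_i$), and let $T_i$ be all integer points in the cells of $(\hat{A}_i+\hat{B}_i)+\{0,1\}^d$. Then $T_i\supseteq A_i+B_i$ and $|T_i|=O((1/\alpha)^5 N\cdot\ell^d)=O((1/\alpha)^5 n\,\ell^{d-1})$, so the FFT Lemma computes $A_i+B_i$ in $\OO(|T_i|)$ expected time; intersecting with $S$ records this group's contribution to 3SUM$^+$. Summed over $k$ groups this costs $\OO((1/\alpha)^6\, n\,\ell^{d-1})$. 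For each cell-pair $(\hat{a},\hat{b})\in\hat{R}$, I identify the $O(1)$ cells $\hat{s}\in\hat{S}$ that could witness a solution and recurse on $A\cap\CELL(\hat{a})$, $B\cap\CELL(\hat{b})$, $S\cap\CELL(\hat{s})$---each a monotone set in a cube of side $\ell$ with $O(\ell)$ points, i.e.\ a smaller instance of the same problem.

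Adding the $\OO(N^2)=\OO(n^2/\ell^2)$ cost of the randomized BSG construction from Theorem~\ref{runtime-corollary}(ii), the recurrence reads
\[
T(n,d) \;=\; \OO\!\left(\frac{n^2}{\ell^2}\:+\:(1/\alpha)^6\, n\,\ell^{d-1}\right) \;+\; O(\alpha N^2)\cdot T(\ell,d).
\]
With the ansatz $T(n,d)=\OO(n^e)$, setting $\alpha=\ell^{-e}$ equates the first and third terms, and $\ell=n^{1/(6e+d+1)}$ then equates all three, giving the fixed-point condition $e = 2 - 2/(6e+d+1)$, equivalently $6e^2+(d-11)e-2d=0$, whose positive root is $(11-d+\sqrt{(d-11)^2+48d})/12$---precisely the exponent claimed in (i), specializing to $(9+\sqrt{177})/12$ at $d=2$ and $(8+\sqrt{208})/12$ at $d=3$. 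The recursion has only $O(\log\log n)$ levels, so accumulated polylogarithmic factors remain inside $\OO(\cdot)$.

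For (ii), I would substitute the deterministic BSG bound $O((1/\alpha)\MM(\alpha N,N,N))$ from Theorem~\ref{runtime-corollary}(i) for the first term and invoke the deterministic FFT Lemma, paying $\OO(|T_i|^{1+\eps})$ to preprocess each $T_i$; plugging in the best current bounds on rectangular matrix multiplication and rebalancing yields the listed exponents $1.864,\,1.901,\,1.930,\ldots$. The main delicate point, aside from mechanical balancing, is verifying that every recursive sub-instance is genuinely a monotone 3SUM$^+$ instance in a translated $[\ell]^d$ cube (so the inductive hypothesis applies verbatim) and that the cell-offset corrections in $\hat{S}^{\ast}$ and $T_i$ do not corrupt the size accounting; once these are checked, the rest is a straightforward combination of the BSG Corollary with the FFT Lemma.
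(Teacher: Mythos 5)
Your proof is correct and follows essentially the same route as the paper: grid of side $\ell$, BSG Corollary on cell sets with $N=\Theta(n/\ell)$ and $t=O(1)$, FFT Lemma on the structured pieces, recursion on the $O(\alpha N^2)$ remainder pairs, and the same recurrence yielding $6e^2+(d-11)e-2d=0$. The only cosmetic difference is how carry between cells is handled — you enlarge $\hat S$ to $\hat S-\{0,1\}^d$ and the supersets to $(\hat A_i+\hat B_i)+\{0,1\}^d$, whereas the paper instead decomposes $A$ and $B$ into $2^d$ \emph{aligned} translates so that $\CELL(a+b)=\CELL(a)+\CELL(b)$ holds exactly; both cost only a $2^{O(d)}$ constant factor and are interchangeable.
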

\begin{proof}
Divide $[n]^d$ into $O((n/\ell)^d)$
grid cells of side length $\ell$, for some parameter $\ell$ to be set later.
Define $\CELL(p)$ to be a label (in $\Z^d$) of the grid cell
containing the point $p$; more precisely,
$\CELL(x_1,\ldots,x_d) := (\lfloor x_1/\ell\rfloor,\ldots, \lfloor x_d/\ell\rfloor)$.

We assume that all points $(x_1,\ldots,x_d)\in A$
satisfy $x_j\bmod{\ell} < \ell/2$ for every $j=1,\ldots, d$;
when this is true, we say
that $A$ is \emph{aligned}.  This is without loss of generality,
since $A$ can be decomposed into a constant ($2^d$) number of subsets, each of which is a translated copy of an aligned set,
by shifting selected coordinate positions by $\ell/2$.
Similarly, we may assume that $B$ is aligned.
By alignedness, the following property holds: for any $a\in A$ and
$b\in B$, $s=a+b$ implies $\CELL(s)=\CELL(a)+\CELL(b)$.

Our algorithm works as follows:
\begin{description}
\item[Step 0:]
Apply the BSG Corollary to the sets $A^*=\{\CELL(a):a\in A\}$,
$B^*=\{\CELL(b):b\in B\}$, $S^*=\{\CELL(s):s\in S\}$.
This produces subsets $A_1^*,\ldots,A_k^*,B_1^*,\ldots,B_k^*$ and a remainder set $R^*$.

Note that $|A^*|,|B^*|,|S^*| =O(n/\ell)$ by monotonicity of $A,B,S$.
The parameters in the BSG Corollary are thus $N=\Theta(n/\ell)$
and $t=1$.
Hence, this step takes $\OO((n/\ell)^2)$ expected time
by Theorem~\ref{runtime-corollary}.
\item[Step 1:]
For each $(a^*,b^*)\in R^*$,
recursively solve the problem for the sets
$\{a\in A: \CELL(a)=a^*\}$, $\{b\in B: \CELL(b)=b^*\}$,
$\{s\in S: \CELL(s)= a^*+b^*\}$.

Note that
this step creates $|R^*|=O(\alpha(n/\ell)^2)$ recursive calls,
where each set lies in a smaller universe, namely,
a translated copy of $[\ell]^d$.
\item[Step 2:]
For each $i=1,\ldots,k$,
apply the FFT Lemma to generate
$\{a\in A: \CELL(a)\in A_i^*\} + \{b\in B: \CELL(b)\in B_i^*\}$,
which is contained in the
superset $T_i = \{s\in \Z^d: \CELL(s)\in A_i^*+B_i^*\}$.
Report those generated elements that are in $S$.

Note that the size of $A_i^*+B_i^*$ is $O((1/\alpha)^5 n/\ell)$,
and so the size of $T_i$ is $O((1/\alpha)^5 n/\ell\cdot \ell^d)$.
As $k=O(1/\alpha)$, this step takes
$\OO((1/\alpha)^6 n\ell^{d-1})$ expected time.
\end{description}

Correctness is immediate from the BSG Corollary, since
$\{(a^*,b^*)\in A^*\times B^*: a^*+b^*\in S^*\}$ is covered by
$R^*\cup \bigcup_{i=1}^k (A_i^*\times B_i^*)$.

The expected running time is characterized by the following interesting recurrence:
\[ T(n) \:\le\: \OO((n/\ell)^2) \:+\: O(\alpha (n/\ell)^2)\, T(\ell) \:+\:
                 \OO((1/\alpha)^6 n\ell^{d-1}).
\]
Note that the reduction to the aligned case increases only
the hidden constant factors in the three terms.
We can see that this recurrence leads to truly subquadratic
running time for any constant $d$---even if we use the trivial upper bound $T(\ell)=O(\ell^2)$
(i.e., don't use recursion)---by setting $\ell$ and $1/\alpha$ to be some sufficiently small powers of $n$.

For example, for $d=2$, we can set $\ell=n^{0.0707}$ and $1/\alpha=n^{0.1313}$ and obtain
\[ T(n) \:\le\: O(n^{1.8586}) \:+\: O(n^{1.7273})\, T(n^{0.0707}),
\]
which solves to $O(n^{1.859})$.

More precisely, the recurrence solves to $T(n)=\OO(n^z)$
by setting $\ell=n^x$ and $1/\alpha=n^y$ for $x,y,z$ satisfying
the system of equations
$z=2(1-x)=-y+2(1-x)+xz=6y+(1-x)+dx$.  One can check that the
solution for $z$ indeed obeys the quadratic equation
$6z^2+(d-11)z-2d=0$.

\IGNORE{
\begin{theorem}
Alternatively, we can solve the problem in Theorem~\ref{thm-monotone} in
$O(n^{z+o(1)})$ deterministic time where
$z$ is the larger root of
$(6-\mu)z^2 + (d(1+\mu) -13 + \omega + 2\mu)z - d(\omega + 2\mu)=0$.
Here, $\mu=(3-\rho-\omega)/(1-\rho)$, and
$\omega$ and $\rho$ are
the square and rectangular matrix multiplication exponents.
\end{theorem}
\begin{proof}
}

Alternatively, the deterministic version of the algorithm
has running time
given by the recurrence
\[ T(n) \:\le\: O((1/\alpha)^\mu (n/\ell)^\nu) \:+\: O(\alpha (n/\ell)^2)\, T(\ell) \:+\:
                 O((1/\alpha)^6 n^{1+\eps}\ell^{d-1}),
\]
with $\mu=0.4651$ and $\nu=2.3729$,
which can be solved in a similar way.
The quadratic equation now becomes
$(6-\mu)z^2 + ((1+\mu)d -13 + \nu + 2\mu)z - (\nu + 2\mu)d=0$.
\end{proof}

\IGNORE{
For example, for $d=2$, the randomized time bound is
$\OO(n^{(9+\sqrt{177})/12})=O(n^{1.859})$ (attained
by setting $\ell \approx n^{0.0707}$ and $1/\alpha\approx n^{0.1313}$)
and the
deterministic time bound is $O(n^{1.864})$ (attained
by setting $\ell \approx n^{0.2353}$ and $1/\alpha \approx n^{0.1046}$).
}
As $d$ gets
large, the exponent in the randomized bound is
$2-2/(d+13) - \Theta(1/d^3)$; however, the deterministic bound
is subquadratic only for $d\le 7$ using
the current matrix multiplication exponents
(if $\omega=2$, then we would have subquadratic deterministic
time for all $d$).
In the remaining applications, we will mostly emphasize randomized bounds for the sake of simplicity.

\IGNORE{

w := 2.3728639;
r := 0.30298;
mu := (3-r-w)/(1-r);
d := 2;
solve({z = mu*y+w*t, z = -y+2*t+(1-t)*z, z = 6*y + t + d*(1-t)});

w := 2.3728639;
r := 0.30298;
mu := (3-r-w)/(1-r);
z := proc(d)  a := 6-mu; b := d*(1+mu)-13+w+2*mu; c := -d*(w+2*mu);
              (-b + sqrt(b^2 - 4*a*c))/(2*a); end;
> z(2);
                           1.863158551

> z(3);
                           1.900083074

> z(4);
                           1.929989472

> z(5);
                           1.954841478

> z(6);
                           1.975898027

> z(7);
                           1.994013663

> z(8);
                           2.009793911

}

\subsection{Application to the 2D Connected Monotone Case,
Bounded Monotone (min,+) Convolution, and
Binary Histogram Indexing}\label{sec:mono:appl}

\SHORT{
As noted in the full paper,
bounded monotone (min,+) convolution reduces to
3SUM$^+$ for 2D monotone sets, and is equivalent to
(min,+) convolution in the bounded differences case and
to binary histogram indexing.  Thus, all these
problems can now be solved in $O(n^{1.859})$ expected time (or
$O(n^{1.864})$ deterministic time).
}
\LONG{
We say that a set in $\Z^d$ is \emph {connected}
if every two points in the set are connected by a path
using only vertices from the set and edges of unit $L_1$-length.
In the case of connected monotone sets $A,B$ in 2D, we show how
to compute a complete representation of $A+B$.

\begin{corollary}\label{cor-monotone-2d}
Given connected monotone increasing sets $A,B\subseteq [n]^2$,
we can compute the boundary of $A+B$, a region bounded by
two monotone increasing sets,
in $O(n^{1.859})$ expected time (or
$O(n^{1.864})$ deterministic time).
\end{corollary}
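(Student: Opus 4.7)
The plan is to reduce the computation of the two boundary staircases of $A+B$ to two instances of bounded monotone (min,+) convolution, and then apply Theorem~\ref{thm-monotone} via the reduction sketched in the introduction (bounded monotone (min,+) convolution to 3SUM$^+$ for 2D monotone sets in $[O(n)]^2$).

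Because $A\subseteq[n]^2$ is a connected monotone increasing set, for every integer $x$-coordinate $i$ in its $x$-projection the vertical slice $\{j:(i,j)\in A\}$ is a contiguous integer interval $[A^{\min}(i),A^{\max}(i)]$, and both $A^{\min}$ and $A^{\max}$ are monotone increasing sequences of length $O(n)$ with values in $[n]$; the same holds for $B$. These envelopes can be extracted in linear time. Since the inner maximizations over the $y$-coordinates of $A$ and $B$ decouple once the $x$-coordinates are fixed, the upper boundary of $A+B$ at $x$-coordinate $k$ satisfies
\[
\max\{y:(k,y)\in A+B\} \;=\; \max_{i+i'=k}\bigl(A^{\max}(i)+B^{\max}(i')\bigr),
\]
i.e., it is the (max,+) convolution of $A^{\max}$ and $B^{\max}$; symmetrically, the lower boundary is the (min,+) convolution of $A^{\min}$ and $B^{\min}$. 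Both convolutions produce monotone increasing sequences of length $O(n)$ with values in $[O(n)]$, which is exactly the promised description of the boundary of $A+B$.

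Flipping indices and negating values converts a bounded monotone (max,+) convolution into a bounded monotone (min,+) convolution, so each of the two envelopes reduces to one instance of bounded monotone (min,+) convolution on sequences of length $O(n)$ with values in $[n]$. I would then invoke the reduction from bounded monotone (min,+) convolution to 3SUM$^+$ for 2D monotone sets in $[O(n)]^2$ sketched in the introduction (a simultaneous binary search that makes $O(\log n)$ calls) and feed each call to Theorem~\ref{thm-monotone}, whose $\OO(n^{(9+\sqrt{177})/12})$ expected and $\OO(n^{1.864})$ deterministic costs absorb the logarithmic overhead inside the stated exponents. No serious obstacle is anticipated; the only conceptual point is the observation that the boundary of the Minkowski sum of two connected monotone increasing sets is the (max/min,+) convolution of their respective boundary sequences, and this is immediate once connectedness guarantees that each vertical slice of $A$ and $B$ is an interval.
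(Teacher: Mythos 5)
Your proposal is correct and follows essentially the same route as the paper: both ultimately reduce to $O(\log n)$ calls to the 3SUM$^+$ algorithm of Theorem~\ref{thm-monotone} via a simultaneous binary search over the $y$-values of the boundary staircases. The detour through the bounded monotone (min,+) convolution formulation is a harmless re-packaging that the paper itself notes is equivalent (Remark~\ref{rmk-convol}); the paper simply performs the binary search directly on $A,B$ and an adaptively chosen monotone set $S$ of midpoints, rather than first extracting envelope sequences and re-encoding them as 2D sets.

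One small gap worth flagging: you assert that the boundary description is ``immediate once connectedness guarantees that each vertical slice of $A$ and $B$ is an interval,'' but to justify that $A+B$ is literally the region between your two staircases you also need the vertical slices of $A+B$ itself to be intervals, and that does not follow automatically from the slices of $A$ and $B$ being intervals. The slice of $A+B$ at $x=k$ is $\bigcup_i \bigl(\{y:(i,y)\in A\}+\{y:(k-i,y)\in B\}\bigr)$, a union of intervals whose connectedness is not automatic; the paper closes this by observing that consecutive terms $I_k^{(i)}$ and $I_k^{(i+1)}$ overlap, which uses monotonicity together with connectedness of $A$ and $B$. With that one observation added, your argument is complete.
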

\begin{proof}
First we show that $A+B$ is indeed a region bounded by
two monotone increasing sets.
Define $I_k$ to be the set of $y$-values of $A+B$ at
the vertical line $x=k$.
Then each $I_k$ is a single interval: to see this,
express $I_k$ as the union of intervals
$I_k^{(i)}=\{y: (i,y)\in A\}+\{y: (k-i,y)\in B\}$ over all $i$, and just observe that
each interval $I_k^{(i)}$ overlaps with the
next interval $I_k^{(i+1)}$ as $A$ and $B$ are connected and monotone increasing.  Since the lower/upper endpoints of $I_k$ are clearly monotone increasing in $k$, the conclusion follows.

We reduce the problem to 3SUM$^+$ for three 2D monotone sets.
We focus on the lower boundary of $A+B$, i.e.,
computing the lower endpoint of the interval $I_k$, denoted
by $s_k$, for all $k$.   The upper
boundary can be computed in a symmetric way.
We compute all $s_k$ by a simultaneous binary search in
$O(\log n)$ rounds as follows.

In round $i$, divide $[2n]$ into grid intervals of length $2^{\lceil \log (2n)\rceil - i}$.  Suppose that at the beginning
of the round, we know which grid interval $J_k$ contains $s_k$
for each $k$.
Let $m_k$ be the midpoint of~$J_k$.
Form the set $S=\{(k,m_k) : k\in [2n]\}$.
Since the $s_k$'s are monotone increasing, we know that
the $J_k$'s and $m_k$'s are as well; hence, $S$ is a
monotone increasing set in $[2n]^2$.
Apply the 3SUM$^+$ algorithm to $A,B,S$.
If $(k,m_k)$ is found to be in $A+B$, then
$I_k$ contains $m_k$ and so we know that $s_k\le m_k$.
Otherwise, $I_k$ is either completely smaller than $m_k$ or
completely larger than $m_k$; we can tell which is the case
by just comparing any one element of $I_k$ with $m_k$, and so we know whether $s_k$ is smaller or
larger than $m_k$.  (We can easily
pick out one element from $I_k$
by picking any $i$ in the $x$-range of $A$ and $j$ in the $x$-range of $B$ with $i+j=k$, picking any point of $A$ at $x=i$ and any
point of $B$ at $x=j$, and summing their $y$-coordinates.)
We can now reset $J_k$ to the
half of the interval that we know contains $m_k$, and proceed
to the next round.  The total running time is that of the
3SUM$^+$ algorithm multiplied by $O(\log n)$.
\end{proof}

It is possible to modify the algorithm in Theorem~\ref{thm-monotone}
directly to prove the corollary and avoid the extra
logarithmic penalty, but the preceding black-box reduction
is nevertheless worth noting.

\begin{corollary}
Given two monotone increasing sequences
$a_0\ldots,a_{n-1}\in [O(n)]$ and $b_0,\ldots,b_{n-1}\in [O(n)]$,
we can compute their (min,+) convolution in
$O(n^{1.859})$ expected time (or
$O(n^{1.864})$ deterministic time).
\end{corollary}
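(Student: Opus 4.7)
My plan is to reduce bounded monotone (min,+) convolution to computing the boundary of the sumset of two connected monotone 2D sets, and then invoke Corollary~\ref{cor-monotone-2d}. First I would ``thicken'' each input sequence into a connected staircase in the plane: take $A = \{(i, y) : 0 \le i \le n-1,\ a_i \le y \le a_{i+1}\}$, with the convention $a_n := a_{n-1}$, and define $B$ analogously from $b_0,\ldots,b_{n-1}$. This $A$ is a union of vertical segments glued together into a monotone increasing set lying in $[n] \times [O(n)]$, with $|A| = \sum_{i}(a_{i+1}-a_i+1) = O(n)$ because the $a_i$'s come from $[O(n)]$; the top endpoint $(i, a_{i+1})$ of column $i$ is one unit $L_1$-step away from the bottom endpoint $(i+1, a_{i+1})$ of column $i+1$, which gives connectedness. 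The set $B$ enjoys the analogous properties.

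Next I would feed $A$ and $B$ into Corollary~\ref{cor-monotone-2d}, obtaining a description of the lower monotone boundary of $A+B$ in $O(n^{1.859})$ expected (or $O(n^{1.864})$ deterministic) time. From this boundary I can read off, for every $k \in \{0,\ldots,2n-2\}$, the smallest $y$-coordinate $\sigma_k$ of a point of $A+B$ on the vertical line $x=k$.

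The final step is to check $\sigma_k = s_k$. Since the smallest $y$-value of $A$ at column $i$ is exactly $a_i$ and the smallest $y$-value of $B$ at column $j$ is exactly $b_j$, we have
\[
\sigma_k \:=\: \min_{i+j=k}\bigl(\min\{\alpha : (i,\alpha)\in A\} + \min\{\beta : (j,\beta)\in B\}\bigr) \:=\: \min_{i+j=k}(a_i + b_j) \:=\: s_k,
\]
so the lower boundary of $A+B$ encodes the entire (min,+) convolution, completing the reduction.

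The hard part of the argument is not really ``hard''---Corollary~\ref{cor-monotone-2d} carries all of the algorithmic weight. The only step requiring any care is the thickening operation: one must be sure that the extra filler points inserted to enforce connectedness do not accidentally create a new, smaller minimum in some column of $A+B$. This is immediate, because every filler point at column $i$ has $y \ge a_i$, so including filler points can only enlarge sums, never shrink them, relative to the bare staircase $\{(i, a_i)\}$. Hence the stated running times follow directly from those of Corollary~\ref{cor-monotone-2d}.
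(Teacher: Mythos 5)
Your proof is correct and takes essentially the same route as the paper: thicken each sequence into a connected monotone staircase in $[O(n)]^2$, apply Corollary~\ref{cor-monotone-2d} to get the lower boundary of $A+B$, and read off $s_k$ as the minimum $y$-value at $x=k$. Your use of closed intervals $a_i \le y \le a_{i+1}$ (with $a_n := a_{n-1}$) is in fact slightly cleaner than the paper's half-open $a_i \le a < a_{i+1}$, since it guarantees connectedness and non-empty columns even when $a_{i+1}=a_i$.
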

\begin{proof}
We just apply Corollary~\ref{cor-monotone-2d} to
the connected monotone increasing sets $A=\{(i,a): a_i\le a < a_{i+1}\}$
and $B=\{(i,b): b_i\le b < b_{i+1}\}$ in $[O(n)]^2$.  Then the lowest $y$-value
in $A+B$ at $x=k$ gives the $k$-th entry of the (min,+) convolution.
\end{proof}

\begin{remark}\label{rmk-convol}
The problems in the preceding two corollaries are
in fact equivalent.
To reduce in the other direction, given connected monotone increasing sets
$A,B\subseteq [n]^2$, first we may assume that the $x$-ranges
of $A$ and $B$ have the same length, since
we can prepend one of the sets with a horizontal line segment
without affecting the lower boundary of $A+B$.  By translation,
we may assume that the $x$-ranges of $A$ and $B$ are identical
and start with 0.  We define the monotone increasing
sequences $a_i'=$ (lowest $y$-value of
$A$ at $x=i$) and $b_i'=$ (lowest $y$-value of $B$ at $x=i$).
Then the (min,+) convolution of the two sequences gives
the lower boundary of $A+B$.  The upper boundary can be computed
in a symmetric way.
\end{remark}

\begin{remark}\label{rmk-bounded}
We can now compute the (min,+) convolution of two integer
sequences with bounded differences property,
by reducing to the monotone case as noted in the introduction.

This version is also equivalent.
To reduce in the other direction, given connected monotone increasing sets
$A=\{a_1,\ldots,a_{|A|}\}$ and $B=\{b_1,\ldots,b_{|B|}\}$ in $[n]^2$
where $a_{i+1}-a_i,b_{i+1}-b_i\in\{(1,0),(0,1)\}$,
we apply the linear
transformation $\phi(x,y) = (x+y,y)$.
After the transformation,
$\phi(a_{i+1})-\phi(a_i),\phi(b_{i+1})-\phi(b_i)\in
\{(1,0),(1,1)\}$.  When applying the same
reduction in Remark~\ref{rmk-convol}
to the transformed sets $\phi(A)$ and $\phi(B)$,
the two resulting monotone increasing sequences will satisfy
the bounded differences property (the
differences of consecutive elements are all in $\{0,1\}$).
The boundary of $A+B$ can be inferred from
the boundary of $\phi(A)+\phi(B)$.
\end{remark}

\begin{corollary}\label{cor-jumble-binary}
Given a string $c_1\cdots c_n\in\{0,1\}^*$,
we can preprocess in
$O(n^{1.859})$ expected time (or
$O(n^{1.864})$ deterministic time) into an $O(n)$-space structure,
so that we can answer
histogram queries, i.e., decide whether there exists
a substring with exactly $i$ $0$'s and $j$ $1$'s for any
query values $i,j$, in $O(1)$ time.
\end{corollary}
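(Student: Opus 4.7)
The plan is to reduce binary histogram indexing to two bounded monotone $(\min,+)$ convolutions and then invoke the preceding corollary. Let $a_i := c_1 + \cdots + c_i$ for $0 \le i \le n$; this sequence is monotone increasing with values in $[0,n]$, and the number of $1$'s in the substring $c_{i+1}\cdots c_{i+k}$ is exactly $a_{i+k}-a_i$. For each length $k\in[1,n]$ I define
\[
  s_k := \min_{0\le i\le n-k}(a_{i+k}-a_i), \qquad s'_k := \max_{0\le i\le n-k}(a_{i+k}-a_i).
\]
A query ``is there a substring with exactly $i$ zeros and $j$ ones?'' becomes, with $k=i+j$, ``is there a length-$k$ substring with exactly $j$ ones?''.

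The observation that makes $O(1)$-time queries possible is that the set of realized counts $\{a_{i+k}-a_i : 0 \le i \le n-k\}$ is precisely the integer interval $[s_k,s'_k]$: sliding the window by one position changes the count by $c_{i+1+k}-c_{i+1}\in\{-1,0,1\}$, so the sequence of counts is a connected walk in $\Z$ that must visit every integer between its extremes. Consequently, once the tables $(s_k)_{k=1}^n$ and $(s'_k)_{k=1}^n$ are stored in $O(n)$ space, a query $(i,j)$ with $k=i+j$ is answered in $O(1)$ time by testing whether $s_k\le j\le s'_k$.

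It remains to compute both tables in truly subquadratic time. Set $b_j := n - a_{n-j}$, which is again monotone increasing in $[0,n]$. A short index substitution $j=n-m$ in the convolution sum shows that the $(n+k)$-th entry of the $(\min,+)$ convolution of $(a_i)$ and $(b_j)$ equals $n+s_k$. Since both sequences are monotone increasing and lie in $[O(n)]$, the preceding corollary produces $(s_k)$ in $O(n^{1.859})$ expected time, or $O(n^{1.864})$ deterministic time. The table $(s'_k)$ is obtained symmetrically: a $(\max,+)$ convolution reduces under negation-and-shift to a $(\min,+)$ convolution of two monotone increasing bounded sequences, so the same corollary applies. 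The only delicate point is the index bookkeeping for the reversal; no new algorithmic ideas beyond the preceding corollary are needed.
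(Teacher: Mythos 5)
Your proof is correct, and it fleshes out the reduction that the paper only sketches in its introduction; the paper's own proof of this corollary instead routes through the 2D sumset computation of Corollary~\ref{cor-monotone-2d}, applied to the connected monotone sets $A=\{a_0,\ldots,a_n\}\subseteq[n]^2$ and $B=-A$ (where the coordinates of $a_i$ are the counts of $0$'s and $1$'s in $c_1\cdots c_i$), deriving the interval structure of the realized count-pairs from the geometric fact that $A+B$ is a region bounded by two monotone curves. Your version supplies the 1D analogue directly and self-containedly: the connected-walk argument (the window count changes by at most $1$ per slide, so for each length $k$ the realized counts fill the integer interval $[s_k,s'_k]$) is exactly what justifies the $O(n)$-space, $O(1)$-query structure, a point the paper leaves implicit by absorbing it into the 2D boundary computation. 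Your index substitution giving $s_k$ as the $(n+k)$-th entry of the $(\min,+)$ convolution of $(a_i)$ with $b_j=n-a_{n-j}$ checks out, and for $s'_k$ the cleanest instantiation of your ``negation-and-shift'' remark is to run the same computation on the complemented string (max $1$'s over length-$k$ windows equals $k$ minus the min number of $0$'s). Both routes rest on the same underlying machinery and give the same bounds; yours stays entirely in one dimension, while the paper reuses the 2D structure it has already set up.
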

\begin{proof}
One reduction to bounded monotone (min,+) convolution has been noted
briefly in the introduction.  Alternatively, we can just
apply Corollary~\ref{cor-monotone-2d} to the
connected monotone increasing sets $A=\{a_0,\ldots,a_n\}\subseteq [n]^2$ and $B=-A$,
where the $x$- and $y$-coordinates of $a_i$ are the number of
0's and 1's in the prefix $c_1\cdots c_i$. (We can make $B$
lie in $[n]^2$ by translation.)  Then $a_j-a_i\in A+B$ gives
the number of 0's and 1's in the substring $c_i\cdots c_{j-1}$ for any $i < j$.  The boundary of $A+B$ gives the desired structure.
\end{proof}

\begin{remark}\label{rmk-jumble}
This problem is also equivalent.
To reduce in the other direction, suppose we have
connected monotone increasing sets $A=\{a_1,\ldots,a_{|A|}\}$ and
$B=\{b_1,\ldots,b_{|B|}\}$ in $[n]^2$, given in sorted order with $a_1=b_1=(0,0)$.
We set $c_i=0$ if $a_{i+1}-a_i=(1,0)$,
or $1$ if $a_{i+1}-a_i=(0,1)$; and set $d_i=0$ if
$b_{i+1}-b_i=(1,0)$, or $1$ if $b_{i+1}-b_i=(0,1)$.
We then solve histogram indexing for
the binary string $c_{|A|-1}\cdots c_1 0^n d_1\cdots d_{|B|-1}$.
The minimum number of 1's over all substrings with $n+k$ 0's
(which can be found in $O(\log n)$ queries by binary search)
gives us the lowest point in $A+B$ at $x=k$.  The upper boundary
can be computed in a symmetric way.
\end{remark}

}

\section{Generalization to Clustered Sets}\label{sec:clustered}

In the main algorithm of the previous section, monotonicity is
convenient but not essential.
In this section, we identify the sole property needed: clusterability.  Formally, we say that
a set in $\Z^d$ is \emph{$(K,L)$-clustered} if it can be
covered by $K$ disjoint hypercubes each of volume $L$.
We say that it is \emph{$(K,L,M)$-clustered} if
furthermore each such hypercube contains at most $M$ points of the set.

\subsection{The Main Algorithm}

\begin{theorem}\label{thm-cluster}
Given $(K_A,L,M_A)$-, $(K_B,L,M_B)$-, and $(K_S,L,M_S)$-clustered
sets $A,B,S\subseteq\Z^d$ for a constant $d$,
we can solve 3SUM$^+$ in expected time
$$ \OO(K_A K_B + (K_AK_B)^{5/7} K_S^{3/7} L^{1/7}  W^{6/7}
          + K_A (K_B W)^{5/6}),$$
where $W = \min\{M_AM_B, M_AM_S, M_BM_S\}$.
\end{theorem}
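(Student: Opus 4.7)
The plan is to adapt the three-step algorithm of Theorem~\ref{thm-monotone} to the clustered setting, where the given covering hypercubes play the role that grid cells played under monotonicity. After the $2^d$-fold alignment decomposition, I impose a grid of side length $L^{1/d}$ on $\Z^d$; each of the $K_A$ hypercubes covering $A$ then touches $O(1)$ grid cells, so the cell-label set $A^* = \{\CELL(a) : a\in A\}$ has size $O(K_A)$ and each cell of $A^*$ contains at most $M_A$ points of $A$, and analogously for $B$ and $S$. Alignment again guarantees $\CELL(a+b) = \CELL(a) + \CELL(b)$ for $a\in A, b\in B$, so cell-level coincidence drives point-level coincidence.

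Next, I invoke Corollary~\ref{cor-BSG} on $A^*, B^*, S^*$ with $N = \Theta(\sqrt{K_A K_B})$ and $t = \Theta(K_S/N)$, at a parameter $\alpha$ to be fixed. Theorem~\ref{runtime-corollary}(ii) bounds this step by $\OO(K_A K_B)$ when $t\ge 1$ and by $\OO(K_A K_B + (1/\alpha)^5 K_A)$ otherwise. The output is a remainder $R^*$ of size $O(\alpha K_A K_B)$ and $k = O(1/\alpha)$ subset pairs $(A_i^*, B_i^*)$ whose sumsets have size $O((1/\alpha)^5 t^3 N) = O((1/\alpha)^5 K_S^3/(K_A K_B))$. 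For each cell pair in $R^*$ I solve 3SUM$^+$ directly on the at most $M_A$, $M_B$, $M_S$ points it contains; standard sort-and-search does this in $O(W)$ time per pair (picking whichever of the three pairings $M_A M_B$, $M_A M_S$, $M_B M_S$ is smallest), for a total of $O(\alpha K_A K_B \cdot W)$. For each $i$, I apply the FFT Lemma to compute $\{a\in A : \CELL(a)\in A_i^*\} + \{b\in B : \CELL(b)\in B_i^*\}$ inside the superset $T_i = \{s\in\Z^d : \CELL(s)\in A_i^*+B_i^*\}$ of size $O(L\cdot|A_i^*+B_i^*|)$, then filter against $S$; summing over $i$ this costs $\OO((1/\alpha)^6 L K_S^3/(K_A K_B))$.

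Balancing the $\alpha$-dependent terms yields the claimed bound. Setting $\alpha = \Theta\bigl((L K_S^3/((K_A K_B)^2 W))^{1/7}\bigr)$ equates the remainder cost with the FFT-sumset cost and contributes $(K_A K_B)^{5/7} K_S^{3/7} L^{1/7} W^{6/7}$; setting $\alpha = \Theta\bigl((K_B W)^{-1/6}\bigr)$ equates the remainder cost with the extra BSG term $(1/\alpha)^5 K_A$ (only relevant in the $t<1$ regime) and contributes $K_A (K_B W)^{5/6}$. Taking $\alpha$ to be the maximum of these two choices forces both balances to hold simultaneously, so the total is dominated by $K_A K_B$ plus the two balanced terms, giving exactly the bound in the theorem. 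The part I expect to be most delicate is the asymmetric bookkeeping: verifying that the auxiliary per-call FFT costs beyond $|T_i|$ (namely $\sum_i |A_{(i)}|+|B_{(i)}| = O((K_A M_A + K_B M_B)/\alpha)$, from the fact that the $A_i^*,B_i^*$ need not be disjoint) really are absorbed into the three stated terms for the range of $\alpha$ we pick, and that the $t<1$ branch of Theorem~\ref{runtime-corollary} is only charged when it actually dominates. Conceptually, though, the proof is a faithful translation of the monotone algorithm once clusterability is recognized as exactly the structural ingredient that bounded the cell counts there.
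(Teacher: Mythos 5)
Your proof matches the paper's essentially line for line: same grid of side $L^{1/d}$, same alignedness reduction, same three-step BSG$+$FFT structure with $N=\Theta(\sqrt{K_A K_B})$ and $t=\Theta(K_S/\sqrt{K_A K_B})$, and the same choice of $1/\alpha = \min\{[(K_A K_B)^2 W/(K_S^3 L)]^{1/7},\ (K_B W)^{1/6}\}$. The concern you flag at the end dissolves immediately: for nonempty $B_{(i)}$ we have $|A_{(i)}| \le |A_{(i)}+B_{(i)}| \le |T_i|$ (translate $A_{(i)}$ by any single $b\in B_{(i)}$), and symmetrically $|B_{(i)}|\le |T_i|$, so the FFT Lemma's parameter for the $i$-th call is $O(|T_i|)$ and the per-call input sizes are already accounted for in the $\sum_i |T_i|$ bound; the $t<1$ term $(1/\alpha)^5 K_A$ is simply included unconditionally as an upper bound, which is harmless. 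The proof is correct.
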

\begin{proof}
The algorithm is similar to the one in Theorem~\ref{thm-monotone},
except without recursion.
We use a grid of side length $\ell := \lceil L^{1/d}\rceil$, and
as before,
we may assume that $A$ and $B$ are aligned.
\begin{description}
\item[Step 0:]
Apply the BSG Corollary to the sets $A^*=\{\CELL(a):a\in A\}$,
$B^*=\{\CELL(b):b\in B\}$, $S^*=\{\CELL(s):s\in S\}$.
This produces subsets $A_1^*,\ldots,A_k^*,B_1^*,\ldots,B_k^*$ and a remainder set $R^*$.

Note that $|A^*|=O(K_A),|B^*|=O(K_B),|S^*| =O(K_S)$.
The parameters in the BSG Corollary
are thus $N=\Theta(\sqrt{K_AK_B})$ and $t=O(K_S/\sqrt{K_AK_B})$.
This step takes $\OO(K_AK_B+(1/\alpha)^5 K_A)$ expected time
by Theorem~\ref{runtime-corollary}.
\item[Step 1:]
For each $(a^*,b^*)\in R^*$,
solve the problem for the sets
$\{a\in A: \CELL(a)=a^*\}$, $\{b\in B: \CELL(b)=b^*\}$,
$\{s\in S: \CELL(s)= a^*+b^*\}$.

Note that the three sets have sizes $O(M_A)$, $O(M_B)$,
$O(M_S)$, and so the naive brute-force algorithm which
tries all pairs from two of the three sets takes $\OO(W)$ time.
As $|R^*|=O(\alpha N^2)=O(\alpha K_AK_B)$, this step takes total
time $O(\alpha K_AK_BW)$.

\item[Step 2:]
For each $i=1,\ldots,k$,
apply the FFT Lemma to generate
$\{a\in A: \CELL(a)\in A_i^*\} + \{b\in B: \CELL(b)\in B_i^*\}$,
which is contained in the
superset $T_i = \{s\in \Z^d: \CELL(s)\in A_i^*+B_i^*\}$.
Report those generated elements that are in $S$.

Note that the size of $A_i^*+B_i^*$ is $O((1/\alpha)^5 t^3 N)$,
and so the size of $T_i$ is $O((1/\alpha)^5 t^3 NL)$.
As $k=O(1/\alpha)$, this step takes expected time
$\OO((1/\alpha)^6 t^3 NL)
=\OO((1/\alpha)^6 K_S^3 L/(K_AK_B))$.
\end{description}

The total expected time is
$\OO(K_AK_B + \alpha K_AK_B W + (1/\alpha)^6 K_S^3 L/(K_AK_B)
+ (1/\alpha)^5 K_A)$.
We set $1/\alpha = \min\{[(K_AK_B)^2 W /(K_S^3 L)]^{1/7},\,
(K_B W)^{1/6}\}$.
\end{proof}

It turns out that the $M$ bounds on the number of points
per cluster are not essential, and neither is clusterability
of the third set $S$.  In fact, clusterability of only one
set $A$ is enough to obtain nontrivial results.

\begin{corollary}\label{cor-cluster}
Given sets $A,B,S\subseteq\Z^d$ of size $O(n)$ for a constant $d$ where
$A$ and $B$ are $(K_A,L)$- and $(K_B,L)$-clustered,
we can solve 3SUM$^+$ in expected time

$$\OO(K_AK_B + n^{12/7}(K_AL)^{1/7}).$$
\end{corollary}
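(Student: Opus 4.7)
The plan is to invoke Theorem~\ref{thm-cluster} after supplying the missing clustering data and refining what we have. I will assume without loss of generality that $K_A \le K_B$, which is consistent with the asymmetry of the claimed bound. First, I manufacture a clustering for $S$: each point is placed in its own singleton wrapped in a disjoint hypercube of volume $L$, so $S$ becomes $(K_S,L)$-clustered with $K_S = O(n)$ and $M_S = 1$. Next, I refine the given clustering of $A$ by splitting every cluster of more than $n/K_A$ points into sub-clusters of size at most $n/K_A$; since $|A| = O(n)$, this at most doubles the count, giving $\hat K_A = O(K_A)$ and $M_A \le n/K_A$. I do the symmetric split for $B$ to obtain $\hat K_B = O(K_B)$ and $M_B \le n/K_B$. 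Each sub-cluster still fits in a hypercube of volume $L$ because it sits inside an original cluster.

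With these parameters the key quantity becomes $W = \min\{M_A M_B,\, M_A M_S,\, M_B M_S\} = \min\{n^2/(K_A K_B),\, n/K_A,\, n/K_B\} = n/K_B$, using $K_A \le K_B \le n$. Substituting $\hat K_A,\hat K_B,K_S,W$ into Theorem~\ref{thm-cluster} yields
\[
\OO\!\bigl(K_A K_B \,+\, (K_A K_B)^{5/7}\, n^{3/7}\, L^{1/7}\, (n/K_B)^{6/7} \,+\, K_A\, (K_B \cdot n/K_B)^{5/6}\bigr).
\]
The middle term simplifies to $K_A^{5/7} K_B^{-1/7} n^{9/7} L^{1/7}$; applying $K_B \ge K_A$ and then $K_A \le n$ bounds it by $K_A^{1/7} n^{12/7} L^{1/7}$, which is exactly the second term of the target. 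The last term is $K_A n^{5/6}$, which I plan to absorb by a short case split on $K_B$: when $K_B \ge n^{5/6}$ it is at most $K_A K_B$, and when $K_B < n^{5/6}$ then also $K_A < n^{5/6}$, so $K_A n^{5/6} \le n^{5/3} \le n^{12/7} \le n^{12/7}(K_A L)^{1/7}$.

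The main hurdle is choosing the right refinement level. If one does not split at all, then $W$ can be as large as $n^2$ and the Theorem~\ref{thm-cluster} bound degenerates; if one splits too aggressively (say, down to singletons), then $\hat K_A \hat K_B$ blows past $K_A K_B$ and the BSG step itself becomes too expensive. The pairing $M_A = n/K_A$, $M_B = n/K_B$ is the sweet spot: averaging costs only a constant factor in $\hat K_A, \hat K_B$ while driving $W$ down to $n/K_B$, which is precisely what lets the $L^{1/7}$ exponent appear in the claim. The rest is bookkeeping on the three Theorem~\ref{thm-cluster} terms.
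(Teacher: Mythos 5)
Your argument has a genuine gap rooted in the disjointness requirement in the definition of clustering. Recall that a set is $(K,L,M)$-clustered if it is covered by $K$ \emph{disjoint} hypercubes of volume $L$, each with at most $M$ points. You propose to ``refine'' the clustering of $A$ by splitting every cluster with more than $n/K_A$ points into sub-clusters; but all the points of one cluster sit inside a single hypercube of volume $L$, and there is no way to cover them by \emph{several disjoint} hypercubes of the same volume. The same issue, in sharper form, invalidates your treatment of $S$: making $S$ into a $(K_S,L,1)$-clustered set with $K_S=O(n)$ is simply impossible for many inputs (e.g., three consecutive integers with $L\ge 3$ in 1D cannot be covered by three disjoint length-$L$ intervals each containing exactly one of them). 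Disjointness is not a cosmetic condition here---the proof of Theorem~\ref{thm-cluster} uses it to guarantee that each grid cell of side $\lceil L^{1/d}\rceil$ contains only $O(M)$ points, which is what makes Step~1 cheap.

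The paper handles both issues differently. Rather than splitting clusters, it groups the \emph{existing} hypercubes by size: all hypercubes with between $2^i$ and $2^{i+1}$ points form the $i$-th subset, giving an $O(\log n)$-fold decomposition of each of $A,B,S$ into ``equitably'' $(\le K,L,O(n/K))$-clustered parts, which respects the geometry since no hypercube is split. And instead of forcing $M_S$ to be tiny, it upper-bounds $W=\min\{M_AM_B,M_AM_S,M_BM_S\}$ in the middle term of Theorem~\ref{thm-cluster} by the \emph{weighted geometric mean} $(M_AM_B)^{1/2}(M_AM_S)^{1/6}(M_BM_S)^{1/3}=n^2/(K_A^{2/3}K_B^{5/6}K_S^{1/2})$, chosen precisely so that after substitution the $K_S$-dependence cancels. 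That cancellation is the step that lets the bound be stated without any reference to how $S$ is clustered; your route tries to force $K_S=O(n),M_S=1$ to achieve the same effect, but that configuration is not attainable. Your arithmetic downstream (the $K_A\le K_B\le n$ chain, the case split on $K_B$ for the third term) is fine, so the fix would be to replace the splitting/singleton steps with the log-scale decomposition plus the geometric-mean bound.
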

\begin{proof}
We say that a set of size $n$ is \emph{equitably $(K,L)$-clustered} if
it is $(K,L,O(n/K))$-clustered.  Suppose that $A,B,S$
are equitably  $(K_A,L)$-, $(K_B,L)$-, and $(K_S,L)$-clustered.
Then in Theorem~\ref{thm-cluster}, we set $M_A=O(n/K_A)$,
$M_B=O(n/K_B)$, $M_S=O(n/K_S)$, and
upper-bound $W$ in the second term by the following
weighted geometric mean
(with carefully chosen weights):
\[ ((n/K_A)(n/K_B))^{1/2} ((n/K_A)(n/K_S))^{1/6} ((n/K_B)(n/K_S))^{1/3} \:=\: n^2/(K_A^{2/3}K_B^{5/6}K_S^{1/2});
\]
and we upper-bound $W$ in the third term more simply by $(n/K_A)(n/K_B)$.
This leads to the expression
$\OO(K_AK_B + n^{12/7}(K_A L)^{1/7} + n^{5/3}K_A^{1/6})$.
The third term is always dominated by the second (since $K_A\le n$),
and so we get precisely the stated time bound.

What if $A,B,S$ are not equitably clustered?
We can decompose $A$ into $O(\log n)$
equitably $(\le K_A,L)$-clustered subsets: just put points in
hypercubes with between $2^i$ and $2^{i+1}$ points into the $i$-th
subset.  We can do the same for $B$ and $S$.
The total time increases by at most an $O(\log^3 n)$ factor
(since the above bound is nondecreasing in $K_A$ and $K_B$
and independent of $K_S$).
\end{proof}

The corollary below now follows immediately by
just substituting $K_A=n^{1-\delta}$, $L=n$, and $K_B\le n$.

\begin{corollary}\label{cor-general}
Given sets $A,B,S\subseteq\Z^d$ of size $O(n)$ for a constant $d$ where
$A$ is $(n^{1-\delta},n)$-clustered,
we can solve 3SUM$^+$ in $n^{2-\Omega(\delta)}$ expected time.
\end{corollary}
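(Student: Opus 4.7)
The plan is to derive this corollary directly from Corollary~\ref{cor-cluster} by a suitable choice of parameters. Corollary~\ref{cor-cluster} requires both $A$ and $B$ to be clustered with a common cluster-volume $L$, so the first thing to verify is that $B$ admits a trivial clustering at the same scale $L=n$ used for $A$.

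First I would observe that any set $B\subseteq\Z^d$ of size $O(n)$ is automatically $(O(n),O(n))$-clustered: overlay $\Z^d$ with an axis-aligned grid of side length $\lceil n^{1/d}\rceil$ (so each cell has volume $\Theta(n)$, matching $L=n$), and take the at most $|B|=O(n)$ nonempty cells. Thus setting $K_B=O(n)$ is legal. Combined with the hypothesis that $A$ is $(n^{1-\delta},n)$-clustered, I can invoke Corollary~\ref{cor-cluster} with $K_A=n^{1-\delta}$, $K_B=O(n)$, and $L=n$.

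Next I would substitute these values into the bound $\OO(K_AK_B + n^{12/7}(K_AL)^{1/7})$ and simplify. The first term becomes $\OO(n^{1-\delta}\cdot n)=\OO(n^{2-\delta})$, and the second becomes
\[ \OO\bigl(n^{12/7}(n^{1-\delta}\cdot n)^{1/7}\bigr) \:=\: \OO\bigl(n^{12/7+(2-\delta)/7}\bigr) \:=\: \OO(n^{2-\delta/7}). \]
Both terms are of the form $n^{2-\Omega(\delta)}$, so their sum is $n^{2-\Omega(\delta)}$ as claimed.

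Because every step is a direct substitution into the already-established Corollary~\ref{cor-cluster}, there is no real obstacle; the only point that needs a sentence of justification is the trivial clustering of $B$ at the matching scale $L=n$, which is why I would lead with that observation before doing the arithmetic.
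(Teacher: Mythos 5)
Your proof is correct and takes essentially the same route as the paper, which likewise obtains the corollary by substituting $K_A=n^{1-\delta}$, $L=n$, $K_B\le n$ into Corollary~\ref{cor-cluster}. The only addition you make is an explicit (and welcome) justification that an arbitrary $O(n)$-size set $B$ is trivially $(O(n),n)$-clustered, a point the paper leaves implicit.
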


Although it may not give the best quantitive bounds, it
describes the most general setting under which
we know how to solve 3SUM in truly subquadratic time.

For example, for $d=1$, the above corollary generalizes the well-known
fact that 3SUM for integer sets in $[n^{2-\delta}]$ can be
solved in subquadratic time (by just doing one FFT), and
a not-so-well-known fact that 3SUM for three integer sets where
only one set is assumed to be in $[n^{2-\delta}]$ can still be solved
in subquadratic time (by doing several FFTs, without requiring additive combinatorics---a simple exercise we leave to the reader).

\SHORT{
Although we have stated the above results in $d$ dimensions,
the one-dimensional case of integers contains the essence, since we can
map higher-dimensional clustered sets to 1D, as noted in the full
paper.
}
\LONG{
\begin{remark}
Although we have stated the above results in $d$ dimensions,
the one-dimensional case of integers contains the essence, since we can
map higher-dimensional clustered sets to 1D\@.
Specifically,
consider a grid of side length $\ell := \lceil L^{1/d}\rceil$, and
without loss of generality,
assume that $A,B\subseteq [U]^d$ are aligned.
We can map each point $(x_1,\ldots,x_d)$ to the integer
\[
 L\cdot \sum_{j=1}^d \lfloor x_j/\ell\rfloor\cdot (2\lceil U/\ell\rceil)^{j-1}
 + \sum_{j=1}^d (x_j\bmod{\ell})\cdot \ell^{j-1}.
\]
If $A$ is $(K,L)$-clustered, then the mapped set in 1D is
still $(O(K),L)$-clustered.  By alignedness,
3SUM$^+$ solutions are preserved by the mapping.
\end{remark}
}

\subsection{Application to the Monotone Case and Offline Histogram
Queries}

\begin{corollary}\label{cor-monotone-offline}
Given sets $A,B,S\subseteq [n]^d$ of size $O(n)$ for a constant~$d$
where $A$ and $B$ are monotone,
we can solve 3SUM$^+$ in
$\OO(n^{2-2/(d+13)})$ expected time.

If only $A$ is monotone, we can solve the
problem in $\OO(n^{2-1/(d+6)})$ expected time.
\end{corollary}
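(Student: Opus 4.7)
The plan is to deduce both bounds from Corollary~\ref{cor-cluster} by observing that a monotone set in $[n]^d$ of size $O(n)$ is automatically clustered at every scale. Concretely, for any $\ell\ge 1$, partition $[n]^d$ into hypercubes of side length $\ell$ (volume $L=\ell^d$). If we traverse the points of a monotone set in sorted order, each coordinate is nondecreasing, so the cell label $\CELL(\cdot)$ can change in at most $d(n/\ell)$ of these consecutive steps. Hence a monotone set of size $O(n)$ in $[n]^d$ is $(O(n/L^{1/d}), L)$-clustered.

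For the first statement, both $A$ and $B$ are monotone, so we may take $K_A,K_B = O(n/L^{1/d})$ in Corollary~\ref{cor-cluster}, yielding the expected running time
\[ \OO\!\bigl(\, n^2/L^{2/d} \:+\: n^{12/7}\bigl(n\cdot L^{(d-1)/d}\bigr)^{1/7}\,\bigr)
   \:=\: \OO\!\bigl(\, n^2/L^{2/d} \:+\: n^{13/7} L^{(d-1)/(7d)} \,\bigr). \]
Balancing the two terms leads to $L=n^{d/(d+13)}$, which gives $\OO(n^{2-2/(d+13)})$.

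For the second statement, only $A$ is monotone, so $K_A=O(n/L^{1/d})$ as before, while for $B$ we simply use the trivial bound $K_B \le |B| = O(n)$ (each point is its own cluster). Plugging into Corollary~\ref{cor-cluster} gives
\[ \OO\!\bigl(\, n^2/L^{1/d} \:+\: n^{13/7} L^{(d-1)/(7d)} \,\bigr), \]
and balancing now leads to $L=n^{d/(d+6)}$, which gives $\OO(n^{2-1/(d+6)})$.

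There is essentially no obstacle here beyond bookkeeping: the monotonicity hypothesis translates directly into a clustering bound, and then Corollary~\ref{cor-cluster} (which already absorbs the BSG/FFT machinery and the handling of $S$, as well as the passage from equitable to general clustering via a logarithmic decomposition) does all of the work. The only mild care needed is the second case, where one must notice that taking $K_B = O(n)$ is still profitable because the dependence on $K_B$ in Corollary~\ref{cor-cluster} is only through the product $K_A K_B$, which remains sub-$n^2$ once $L$ is chosen as a small power of $n$.
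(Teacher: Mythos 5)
Your proof is correct and follows the same route as the paper: observe that a monotone set in $[n]^d$ is $(O(n/\ell),\ell^d)$-clustered (and a general set trivially $(O(n),\ell^d)$-clustered), then feed the parameters into Corollary~\ref{cor-cluster} and balance. The paper just writes $\ell$ where you write $L^{1/d}$; otherwise the substitutions and the chosen scales $\ell=n^{1/(d+13)}$ and $\ell=n^{1/(d+6)}$ are identical.

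Two small wording nits that do not affect correctness: monotone sets may be increasing or decreasing, so the coordinates are monotone rather than specifically nondecreasing; and the justification for $K_B=O(n)$ should be that a set of $O(n)$ points occupies at most $O(n)$ disjoint grid cells, not literally ``each point is its own cluster,'' since the covering hypercubes in the definition must be disjoint.
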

\begin{proof}
A monotone set in $[n]^d$ is $(O(n/\ell),\ell^d)$-clustered
for all $\ell$.  For example, it is $(O(n^{1-1/d},n)$-clustered,
and so by Corollary~\ref{cor-general}, we know that truly
subquadratic running time is achievable.
For the best quantitive bound, we set
$K_A,K_B=O(n/\ell)$ and $L=\ell^d$ in Corollary~\ref{cor-cluster}
and get
$\OO(n^2/\ell^2 + n^{12/7}(n\ell^{d-1})^{1/7}).$
We set $\ell = n^{1/(d+13)}$ to balance the two terms.

If only $A$ is monotone, we get
$\OO(n^2/\ell + n^{12/7}(n\ell^{d-1})^{1/7}).$
We set $\ell = n^{1/(d+6)}$.
\end{proof}

The above bounds are (predictably) slightly worse than in
Theorem~\ref{thm-monotone}, which assumes
the monotonicity of the third set $S$.  The algorithm there also
exploits a stronger ``hierarchical'' clustering property enjoyed by
monotone sets (namely, that clusters are themselves clusterable),
which allows for recursion.


\begin{corollary}\label{cor-jumble-offline}
Given a string $c_1\cdots c_n\in [d]^*$ for a constant
alphabet size $d$ and a set $S\subseteq [n]^d$ of $O(n)$ vectors,
we can answer histogram queries, i.e.,
decide whether there exists a substring with
exactly $i_0$ $0$'s, \ldots, and $i_{d-1}$ $(d-1)$'s,
for all the vectors $(i_0,\ldots,i_{d-1})\in S$,
in $\OO(n^{2-2/(d+13)})$ total expected time.
\end{corollary}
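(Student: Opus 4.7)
The plan is to reduce histogram indexing on a string over an alphabet of size $d$ to 3SUM$^+$ for two monotone sets in $[n]^d$ against an arbitrary query set, and then invoke Corollary~\ref{cor-monotone-offline}. This generalizes the prefix-sum reduction used in Corollary~\ref{cor-jumble-binary} from the binary case to arbitrary constant alphabet size.

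Concretely, for each $i=0,1,\ldots,n$, define the prefix-count vector $a_i\in[n]^d$ whose $j$-th coordinate is the number of occurrences of character $j$ in $c_1\cdots c_i$. Then $a_{i+1}-a_i$ is a standard basis vector, so the sequence $a_0,a_1,\ldots,a_n$ is monotone increasing coordinate-wise. Put $A=\{a_0,\ldots,a_n\}$; it is a monotone increasing set of size $n+1$ in $[n]^d$. For any substring $c_{i+1}\cdots c_j$ with $i<j$, its histogram is exactly $a_j-a_i$, and conversely every vector of the form $a_j-a_i$ with $i\le j$ corresponds to such a substring. Thus a histogram query $s\in[n]^d$ has a witness iff $s\in A-A$ (restricted to the ``forward'' differences, but since $s$ has nonnegative coordinates and $A$ is coordinate-wise monotone, the constraint $i\le j$ is automatic).

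To phrase this as a sum rather than a difference while keeping both input sets monotone increasing, I would set $B=\{a_n-a_i : i=0,\ldots,n\}$, which is again monotone increasing in $[n]^d$ (as $i$ decreases, the entries increase), and let $S'=\{s+a_n : s\in S\}$, still of size $O(n)$. Then for $a\in A$ and $b\in B$, $a+b=s+a_n$ if and only if $a_j-a_i=s$ for the corresponding indices, so a 3SUM$^+$ call on $(A,B,S')$ reports exactly the queries with witnesses. The edge case $s=(0,\ldots,0)$ (empty substring) can be handled trivially beforehand.

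Now both $A$ and $B$ are monotone increasing sets of size $O(n)$ in $[n]^d$, while $S'$ is an arbitrary subset of $O(n)$ vectors in a suitably enlarged cube (which, after a harmless translation, can be placed in $[O(n)]^d$). Applying the ``only $A$ and $B$ monotone'' branch of Corollary~\ref{cor-monotone-offline} yields total expected time $\OO(n^{2-2/(d+13)})$, proving the claim. There is no real obstacle here beyond setting up the reduction so that both $A$ and $B$ are monotone in the same direction; the asymmetric bound of Corollary~\ref{cor-monotone-offline} (which does not require monotonicity of $S$) is precisely what makes the query set's arbitrariness harmless.
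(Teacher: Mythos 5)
Your proof is correct and follows essentially the same route as the paper: form the prefix-count vectors $a_0,\ldots,a_n$, take $A=\{a_i\}$ and a negated/translated copy as $B$, and invoke the branch of Corollary~\ref{cor-monotone-offline} that requires only $A$ and $B$ (not $S$) to be monotone. The only cosmetic difference is that you write $B=\{a_n-a_i\}$ and shift $S$ by $a_n$, whereas the paper takes $B=-A$ and remarks that a translation places everything in $[O(n)]^d$; these are the same reduction.
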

\begin{proof}
We just apply the 3SUM$^+$ algorithm in Corollary~\ref{cor-monotone-offline} to the
\LONG{connected }monotone increasing sets $A=\{a_0,\ldots,a_n\}\subseteq [n]^d$ and
$B=-A$, and the (not necessarily monotone) set $S$,
where the $d$ coordinates of $a_i$ hold the number of 0's, \ldots, $(d-1)$'s
in the prefix $c_1\cdots c_i$.
Then the $d$ coordinates of $a_j-a_i\in A+B$ give
the number of 0's, \ldots, $(d-1)$'s in the substring $a_i\cdots a_{j-1}$ for any $i < j$.
\end{proof}

The above upper bound nicely complements
the conditional hardness results by
Amir et al.~\cite{ACLL14}.  They proved an $n^{2-4/(d-O(1))}$ lower
bound on the histogram problem under the assumption that integer 3SUM$^+$ requires at least $n^{2-o(1)}$ time, and an $n^{2-2/(d-O(1))}$
lower bound under a stronger assumption that
3SUM$^+$ in $[n]^2$ requires at least $n^{2-o(1)}$ time.  (Their results
were stated for online queries but hold in the offline
setting.)  On the other hand, if the assumption fails, i.e.,
integer 3SUM$^+$ turns out to have a truly subquadratic algorithm,
then there would be a truly subquadratic algorithm
for offline histogram queries with an exponent independent of $d$.

\LONG{

\section{Online Queries}\label{sec:online}

We now show how the same techniques can even
be applied to the setting where the points of the third set $S$ are not
given in advance but arrive online.

\subsection{The Main Algorithm}

\begin{theorem}\label{thm-cluster-online}
Given $(K_A,L,M_A)$- and $(K_B,L,M_B)$-clustered sets $A,B\subseteq\Z^d$ for a constant $d$ and a parameter $P$,
we can preprocess in expected time
$$\OO(K_AK_B +  (K_AK_B)^{8/7}(M_AM_B)^{6/7}L^{1/7}/P^{3/7}
  + K_A(K_BM_AM_B)^{5/6})$$
into a data structure with $O(K_AK_B + K_AK_BL/P)$ space,
so that we can decide whether any query point $s$ is in $A+B$
in $\OO(\min\{M_A,M_B\}\cdot P)$ time.
\end{theorem}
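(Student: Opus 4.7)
The plan is to adapt the offline algorithm of Theorem~\ref{thm-cluster} to the online setting, using the parameter $P$ to trade off between precomputed information and per-query work. Set $\ell=\lceil L^{1/d}\rceil$, assume without loss of generality that $A$ and $B$ are aligned, and let $A^*=\{\CELL(a):a\in A\}$ and $B^*=\{\CELL(b):b\in B\}$, of sizes $O(K_A)$ and $O(K_B)$. For each cell $s^*\in\Z^d$ define
\[
r(s^*) \:=\: |\{(a^*,b^*)\in A^*\times B^* : a^*+b^*=s^*\}|,
\]
so that $\sum_{s^*}r(s^*)=|A^*||B^*|=O(K_AK_B)$. In $O(K_AK_B)$ time I build a dictionary that stores, for each $s^*$ in the support of $r$, the list of its preimage cell pairs, and I classify $s^*$ as \emph{heavy} if $r(s^*)\ge P$ and \emph{rare} otherwise; the set $S^*_{\mathrm{heavy}}$ of heavy cells has size at most $K_AK_B/P$.

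Queries are dispatched by cell class. Given $s$ with $s^*=\CELL(s)$, if $s^*$ is rare I walk its $\le P$ preimage pairs from the dictionary and for each $(a^*,b^*)$ decide in $\OO(\min\{M_A,M_B\})$ time whether $s$ is realized as an element-pair sum from cells $a^*$ and $b^*$, by hashing the smaller side and scanning the other; the total query time is $\OO(\min\{M_A,M_B\}\cdot P)$. If $s^*$ is heavy I perform a single lookup in a precomputed hash table $H$ that holds precisely the elements of $A+B$ whose cell is heavy, so correctness for this branch is immediate: $s\in A+B$ iff $s\in H$.

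To populate $H$, apply the BSG Corollary to $A^*$, $B^*$, $S^*_{\mathrm{heavy}}$ with parameter $\alpha$ to be chosen. The relevant parameters are $N=\Theta(\sqrt{K_AK_B})$ and $tN=|S^*_{\mathrm{heavy}}|\le K_AK_B/P$, so $t=\sqrt{K_AK_B}/P$. This yields $k=O(1/\alpha)$ good pairs with $|A_i^*+B_i^*|=O((1/\alpha)^5 t^3 N)=O((1/\alpha)^5 (K_AK_B)^2/P^3)$ and a remainder $R^*$ of size $O(\alpha K_AK_B)$. For each good pair I apply the FFT Lemma to $\{a\in A:\CELL(a)\in A_i^*\}$ and $\{b\in B:\CELL(b)\in B_i^*\}$, using the superset $T_i=\{s\in\Z^d:\CELL(s)\in A_i^*+B_i^*\}$ of size $O(|A_i^*+B_i^*|\cdot L)$, and insert into $H$ the returned sums that land in heavy cells; since $M_A,M_B\le L$ the input sizes are dominated by $|T_i|$, and summing over the $k$ good pairs the total cost is $\OO((1/\alpha)^6(K_AK_B)^2 L/P^3)$. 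For each $(a^*,b^*)\in R^*$ I enumerate all $O(M_AM_B)$ element pairs and insert their sums (when in heavy cells) into $H$, for total work $O(\alpha K_AK_B M_AM_B)$. The BSG Corollary itself costs $\OO(K_AK_B+(1/\alpha)^5 K_A)$ by Theorem~\ref{runtime-corollary}. Setting
\[
1/\alpha \:=\: \min\bigl\{\,[M_AM_B P^3/(K_AK_B L)]^{1/7},\ (K_BM_AM_B)^{1/6}\,\bigr\}
\]
balances the FFT and remainder costs at $(K_AK_B)^{8/7}(M_AM_B)^{6/7}L^{1/7}/P^{3/7}$ and caps the $(1/\alpha)^5 K_A$ BSG term by $K_A(K_BM_AM_B)^{5/6}$; the space bound $O(K_AK_B+K_AK_B L/P)$ is the sum of the preimage dictionary and $H$, whose size is at most $|S^*_{\mathrm{heavy}}|\cdot L$.

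The main obstacle is correctness of the heavy-cell lookup: I must guarantee $a+b\in H$ for every $(a,b)\in A\times B$ with $\CELL(a+b)$ heavy. This is exactly what property~(i) of the BSG Corollary applied to $S^*_{\mathrm{heavy}}$ delivers in cell space---the pair $(\CELL(a),\CELL(b))$ is either contained in some $A_i^*\times B_i^*$ (so $a+b$ is produced by the exact FFT step) or lies in $R^*$ (so $a+b$ is produced by the brute-force step). The remaining bookkeeping is to verify that the capped regime of $\alpha$ never inflates the remainder past the third term (a short calculation using $1/\alpha_{\mathrm{opt}}>(K_BM_AM_B)^{1/6}$ shows that $\alpha K_AK_BM_AM_B$ equals $K_A(K_BM_AM_B)^{5/6}$ exactly at the cap and the FFT term is then absorbed), and that the FFT-Lemma cost is indeed dominated by $|T_i|$ rather than $|A_i|,|B_i|$, which follows from $M_A,M_B\le L$.
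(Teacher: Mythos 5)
Your proposal is correct and matches the paper's proof essentially step by step: the same partition of cells into low-popularity (rare) and high-popularity (heavy), the same application of the BSG Corollary to $A^*,B^*,S^*_{\text{heavy}}$ with $t=O(\sqrt{K_AK_B}/P)$, the same FFT-Lemma/brute-force dichotomy to populate the heavy-cell table, the same per-query bucket walk for rare cells, and even the same choice of $1/\alpha$. The only cosmetic additions on your side (noting $M_A,M_B\le L$ so the FFT-Lemma cost is dominated by $|T_i|$, and the calculation confirming that the capped regime of $\alpha$ leaves the FFT term dominated by the third stated term) are sound and consistent with what the paper leaves implicit.
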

\begin{proof}
The approach is similar to our previous algorithms, but with
one more idea: dividing into the cases of ``low popularity''
and ``high popularity'' cells.
As before, we use a grid of side length $\ell := \lceil L^{1/d} \rceil$ and assume
that $A$ and $B$ are aligned.

The preprocessing algorithm works as follows:
\begin{description}
\item[Step 0:] Let $A^*=\{\CELL(a):a\in A\}$ and
$B^*=\{\CELL(b):b\in B\}$.
Place each $(a^*,b^*)\in A^*\times B^*$
in the \emph{bucket} for $s^*=a^*+b^*$.  Store all these
buckets.  Define the \emph{popularity}
of $s^*$ to be the number of elements in its
bucket.  Let $S^*$ be the set of all $s^*\in \Z^d$
with popularity $>P$.
Apply the BSG Corollary to $A^*,B^*,S^*$.

Note that $|A^*|=O(K_A)$, $|B^*|=O(K_B)$, and $|S^*|=O(K_AK_B/P)$, because
the total popularity over all possible $s^*$ is
at most $O(K_AK_B)$.  The parameters in the BSG Corollary
are thus $N=\Theta(\sqrt{K_AK_B})$ and $t=|S^*|/N = O(\sqrt{K_AK_B}/P)$.
The buckets can be formed in $\OO(K_AK_B)$ time and space.
This step takes $\OO(K_AK_B + (1/\alpha)^5 K_A)$ expected time
by Theorem~\ref{runtime-corollary}.
\item[Step 1:]
For each $(a^*,b^*)\in R^*$, generate the list
$\{a\in A: \CELL(a)=a^*\} + \{b\in B: \CELL(b)=b^*\}$.

Note that naively each such list can
be generated in $O(M_AM_B)$ time.
Since $|R^*|=O(\alpha N^2)=O(\alpha K_AK_B)$, this step takes total
time $O(\alpha K_AK_BM_AM_B)$.

\item[Step 2:]
For each $i=1\ldots,k$, apply the FFT Lemma to generate the list
$\{a\in A: \CELL(a)\in A_i^*\} + \{b\in B: \CELL(b)\in B_i^*\}$,
which is contained in the
superset $T_i = \{s\in \Z^d: \CELL(s)\in A_i^*+B_i^*\}$.

Note that the size of $A_i^*+B_i^*$ is $O((1/\alpha)^5 t^3 N)$,
and so the size of $T_i$ is $O((1/\alpha)^5 t^3 N L)$.
As $k=O(1/\alpha)$, this step takes expected time
$\OO((1/\alpha)^6 t^3 NL)
= \OO((1/\alpha)^6 (\sqrt{K_AK_B}/P)^3 \sqrt{K_AK_B} L)
= \OO((1/\alpha)^6 (K_AK_B)^2L/P^3)$.
\item[Step 3:]
Store the union ${\cal L}$ of all the lists generated in Steps
1 and~2.
Prune elements not in $\{s\in\Z^d: \CELL(s)\in S^*\}$ from~${\cal L}$.

Note that the pruned list ${\cal L}$ has size at most
$|S^*| L = O(K_AK_B L/P)$.
\end{description}

The query algorithm for a given point $s$ works as follows:
\begin{description}
\item[``Low'' Case:] $\CELL(s)$ has popularity $\le P$.
W.l.o.g., assume $M_A\le M_B$.
We look up the bucket for $\CELL(s)$.  For each $(a^*,b^*)$
in the bucket, we search for some $a\in A$ with $\CELL(a)=a^*$
that satisfies $s-a\in B$.  The search time is $\OO(M_A)$ per bucket entry, for a total of $\OO(M_AP)$.
\item[``High'' Case:] $\CELL(s)$ has popularity $>P$.  We just test $s$ for membership in ${\cal L}$ in
$\OO(1)$ time.
\end{description}

To summarize, the preprocessing time is
$\OO(K_AK_B + \alpha K_AK_BM_AM_B + (1/\alpha)^6 (K_AK_B)^2L/P^3
 + (1/\alpha)^5 K_A)$,
the space usage is $O(K_AK_B + K_AK_BL/P)$,
and the query time is $\OO(M_AP)$.
We set $1/\alpha = \min\{[(M_AM_BP^3)/(K_AK_B L)]^{1/7},\,
  (K_BM_AM_B)^{1/6}\}$.
\end{proof}

\begin{corollary}\label{cor-cluster-online}
Given $(K_A,L)$- and $(K_B,L)$-clustered sets $A,B\subseteq\Z^d$ of size $O(n)$ for a constant $d$ and a parameter $Q$,
we can preprocess in expected time
$$\OO(K_AK_B + n^{12/7} (K_AL)^{1/7}Q^{3/7})$$
into a data structure with $\OO(K_AK_B + K_ALQ)$ space,
so that we can decide whether any query element $s$ is in $A+B$
in $\OO(n/Q)$ time.
\end{corollary}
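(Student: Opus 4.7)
The plan is to derive this corollary from Theorem~\ref{thm-cluster-online} by eliminating the parameters $M_A$ and $M_B$ via an equitable clustering decomposition and choosing $P$ to meet the target query time $\OO(n/Q)$.

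First I would handle the \emph{equitable} case, where $A$ is $(K_A,L,O(n/K_A))$-clustered and $B$ is $(K_B,L,O(n/K_B))$-clustered; assume without loss of generality that $K_A\le K_B$. Substituting $M_A=O(n/K_A)$ and $M_B=O(n/K_B)$ into Theorem~\ref{thm-cluster-online} yields $\min\{M_A,M_B\}=M_B$, so setting $P=K_B/Q$ makes the query time $M_B\cdot P=\OO(n/Q)$ and the space $O(K_AK_B+K_AK_BL/P)=O(K_AK_B+K_ALQ)$, as required. Routine substitution into the three preprocessing terms of Theorem~\ref{thm-cluster-online} shows that the ``FFT'' middle term becomes $K_A^{2/7}K_B^{-1/7}n^{12/7}L^{1/7}Q^{3/7}$, which is bounded by the target $n^{12/7}(K_AL)^{1/7}Q^{3/7}$ using $K_B\ge K_A$; a quick check (reducing to $K_A\le n^2L^6Q^{18}$, which always holds) shows that the ``brute force'' term $K_A^{1/6}n^{5/3}$ is absorbed.

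Next I would reduce the general case to the equitable case by bucketing, exactly as in the proof of Corollary~\ref{cor-cluster}: place each point of $A$ into the $i$-th subset if its surrounding hypercube contains between $2^i$ and $2^{i+1}$ points of $A$. This yields $O(\log n)$ equitably $(\le K_A,L)$-clustered subsets whose union is $A$, and likewise for $B$. I would build a separate online data structure via the equitable case for each of the $O(\log^2 n)$ pairs of subsets, and on query, test membership of $s$ in each of the sumsets. Since the equitable-case bounds are monotone nondecreasing in $K_A$ and $K_B$, this incurs only polylogarithmic overhead, absorbed by the $\OO$ notation.

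The main obstacle I anticipate is bookkeeping in the decomposition: each subset $A^{(i)}$ can contain far fewer than $n$ points, so one must verify that using the per-subset $M$-values (roughly $2^i,2^j$) instead of the global averages $n/K_A,n/K_B$ still telescopes correctly to the stated bound after summing over $O(\log^2 n)$ pairs. A related subtlety is swapping the roles of the two subsets within each pair so that the smaller-$K$ one plays the asymmetric role of ``$A$''; this is fine because $K_{A^{(i)}}\le K_A$ and $K_{B^{(j)}}\le K_B$ by construction, so the resulting bound is no worse than the claimed one.
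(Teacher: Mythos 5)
Your proposal is correct and follows essentially the same route as the paper: it invokes Theorem~\ref{thm-cluster-online} with $M_A=O(n/K_A)$, $M_B=O(n/K_B)$ in the equitable case, picks $P=K_B/Q$ after WLOG $K_A\le K_B$ (equivalent to the paper's $P=\max\{K_A,K_B\}/Q\ge K_A^{1/3}K_B^{2/3}/Q$), and reduces the general case by the same $O(\log n)$-way equitable decomposition used for Corollary~\ref{cor-cluster}. The subtleties you flag (per-subset $M$-values exceeding the global averages, and per-pair role-swapping so the smaller-$K$ subset plays the asymmetric role) do resolve exactly as you anticipate, since for each pair the bound is controlled by $n_i,n_j\le n$, $\min\{K_A^{(i)},K_B^{(j)}\}\le K_A$, and the bound is nondecreasing in these parameters.
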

\begin{proof}
Recall the definition of equitable clustering in the proof of Corollary~\ref{cor-cluster}.
Suppose that $A,B,S$
are equitably  $(K_A,L)$-, $(K_B,L)$-, and $(K_S,L)$-clustered.
Then in Theorem~\ref{thm-cluster-online}, setting $M_A=O(n/K_A)$,
$M_B=O(n/K_B)$, $M_S=O(n/K_S)$, and the parameter
$P=\max\{K_A,K_B\}/Q\ge K_A^{1/3}K_B^{2/3}/Q$,
we get the desired preprocessing time
$\OO(K_AK_B + n^{12/7} (K_AL)^{1/7}Q^{3/7} + n^{5/3}K_A^{1/6})$
(the last term is always dominated by the second),
space $O(K_AK_B + K_ALQ)$,
and query time $\OO(n/Q)$.

We can reduce to the equitable case as in the proof of Corollary~\ref{cor-cluster}, by decomposing each set into
$O(\log n)$ subsets.
\end{proof}

\subsection{Application to the Monotone Case and
Online Histogram Queries}

\begin{corollary}\label{cor-monotone-online}
Given two monotone sets $A,B\subseteq [n]^d$ for a constant~$d$
and a parameter $\delta$,
we can preprocess in $\OO(n^{2-\delta})$ expected time,
so that we can decide whether any query point $s$ is in $A+B$
in $\OO(n^{2/3 + \delta (d+13)/6})$ time.

If only $A$ is monotone, the query time is
$\OO(n^{2/3+\delta (d+6)/3})$.
\end{corollary}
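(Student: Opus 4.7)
The plan is to reduce to Corollary~\ref{cor-cluster-online} by showing that a monotone set is clusterable at any granularity. Specifically, I claim that a monotone set in $[n]^d$ is $(O(n/\ell), \ell^d)$-clustered for every $\ell \ge 1$. To see this, superimpose a grid of side length $\ell$ on $[n]^d$: the nonempty cells visited by a monotone sequence themselves form a monotone path through the $\lceil n/\ell\rceil^d$ cell-grid, and such a monotone path has length at most $d\lceil n/\ell\rceil = O(n/\ell)$ since each step from one cell to the next increments some coordinate by at least one.

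For the case in which both $A$ and $B$ are monotone, I would invoke Corollary~\ref{cor-cluster-online} with $K_A = K_B = O(n/\ell)$ and $L = \ell^d$. The preprocessing time becomes $\OO(n^2/\ell^2 + n^{13/7}\ell^{(d-1)/7}Q^{3/7})$ and the query time is $\OO(n/Q)$. Setting $\ell = n^{\delta/2}$ makes the first term exactly $n^{2-\delta}$; then solving $n^{13/7}\ell^{(d-1)/7}Q^{3/7} = n^{2-\delta}$ for $Q$ yields $Q = n^{1/3 - \delta(d+13)/6}$, giving the claimed query time $\OO(n^{2/3 + \delta(d+13)/6})$.

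For the case in which only $A$ is monotone, I would keep $A$ as $(O(n/\ell), \ell^d)$-clustered and treat $B$ trivially, noting that any set of $O(n)$ points in $\Z^d$ is $(O(n), \ell^d)$-clustered. Identifying the monotone set as ``$A$'' in the corollary (so that the favorable $(K_AL)^{1/7}$ factor in the preprocessing bound uses the smaller cluster count $O(n/\ell)$ rather than $O(n)$), the preprocessing time becomes $\OO(n^2/\ell + n^{13/7}\ell^{(d-1)/7}Q^{3/7})$. Setting $\ell = n^\delta$ balances the first term to $n^{2-\delta}$ and yields $Q = n^{1/3 - \delta(d+6)/3}$, hence query time $\OO(n^{2/3 + \delta(d+6)/3})$.

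The hardest part of executing this plan is not conceptual but bookkeeping: one must (a) verify the clusterability of monotone sets, (b) in the asymmetric case, assign the roles ``$A$'' and ``$B$'' in Corollary~\ref{cor-cluster-online} so that the sharper $K_A$-dependence in the preprocessing bound falls on the monotone (smaller-cluster) set, and (c) perform the two-parameter optimization over $\ell$ and $Q$. No new additive-combinatorial ideas beyond those already packaged in Corollary~\ref{cor-cluster-online} are required; the entire proof is a reduction from monotonicity to clusterability followed by parameter tuning.
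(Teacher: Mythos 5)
Your proposal is correct and is essentially identical to the paper's own proof: both reduce the statement to Corollary~\ref{cor-cluster-online} by noting that a monotone set in $[n]^d$ is $(O(n/\ell),\ell^d)$-clustered for every $\ell$, and both then set $\ell=n^{\delta/2}$ (resp.\ $\ell=n^\delta$) and solve for $Q$; your expression $n^{13/7}\ell^{(d-1)/7}$ is just the simplified form of the paper's $n^{12/7}(n\ell^{d-1})^{1/7}$. The only cosmetic difference is that you spell out the clusterability claim and the role-assignment of $A$ versus $B$, which the paper leaves implicit.
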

\begin{proof}
A monotone set in $[n]^d$ is $(O(n/\ell),\ell^d)$-clustered
for all $\ell$.  We set $K_A,K_B=O(n/\ell)$ and $L=\ell^d$
in Corollary~\ref{cor-cluster-online}
and get $\OO(n^2/\ell^2 + n^{12/7}(n\ell^{d-1})^{1/7}Q^{3/7})$
preprocessing time and $\OO(n/Q)$ query time.
We set $\ell=n^{\delta/2}$ and $Q=n^{1/3-\delta (d+13)/6}$.

If only $A$ is monotone, we get
 $\OO(n^2/\ell + n^{12/7}(n\ell^{d-1})^{1/7}Q^{3/7})$
preprocessing time and $\OO(n/Q)$ query time.
We set $\ell=n^{\delta}$ and $Q=n^{1/3-\delta (d+6)/3}$.
\end{proof}

If we want to balance the preprocessing cost with the
cost of answering $n$ queries, in the case when $A$
and $B$ are both monotone, we can set
$\delta=2/(d+19)$ and obtain $\OO(n^{2-2/(d+19)})$ preprocessing
time and $\OO(n^{1-2/(d+19)})$ query time.
These bounds are (predictably) slightly worse than in Corollary~\ref{cor-monotone-offline} in the offline setting.

\begin{corollary}\label{cor-jumble-online}
Given a string $c_1\cdots c_n\in [d]^*$ for a constant
alphabet size $d$,
we can preprocess in $\OO(n^{2-\delta})$ expected time, so that
we can answer histogram queries, i.e.,
decide whether there exists a substring with
exactly $i_0$ $0$'s, \ldots, and $i_{d-1}$ $(d-1)$'s,
for any query vector $(i_0,\ldots,i_{d-1})$,
in $\OO(n^{2/3 + \delta (d+13)/6})$ time.
\end{corollary}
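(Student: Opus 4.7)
My plan is to reduce the online histogram indexing problem to the online version of 3SUM$^+$ on 2D (actually $d$-dimensional) monotone sets that we already have, namely Corollary~\ref{cor-monotone-online}. This parallels exactly the offline reduction used to prove Corollary~\ref{cor-jumble-offline}, with only the algorithm being invoked changed.

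Concretely, let $a_i\in [n]^d$ record the number of occurrences of each character $0,1,\ldots,d-1$ in the prefix $c_1\cdots c_i$, for $i=0,\ldots,n$. Since each successive prefix adds exactly one character, the sequence $a_0,a_1,\ldots,a_n$ is coordinatewise monotone increasing, so $A:=\{a_0,\ldots,a_n\}\subseteq[n]^d$ is a monotone set of size $n+1$. Let $B$ be the reflection $-A$, translated into $[n]^d$; it is also monotone (decreasing, which by negation of coordinates fits the definition). For any $0\le i<j\le n$, the difference $a_j-a_i$ is exactly the vector of character counts in the substring $c_{i+1}\cdots c_j$, and every length-positive substring arises this way. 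Therefore a query vector $v=(i_0,\ldots,i_{d-1})$ has a matching substring if and only if $v$ lies in $A+B$ (after the same translation applied to $B$).

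We now invoke Corollary~\ref{cor-monotone-online} on the monotone sets $A$ and $B$ with the given parameter $\delta$. This gives preprocessing time $\OO(n^{2-\delta})$ and query time $\OO(n^{2/3+\delta(d+13)/6})$ for testing membership in $A+B$, and we translate each query vector before the membership test. This yields exactly the bounds claimed in the corollary.

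The main work has already been done: there is no real obstacle here, since the reduction is a direct reuse of the prefix-sum trick from Corollary~\ref{cor-jumble-offline} plugged into the online data structure rather than the offline algorithm. The only minor point to verify is that $A$ and $B$ indeed meet the monotonicity hypothesis of Corollary~\ref{cor-monotone-online} (which they do, after the standard coordinate negation), and that the constant alphabet size $d$ is the dimension used there, so the exponent $(d+13)/6$ in the query time matches.
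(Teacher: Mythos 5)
Your proof is correct and takes essentially the same route as the paper: define $A$ via prefix character-count vectors, set $B=-A$ (translated), observe that substring count vectors are exactly $A+B$, and invoke Corollary~\ref{cor-monotone-online} on these two monotone sets. This is precisely the paper's one-line reduction.
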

\begin{proof}
We just apply Corollary~\ref{cor-monotone-online} to the same sets
$A$ and $B$ from the proof of Corollary~\ref{cor-jumble-offline}.
\end{proof}

\begin{remark}
The idea of dividing into the cases of low and high popularity cells has
previously been used by Kociumaka et al.~\cite{KRR13} for histogram indexing,
but they were able to obtain only a space/query-time tradeoff,
namely, a data structure with $\OO(n^{2-\delta})$ space
and $\OO(n^{\delta (2d-1)})$ query time.
Their data structure requires close to quadratic preprocessing time.
Incidentally, we can easily improve their space/query-time tradeoff:
substituting $K_A,K_B=O(n/\ell)$ and $L=\ell^d$
in Corollary~\ref{cor-cluster-online} gives
$\OO(n^2/\ell^2 + n\ell^{d-1}Q)$ space and $\OO(n/Q)$ time.
Setting $\ell=n^{\delta/2}$ and $Q=n^{1-\delta(d+1)/2}$
then gives $\OO(n^{2-\delta})$ space and $\OO(n^{\delta(d+1)/2})$
query time.  All this does not require additive combinatorics
(which helps only in improving the preprocessing time).
\end{remark}

\section{3SUM$^+$ in Preprocessed Universes}\label{sec:preprocessed}

As one more application, we show that 3SUM can be solved in
truly subquadratic time if the universe has been preprocessed.
(Note, though, that the running time below is subquadratic in the
size of the three given universes $A_0,B_0,S_0$, and not
of $A,B,S$.)  This version of the problem was
considered by Bansal and Williams~\cite{BW12}, who only obtained time
bounds of the form $n^2/\textrm{polylog}\,n$.

\begin{theorem}\label{thm-preproc1}
Given sets $A_0,B_0,S_0\subseteq\Z$ of size $O(n)$, we can
preprocess in $\OO(n^2)$ expected time into a data structure
with $O(n^{13/7})$ space, so that given any sets $A\subseteq A_0$,
$B\subseteq B_0$, $S\subseteq S_0$, we can solve 3SUM$^+$
for $A,B,S$ in $\OO(n^{13/7})$ time.
\end{theorem}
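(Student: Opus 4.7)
The plan is to invoke the BSG Corollary once on the universe triple $(A_0, B_0, S_0)$ during preprocessing, cache each resulting structured sumset, and then for each query simply scan the remainder pair-list and run one FFT per structured block. Concretely, I would set $\alpha := n^{-1/7}$ and apply the BSG Corollary to $A_0, B_0, S_0$ with $N = \Theta(n)$ and $t = O(1)$. This produces subsets $A_1^*, \ldots, A_k^* \subseteq A_0$ and $B_1^*, \ldots, B_k^* \subseteq B_0$ with $k = O(1/\alpha) = O(n^{1/7})$ and $|A_i^* + B_i^*| = O((1/\alpha)^5 n) = O(n^{12/7})$ for every $i$, together with a remainder pair-list $R^* \subseteq A_0 \times B_0$ of size at most $\alpha N^2 = O(n^{13/7})$. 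By Theorem~\ref{runtime-corollary}(ii), this construction runs in $\OO(n^2)$ expected time. I would then invoke the randomized FFT Lemma on each $(A_i^*, B_i^*)$ to compute and cache $A_i^* + B_i^*$ explicitly; each call costs $\OO(n^{12/7})$ and there are $O(n^{1/7})$ of them, for a total of $\OO(n^{13/7})$. The stored data (the sets $A_i^*, B_i^*$, the cached sumsets, and $R^*$) totals $O(n^{13/7})$ space.

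For a query $(A, B, S)$, I would first hash $A, B, S$ into membership tables in $O(n)$ time. In the remainder pass, iterate over each $(a, b) \in R^*$ and report $a + b$ whenever $a \in A$, $b \in B$, and $a + b \in S$; this costs $O(|R^*|) = O(n^{13/7})$ time. In the structured pass, for each $i = 1, \ldots, k$ form $A_i := A \cap A_i^*$ and $B_i := B \cap B_i^*$ in $O(|A_i^*| + |B_i^*|) = O(n)$ time, then invoke the FFT Lemma on $(A_i, B_i)$ using the cached $A_i^* + B_i^* \supseteq A_i + B_i$ as the required superset; each call costs $\OO(n^{12/7})$, summing to $\OO(n^{13/7})$ across all $O(n^{1/7})$ blocks. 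Intersecting each resulting sumset with $S$ reports the remaining elements of $(A + B) \cap S$. Correctness follows immediately from property~(i) of the BSG Corollary: any $(a, b) \in A \times B$ with $a + b \in S \subseteq S_0$ is a pair in $A_0 \times B_0$ summing into $S_0$, hence lies in $R^* \cup \bigcup_i (A_i^* \times B_i^*)$ and is therefore witnessed in one of the two passes.

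The main technical obstacle I expect is not the main-line argument but the fact that the FFT Lemma is stated for inputs in a polynomial-size universe $[U]^d$, whereas $A_0, B_0, S_0$ are a priori arbitrary integer subsets; I would handle this by first composing every set with a sum-preserving hash into a polynomial range, exactly as done inside the proof of the FFT Lemma, so the FFT step incurs no blow-up with the magnitude of the input values. The only parameter balancing needed is to equate the remainder cost $\alpha n^2$ with the structured cost $(1/\alpha)^6 n$, which yields $\alpha^7 = 1/n$ and hence the exponent $13/7$; the same choice simultaneously balances the two contributions to the space bound.
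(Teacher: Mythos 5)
Your proof takes essentially the same approach as the paper's: one preprocessing application of the BSG Corollary to $(A_0,B_0,S_0)$ with $1/\alpha=n^{1/7}$, caching the subsets, remainder pair-list, and sumsets, and a query procedure consisting of a brute-force pass over $R$ plus one FFT-Lemma call per structured block $(A_i\cap A,B_i\cap B)$ using the cached $A_i+B_i$ as the required superset. The only wrinkle is your stated use of the randomized FFT Lemma \emph{during preprocessing} to compute $A_i^*+B_i^*$: that lemma presupposes a superset $T$ of the sumset, which you do not have at that point, so as written this step is circular. Fortunately it is also unnecessary --- Theorem~\ref{runtime-corollary}(ii) already produces all the sumsets $A_i+B_i$ within the stated $\OO(n^2)$ bound (via naive pairwise construction, which fits because $\sum_i|A_i||B_i|=\OO(n^2)$), so you should simply cite that rather than re-derive them.
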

\begin{proof}
Our algorithm works as follows:
\begin{description}
\item[Preprocessing:] Apply the BSG
Corollary to $A_0,B_0,S_0$.  Store the resulting subsets $A_1,\ldots,A_k,$ $B_1,\ldots,B_k$ and remainder set $R$, and
also store each $T_i=A_i+B_i$.

The expected preprocessing time is $\OO(n^2)$ by Theorem~\ref{runtime-corollary}.
As $|R|\le \alpha n^2$, $|T_i|=O((1/\alpha)^5 n)$, and $k=O(1/\alpha)$,
the total space usage is $O(\alpha n^2 + (1/\alpha)^6 n)$.
\end{description}

Now, given $A\subseteq A_0$, $B\subseteq B_0$, $S\subseteq S_0$,
we do the following:
\begin{description}
\item[Step 1:] For each $(a,b)\in R$, if $a\in A$, $b\in B$, and
$a+b\in S$, then report $a+b$.
This step takes $O(|R|)=O(\alpha n^2)$ time.

\item[Step 2:] For each $i=1,\ldots,k$, apply the FFT Lemma to generate
$(A_i\cap A) + (B_i\cap B)$, which is contained in the superset $T_i$.  Report those generated elements that are in $S$.
This step takes $\OO((1/\alpha)^6 n)$ expected time.
\end{description}

The total expected time
is $\OO(\alpha n^2 + (1/\alpha)^6 n)$.  We set $1/\alpha = n^{1/7}$
to balance the two terms.
The part after preprocessing can be made deterministic,
after including
an extra $O((1/\alpha)^6 n^{1+\eps})=o(n^2)$ cost for preprocessing
the $T_i$'s for the deterministic version of the FFT Lemma.
\end{proof}

In the above theorem, $S$ is superfluous, since
we may as well take $S=S_0$ when solving 3SUM$^+$.  In the
next theorem, we show that a slightly weaker
subquadratic time holds if the universe for $S$ is not given in advance.

\begin{theorem}\label{thm-preproc2}
Given sets $A_0,B_0\subseteq\Z$ of size $O(n)$, we can
preprocess in $\OO(n^2)$ expected time into a data structure
with $O(n^2)$ space, so that given any sets $A\subseteq A_0$,
$B\subseteq B_0$, $S\subseteq \Z$ of size $O(n)$, we can solve 3SUM$^+$
for $A,B,S$ in $\OO(n^{19/10})$ time
\end{theorem}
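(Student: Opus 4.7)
The plan is to adapt the proof of Theorem~\ref{thm-preproc1}, but with an extra twist to handle the absence of a preprocessed universe $S_0$ for the third set: split every candidate sum into \emph{popular} and \emph{unpopular} based on its number of representations in $A_0 \times B_0$, and apply BSG only to the popular part. Crucially, one cannot apply the BSG Corollary directly with the whole of $A_0+B_0$ as the third set, since this set could have size $\Theta(n^2)$, making the $t^3$ factor catastrophic; the popular/unpopular split is precisely what keeps $t$ under control while still catching every sum in an arbitrary query set $S$.

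\textbf{Preprocessing.} In $\OO(n^2)$ time, compute for every $s \in A_0+B_0$ the list $L(s) = \{(a,b)\in A_0\times B_0 : a+b=s\}$, stored in a hash table keyed by $s$. The total size of these lists is $|A_0|\cdot|B_0| = O(n^2)$. Pick a threshold $\tau$ and let $S^* = \{s : |L(s)| > \tau\}$ be the set of popular sums; then $|S^*| \le n^2/\tau$. Now apply the BSG Corollary to $A_0,B_0,S^*$ with a parameter $\alpha$ chosen below, so that $N=\Theta(n)$ and $t=O(n/\tau)$; by Theorem~\ref{runtime-corollary}, this takes $\OO(n^2)$ expected time. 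Store the resulting $A_1,\ldots,A_k,\,B_1,\ldots,B_k$, the remainder set $R^*$, and each sumset $T_i = A_i+B_i$.

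\textbf{Query.} Given $A\subseteq A_0$, $B\subseteq B_0$, $S\subseteq\Z$ of size $O(n)$, split $S = S_u \cup S_p$ where $S_p = S\cap S^*$. (a) For each $s\in S_u$, look up $L(s)$ (of size $\le\tau$) and test each $(a_0,b_0)\in L(s)$ for $a_0\in A$, $b_0\in B$; total time $O(n\tau)$. (b) For each $(a,b)\in R^*$, test $a\in A$, $b\in B$, $a+b\in S$; time $O(|R^*|)=O(\alpha n^2)$. (c) For each $i$, apply the FFT Lemma to compute $(A_i\cap A)+(B_i\cap B)$ with the stored superset $T_i$, and report those outputs lying in $S$; since $|T_i|=O((1/\alpha)^5 t^3 n)$ and $k=O(1/\alpha)$, this costs $\OO((1/\alpha)^6 t^3 n)$ in total. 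Correctness: if $(a,b)\in A\times B$ satisfies $a+b=s\in S$, then either $s\notin S^*$ (caught by (a) via $L(s)$) or $s\in S^*$ so $(a,b)\in R^* \cup \bigcup_i (A_i\times B_i)$ by the BSG Corollary (caught by (b) or (c)).

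\textbf{Parameter tuning.} Set $\tau=\alpha n$, so that $n\tau=\alpha n^2$ and $t = O(1/\alpha)$; the query time becomes
\[ \OO\bigl(\alpha n^2 + (1/\alpha)^9 n\bigr). \]
Balancing gives $\alpha = n^{-1/10}$ and total query time $\OO(n^{19/10})$. The space is dominated by the preimage hash ($O(n^2)$); the BSG structures take only $O(\alpha n^2 + (1/\alpha)^6 t^3 n) = O(n^{19/10})$ additional space. As in Theorem~\ref{thm-preproc1}, the query phase can be derandomized by preprocessing each $T_i$ for the deterministic FFT Lemma at a subquadratic extra cost, while the preprocessing of the BSG decomposition remains randomized Las Vegas. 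The main obstacle is precisely the calibration of $\tau$ against the BSG parameters: since $t$ grows like $n/\tau$ and the FFT stage costs $(1/\alpha)^6 t^3 n$, the choice $\tau = \alpha n$ is what simultaneously flattens the $t$-dependence and balances the unpopular-sum scan against the BSG remainder.
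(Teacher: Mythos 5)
Your proof is correct and follows essentially the same route as the paper's: build preimage buckets $L(s)$, threshold by popularity to get a small candidate third set $S^*$, apply the BSG Corollary to $A_0,B_0,S^*$, and at query time handle unpopular $s$ by scanning $L(s)$ and popular $s$ via the remainder set and FFT on each $A_i,B_i$. Your parameter $\tau$ is exactly the paper's $n/t$, and your choice $\tau=\alpha n$ (i.e., $t=1/\alpha$) reproduces the paper's balancing step that yields $\alpha=n^{-1/10}$ and $\OO(n^{19/10})$ query time with $O(n^2)$ preprocessing and space.
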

\begin{proof}
We incorporate one more idea: dividing into the cases of
low and high popularity.

\begin{description}
\item[Preprocessing:]
Place each $(a,b)\in A_0\times B_0$ in the \emph{bucket} for
$s=a+b$.  Store all these buckets.
Define the \emph{popularity}
of $s$ to be the size of its bucket.
Let $S_0$ be the set of all
elements $s\in\Z$ with popularity $>n/t$.  Apply the BSG
Corollary to $A_0,B_0,S_0$, store the resulting subsets $A_1,\ldots,A_k,B_1,\ldots,B_k$ and remainder set $R$, and
also store each $T_i=A_i+B_i$.

Note that $|S_0|=O(tn)$,
because the total popularity is $O(n^2)$.
The buckets can be formed in $O(n^2)$ time and space.
The expected preprocessing time is $\OO(n^2)$
by Theorem~\ref{runtime-corollary}.
As $|R|\le \alpha n^2$, $|T_i|=O((1/\alpha)^5 t^3n)$, and $k=O(1/\alpha)$,
the total space usage is $O(n^2 + (1/\alpha)^6 t^3 n)$.
\end{description}

Now, given $A\subseteq A_0$, $B\subseteq B_0$, $S\in\Z$ of size $O(n)$, we do the following:
\begin{description}
\item[Step 0:] For each $s\in S$ of popularity $\le n/t$, we look
up the bucket for $s$, and report $s$ if some $(a,b)$ in the bucket
has $a\in A$ and $b\in B$.  The search time is $O(n/t)$ per
element in $S$, for a total of $O(n^2/t)$.

\item[Step 1:] For each $(a,b)\in R$, if $a\in A$, $b\in B$, and
$a+b\in S$, then report $a+b$.
This step takes $O(|R|)=O(\alpha n^2)$ time.

\item[Step 2:] For each $i=1,\ldots,k$, apply the FFT Lemma to generate
$(A_i\cap A) + (B_i\cap B)$, which is contained in the superset $T_i$.  Report those generated elements that are in $S$.
This step takes $\OO((1/\alpha)^6 t^3 n)$ expected time.
\end{description}

The total expected time
is $\OO(n^2/t + \alpha n^2 + (1/\alpha)^6 t^3n)$.  We set
$t=1/\alpha = n^{1/10}$
to balance the three terms.  Again the part after preprocessing
can be made deterministic.
\end{proof}

\begin{remark}
The above theorem holds for real numbers as well, if we assume an unconventional model of computation for the preprocessing algorithm.
To reduce the real case
to the integer case, first sort
$A_0+B_0$ and compute
the smallest distance $\delta$ between any two elements in $A_0+B_0$ in $\OO(n^2)$ time.  Divide the real line into grid intervals
of length $\delta/4$.  Without loss of generality, assume that
$A_0$ and $B_0$ are aligned.   Replace each real number $x$
in $A_0$ and $B_0$ with the integer $f(x)=\lfloor x/(\delta/4)\rfloor$.  Then for any $a\in A_0$, $b\in B_0$, and $s\in A_0+B_0$, $s=a+b$ iff $f(s)=f(a)+f(b)$.  This reduction however requires the floor function and
working with potentially very large integers afterwards.
\end{remark}

\begin{corollary}
Given a vertex-weighted graph $G=(V,E)$ with $n$ vertices, we can decide
whether there exists a (not necessarily induced) copy of
$K_{1,3}$ (the star with four nodes) that has total
weight exactly equal to a given value $W$ in $\OO(n^{2.9})$ time.
\end{corollary}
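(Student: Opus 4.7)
The plan is to reduce the problem to $n$ instances of 3SUM$^+$ in a preprocessed universe and invoke Theorem~\ref{thm-preproc1}. A copy of $K_{1,3}$ of total weight $W$ consists of a center $v$ and three distinct leaves $u_1,u_2,u_3\in N(v)$ satisfying $w(u_1)+w(u_2)+w(u_3)=W-w(v)$, which for each fixed $v$ is exactly a 3SUM question over the multiset of weights in $N(v)$ with target $W-w(v)$.

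First, I would preprocess the universe $A_0=B_0=S_0=\{w(u):u\in V\}$ in $\OO(n^2)$ expected time using Theorem~\ref{thm-preproc1}. Then, for each vertex $v\in V$ I form $A=B=\{w(u):u\in N(v)\}\subseteq A_0$ and $S=\{W-w(v)-w(u):u\in N(v)\}\subseteq S_0$, and launch the preprocessed 3SUM$^+$ query on $(A,B,S)$. A reported hit $s=w(u_1)+w(u_2)\in A+B$ with $s\in S$ arising from some $u_3\in N(v)$ yields three neighbors of $v$ with weight sum $W-w(v)$, i.e.\ a candidate $K_{1,3}$. Summing over the $n$ choices of $v$, the total running time is
\[
\OO(n^2)\:+\:n\cdot \OO(n^{13/7})\:=\:\OO(n^{20/7})\:=\:\OO(n^{2.858}),
\]
which is within the claimed $\OO(n^{2.9})$ bound.

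The main obstacle is the distinctness requirement: the four vertices $v,u_1,u_2,u_3$ must be pairwise distinct, whereas the 3SUM query operates on weights and may return a hit in which the underlying vertices coincide or repeat (the center-leaf collision cannot occur in a simple graph, since $v\notin N(v)$, but leaf-leaf collisions can). I would handle this by enumerating the $O(1)$ ``degenerate patterns'' separately: for pattern $u_1=u_2=u_3$ one checks whether $3w(u)=W-w(v)$ for some $u\in N(v)$, taking $O(\deg(v))$ time per center and $O(n^2)$ in total; for pattern $u_1=u_2\ne u_3$ the condition $2w(u)+w(u')=W-w(v)$ is a trivial 2SUM solvable in $O(\deg(v))$ time after sorting, again $O(n^2)$ overall. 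Subtracting these ``bad'' configurations from the 3SUM witnesses (or, equivalently, re-running the subset query restricted to vertices outside the degenerate set) leaves only the genuinely distinct $K_{1,3}$ copies. All of this bookkeeping adds only lower-order terms, so the bound is dominated by the $n$ preprocessed 3SUM$^+$ queries.
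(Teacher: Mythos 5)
There is a genuine gap in the reduction: you claim $S=\{W-w(v)-w(u):u\in N(v)\}\subseteq S_0=\{w(u):u\in V\}$, but this inclusion has no reason to hold. The values $W-w(v)-w(u)$ are arbitrary integers determined by $W$ and the two vertex weights, and they will not in general coincide with vertex weights themselves. Consequently Theorem~\ref{thm-preproc1}, which requires $S\subseteq S_0$ for a preprocessed universe $S_0$, cannot be invoked, and your per-query bound $\OO(n^{13/7})$ and total $\OO(n^{20/7})$ do not follow. One also cannot simply enlarge $S_0$ to contain all possible targets $W-w(v)-w(u)$, since that set has size $\Theta(n^2)$.

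The paper instead invokes Theorem~\ref{thm-preproc2}, where only $A_0$ and $B_0$ are preprocessed and the query set $S$ may be an arbitrary subset of $\Z$ of size $O(n)$; this costs $\OO(n^{19/10})$ per query, giving exactly $n\cdot\OO(n^{19/10})=\OO(n^{2.9})$ in total, which is where the stated exponent $2.9$ comes from. Your reduction is otherwise correct in spirit (center $v$, three leaves in $N(v)$, 3SUM on neighbor weights with target $W-w(v)$), and your handling of coincident leaves by separately enumerating the $O(1)$ degenerate patterns is a legitimate alternative to the paper's trick of appending a few bits to each number to forbid reuse. But the invocation of the wrong preprocessing theorem is the crux, and it explains why you obtained a bound ($2.857\ldots$) that is actually \emph{better} than the one the corollary claims: if $S\subseteq S_0$ really held, the result would indeed improve, but it does not.
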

\begin{proof}
Let $w(v)$ denote the weight of $v$.
Preprocess $A_0=B_0=\{w(v): v\in V\}$.
Then for each $u\in V$, we solve 3SUM for
$A=B=\{w(v): v\in N_G(u)\}\subseteq A_0=B_0$ and $S=W-A-w(u)$.
(We can exclude using a number twice or thrice in solving 3SUM by
a standard trick of appending each number with two or three extra bits.)
The $n$ instances of 3SUM can be solved in $\OO(n^{1.9})$
time each, after preprocessing in $\OO(n^2)$ expected time.
(The result can be made deterministic, as we can
afford to switch to the slower
$O((1/\alpha)^{0.4651}n^{2.3729})$-time preprocessing algorithm.)
\end{proof}

The above ``application'' is admittedly contrived but demonstrates
the potential usefulness of solving 3SUM in preprocessed universes.
(Vassilevska Williams and Williams~\cite{VW09} had a more general
result for counting
the number of copies of any constant-size subgraph
with a prescribed total vertex weight, but their bound is not
subcubic for 4-vertex subgraphs.)

For another application, we can reduce 3SUM for $(K,L,M)$-clustered
sets to preprocessing a universe of size $O(K)$
and solving $O(L^3)$ 3SUM instances.  This provides another
explanation why subquadratic algorithms are possible for certain
parameters of clusterability, although the time bounds obtained by
this indirect approach would not be as good as those from Section~\ref{sec:clustered}.

\section{Proof and Time Complexity of the BSG Theorem/Corollary}\label{sec:Details}

\newcommand{\Deg}{\textrm{deg}}
\newcommand{\cdeg}{\textrm{cdeg}}
\newcommand{\BAD}{\textsc{bad}}
\newcommand{\Ex}{\mathbb{E}}

In this section, we review one proof of the
Balog--Szemer\'{e}di--Gowers Theorem, in order to analyze its
construction time and derive the time bound for the BSG Corollary as given by Theorem~\ref{runtime-corollary}, which has been
used in all our applications.
The proof of BSG theorem we will present is due to
Balog~\cite{Balog07} and independently
 Sudakov et al.~\cite{SSV94}, with minor changes
to make it more amenable to algorithmic analysis.  The proof is
based on a combinatorial lemma purely about graphs (Balog and
Sudakov et al.\ gave essentially identical proofs of this
graph lemma, but the latter described a simpler reduction of the
theorem to the lemma).  The entire proof is short (see Sections \ref{sec-graph} and~\ref{sec-bsg-proof}), uses only
elementary arguments, but has intricate details.

To obtain the
best running time, we incorporate a number of
nontrivial additional ideas.  For our randomized time bound,
we need sampling tricks found in sublinear algorithms.
For our deterministic time bound, we need efficient dynamic updates
of matrix products.

\subsection{A Graph Lemma}\label{sec-graph}

\begin{lemma} {\bf (Graph Lemma)}\ \ Given a bipartite graph $G\subseteq A \times B$, with $|G| \geq \alpha |A||B|$, there exist $A'\subseteq A$ and $B'\subseteq B$ such that

\begin{itemize}
\item[\rm (i)]  for every $a' \in A'$, $b' \in B'$, there are $\Omega(\alpha^5 |A||B|)$ length-$3$
       paths from $a'$ to $b'$, and
\item[\rm (ii)] $|G\cap (A'\times B')| \:\geq\: \Omega(\alpha |A'||B|) \:\geq\: \Omega(\alpha^2 |A||B|)$.
\end{itemize}
\end{lemma}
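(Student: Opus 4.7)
The plan is to identify a small set of ``bad'' edges in $G$, delete them, and then extract $A'$ and $B'$ from the cleaned graph by averaging. Two thresholds drive everything: a codegree threshold $\beta = \Theta(\alpha^3)$ and a sparsity threshold $\gamma = \Theta(\alpha)$. The exponent $5$ in (i) ultimately emerges as (friendly-fraction)$\,\cdot\,\beta\,\cdot\,|A'|/|A|\,\cdot\,|A||B| = \Theta(\alpha)\cdot\Theta(\alpha^3)\cdot\Theta(\alpha)\cdot|A||B|$.

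First I would run a standard low-degree cleanup: iteratively delete vertices $a\in A$ with $\Deg_G(a) < \alpha|B|/4$ and $b\in B$ with $\Deg_G(b) < \alpha|A|/4$, losing at most $\alpha|A||B|/2$ edges. This produces a subgraph $\tilde G\subseteq G$ with $|\tilde G| \geq \alpha|A||B|/2$ in which every surviving vertex has large degree in $\tilde G$. Then call a pair $(a_1,a_2)\in A\times A$ \emph{friendly} if $\cdeg(a_1,a_2) := |N_G(a_1)\cap N_G(a_2)| \geq \beta|B|$, and call an edge $(a,b)\in\tilde G$ \emph{bad} if more than a $\gamma$-fraction of $N_{\tilde G}(b)$ is unfriendly with $a$.

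To bound the bad edges I would double-count triples $(a_1,a_2,b)$ with $a_1,a_2\in N_{\tilde G}(b)$ and $(a_1,a_2)$ unfriendly. Counting via pairs, each unfriendly pair has $<\beta|B|$ common $G$-neighbors, so the total is $\leq \beta|A|^2|B|$. Counting via bad edges, each bad edge $(a,b)$ forces at least $\gamma\Deg_{\tilde G}(b)\geq \gamma\alpha|A|/4$ such triples through it, so with $\beta \leq \alpha^2\gamma/32$ the bad-edge set has size $\leq |\tilde G|/4$. Let $\tilde G' := \tilde G\setminus\{\text{bad edges}\}$, so $|\tilde G'| \geq 3|\tilde G|/4$. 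A short Markov-style argument then produces a pivot $b_0\in B$ with $\Deg_{\tilde G'}(b_0) = \Omega(\alpha|A|)$ and $\Deg_{\tilde G}(b_0) \leq 2\Deg_{\tilde G'}(b_0)$; set $A' := N_{\tilde G'}(b_0)$, so $|A'| = \Omega(\alpha|A|)$. Since $(a',b_0)$ is not bad for each $a'\in A'$, at most $\gamma\Deg_{\tilde G}(b_0)\leq 2\gamma|A'|$ vertices of $A'$ are unfriendly with $a'$, which by tuning $\gamma = \Theta(\alpha)$ with a small enough constant is $\leq (\alpha/32)|A'|$.

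Finally I would define $B' := \{b\in B : |N_G(b)\cap A'| \geq (\alpha/16)|A'|\}$. Summing $\Deg_G(a)\geq \alpha|B|/4$ over $a\in A'$ (valid because $A'$ lies in the cleanup-survivor set) and subtracting the contribution from $b\notin B'$ yields $|G\cap (A'\times B')| = \Omega(\alpha|A'||B|)$, which is (ii). For (i), fix $a'\in A'$ and $b'\in B'$: there are $\geq (\alpha/16)|A'|$ choices of $a''\in N_G(b')\cap A'$, of which at most $(\alpha/32)|A'|$ are unfriendly with $a'$, leaving $\Omega(\alpha|A'|)=\Omega(\alpha^2|A|)$ friendly $a''$; each friendly $a''$ supplies $\cdeg(a',a'')\geq \beta|B| = \Omega(\alpha^3|B|)$ midpoints $b''$, giving $\Omega(\alpha^5|A||B|)$ length-$3$ paths $a'\!-\!b''\!-\!a''\!-\!b'$. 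The main obstacle is the joint calibration of $\beta$ and $\gamma$: the bad-edge bound requires $\beta\lesssim \alpha^2\gamma$ while the friendliness-to-$a'$ requirement requires $\gamma\lesssim\alpha$, and the two constraints must be balanced so that $\beta=\Theta(\alpha^3)$ and $\gamma=\Theta(\alpha)$ simultaneously keep bad-edge loss under control, keep $A'$ large, and still leave $\Omega(\alpha^2|A|)$ friendly-and-adjacent $a''$ for every query pair.
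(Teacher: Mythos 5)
Your argument is correct and establishes the Graph Lemma with the same $\alpha^5$ dependency, but it takes a genuinely different route from the paper's proof. The paper picks a random pivot $b^*\in B$ \emph{first}, sets $A^*=A_0\cap N_G(b^*)$, then defines the bad set $\BAD^*$ on $A^*\times A^*$ (pairs of low codegree) \emph{after the fact}, and uses a second-moment trick (a cleverly chosen random variable $Z=\alpha^2|A|(|A^*|-\alpha|A|/4)-256|\BAD^*|$) to show that with probability $\Omega(\alpha)$ the pivot simultaneously gives a large $A^*$ and a sparse $\BAD^*$; $A'$ is then $A^*$ minus its high-$\BAD^*$-degree vertices. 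You instead make badness a \emph{global edge} property of $G$: you clean up low-degree vertices on both sides (the paper only cleans the $A$-side via $A_0$, but you need both sides to make your triple-count work), define a bad edge $(a,b)$ as one whose $b$-neighborhood is unfriendly to $a$ in more than a $\gamma$-fraction, bound the bad edges by a single double count, delete them all, and only then select a pivot $b_0$ by a deterministic Markov argument over the cleaned graph; $A'$ is then exactly $N_{\tilde G'}(b_0)$, with no further pruning needed. Both are incarnations of dependent random choice with the same threshold calibration $\beta=\Theta(\alpha^3)$, $\gamma=\Theta(\alpha)$, but the paper's formulation is deliberately algorithmic (it is stated as a randomized procedure whose iteration count and per-iteration cost are analyzed in Section~\ref{sec-graph-time}, and the $\BAD^*$ pruning localizes the expensive codegree computations to the small set $A^*$), while yours is a cleaner purely combinatorial existence proof in which the pivot can be chosen deterministically once the global bad-edge set has been excised. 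If you wanted to recover the paper's near-quadratic deterministic or near-linear randomized construction times, you would have to re-examine your global bad-edge step, since naively identifying all bad edges requires codegree information for $\Theta(|A|^2)$ pairs rather than just $\Theta(|A^*|^2)$ of them.
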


\begin{proof}

Let $N_G(v)$ denote the neighborhood of $v$ in graph $G$.  Let $\Deg_G(v) = |N_G(v)|$.
Let cdeg$_G(u,v) = |N_G(u)\cap N_G(v)|$
(the number of common neighbors of $u$ and $v$, or equivalently,
the number of length-$2$ paths from $u$ to $v$).
The existence of $A'$ and $B'$ is shown via the following
concise but clever algorithm.

\paragraph{Algorithm:}
\

\begin{quote}
\begin{algorithmic}[1]

 \State  $A_0 = \{a \in A: \Deg_G(a) \geq \alpha |B|/2\}$
 \Repeat
 \State pick a random $b^* \in B$
 \State $A^* = A_0\cap N_G(b^*)$
 \State $\BAD^* = \{(a,a') \in A^*\times A^*:$ cdeg$_G(a,a') \leq \alpha^3 |B|/2048\}$
 \Until{$|A^*| \geq \alpha |A|/4$ and $|\BAD^*| \leq \alpha^2 |A^*||A|/256$}
 \State $A' = \{a \in A^*: \Deg_{\BAD^*}(a) \leq \alpha^2 |A|/64\}$
 \State $B' = \{b \in B: \Deg_{G\cap (A'\times B)}(b) \geq \alpha |A'|/4\}$
 \end{algorithmic}
\end{quote}

\paragraph{Correctness of (i):}

Line~6 guarantees that the undirected graph $\BAD^*$ with vertex
set $A^*$ has
at most $\alpha^2 |A^*||A|/256$ edges and thus average degree at most
$\alpha^2 |A|/128$.
From line~7 it follows that $|A'| \geq |A^*|/2 \geq \alpha |A|/8$.

Fix $a' \in A'$ and $b' \in B'$.
By line~8, there are $\geq \alpha|A'|/4 \geq \alpha^2 |A|/32$ vertices $a \in A'$ that are
adjacent to $b'$.  By line~7, all but $\leq \alpha^2 |A|/64$ such vertices $a$ satisfy $(a,a')\not\in \BAD^*$.
By line~5, for each such $a$, there are $\geq  \alpha^3 |B|/2048$ length-$2$ paths from $a'$ to $a$.
Thus, there are $\geq (\alpha^2 |A|/64)\cdot(\alpha^3 |B|/2048) = \Omega(\alpha^5 |A||B|)$
length-$3$ paths from $a'$ to $b'$.

\paragraph{Correctness of (ii):}

Since $A' \subseteq A_0$, by line 1, deg$_G(a') \geq  \alpha |B|/2$ for every $a' \in A'$ and hence $|G\cap (A'\times B)| \geq  |A'|\cdot (\alpha |B|/2)$.
From line $8$  it follows that $|G\cap (A'\times B')| \:\geq\:  |G\cap (A'\times B)| - (\alpha |A'|/4)\cdot |B|
                          \:\geq\:  \alpha |A'||B|/4 \:\geq\:  \alpha^2 |A||B|/32$.

\paragraph{The process ends w.h.p.:}

Line $1$ implies $|G\cap (A_0\times B)| \geq |G| - (\alpha |B|/2)\cdot |A|  \geq  \alpha |A||B|/2$. From line $4$ it follows that
$$\Ex_{b^*}[|A^*|] \:=\: \frac{1}{|B|} \sum_{b^* \in B} |A_0\cap N_G(b^*)| \:=\: \frac{1}{|B|} |G\cap (A_0\times B)| \:\geq\:  \alpha |A|/2.$$
Line $5$ then implies
\begin{eqnarray*}
\Ex_{b^*}[|\BAD^*|] &=& \!\!\!\!\sum_{\substack{a,a' \in A_0:\\\cdeg_G(a,a') \leq  \alpha^3 |B|/2048}}\!\!\!\!\!\!\!\!
                \Pr_{b^*}[a,a' \in N_G(b^*)]
            \ = \!\!\!\!\sum_{\substack{a,a' \in A_0:\\ \cdeg_G(a,a') \leq  \alpha^3 |B|/2048}}\!\!\!\!\!\!\!\!
                \frac{\cdeg_G(a,a')}{|B|}\\
            &\leq&  |A|^2 \cdot \frac{\alpha^3 |B|/2048}{|B|}\ =\ \alpha^3 |A|^2/2048.
\end{eqnarray*}

Define $Z = \alpha^2 |A| (|A^*| - \alpha |A|/4)  - 256\,|\BAD^*|$.
Then $\Ex_{b^*}[Z] \:\geq\:  \alpha^2 |A| (\alpha |A|/4) - \alpha^3 |A|^2/8
\:=\: \alpha^3 |A|^2/8$.
On the other hand, since we always have $Z \leq  \alpha^2 |A|^2$,
$\Ex_{b^*}[Z] \leq  \alpha^2 |A|^2 \Pr_{b^*}[Z > 0]$.
Thus, $\Pr_{b^*}[Z > 0] \:\geq\: (\alpha^3 |A|^2/8)/(\alpha^2 |A|^2) \:=\: \Omega(\alpha)$.

When $Z > 0$, we have simultaneously
$|A^*|\ge\alpha |A|/4$ and $|\BAD^*|\le \alpha^2 |A^*| |A|/256$.
Thus, the number of iterations in lines 2--6 is $\OO(1/\alpha)$ w.h.p.
\end{proof}


\subsection{Time Complexity of the Graph Lemma}\label{sec-graph-time}

\begin{lemma}\label{lem-graph-runtime}
In the Graph Lemma, $A'$ and $B'$ can be constructed by

\begin{itemize}
\item
a deterministic algorithm in $O(\MM(|A|,|A|,|B|))$ time;
\item
a randomized Monte Carlo algorithm in $\OO((1/\alpha)^5 |A'| + (1/\alpha)|B| + (1/\alpha)^6)$ time
(which is correct w.h.p.), given the adjacency matrix of $G$.
\end{itemize}
\end{lemma}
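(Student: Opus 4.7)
\quad Both algorithms follow the algorithmic skeleton in the Graph Lemma proof (identify $A_0$, find a good $b^*$, form $A^*$ and $\BAD^*$, then extract $A'$ and $B'$), and differ only in how the underlying degrees, co-degrees, and set sizes are computed.

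\emph{Deterministic bound.}\quad The plan is to batch everything into a constant number of fast matrix products over the $|A|\times|B|$ adjacency matrix $M$ of $G$. First compute the co-degree matrix $C=MM^{\!\top}$ in $O(\MM(|A|,|B|,|A|))=O(\MM(|A|,|A|,|B|))$ time; row sums of $M$ reveal $A_0$, and thresholding $C$ gives the boolean matrix $F[a,a']=\mathbf{1}[C[a,a']\le\alpha^3|B|/2048]$ marking the pairs that would belong to $\BAD^*$. To derandomize the loop over $b^*$, process all candidates simultaneously: the sizes $|A^*(b^*)|$ for every $b^*\in B$ are tabulated from $M$ in $O(|G|)$ time, while $|\BAD^*(b^*)|=v_{b^*}^{\!\top}Fv_{b^*}$, where $v_{b^*}$ is the indicator vector of $A^*(b^*)$. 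Stacking the $v_{b^*}$ as columns of an $|A|\times|B|$ matrix $V$ and computing $FV$ with one more matrix product in $O(\MM(|A|,|A|,|B|))$ lets the quadratic forms be read off in $O(|A||B|)$ total. The probabilistic analysis in the Graph Lemma guarantees some $b^*$ with $Z(b^*)>0$; pick it. Forming $A'$ is an $O(|A|^2)$ scan of $F$ restricted to $A^*\times A^*$, and forming $B'$ is an $O(|A||B|)$ scan of $M$ restricted to $A'\times B$. All steps fit in $O(\MM(|A|,|A|,|B|))$.

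\emph{Randomized bound.}\quad Replace every exact degree, co-degree, and cardinality by a Chernoff-bounded sampling estimator. Two primitives suffice: (a) classify $\Deg_G(a)$ against the threshold $\alpha|B|/2$ using $\OO(1/\alpha)$ random column probes of $M$, and (b) classify $\cdeg_G(a,a')$ against the threshold $\alpha^3|B|/2048$ using $\OO(1/\alpha^3)$ random column probes. One trial for $b^*$ samples $\OO(1/\alpha^2)$ random pairs from $A\times A$, applies (a) and (b) to each, and estimates the densities of $A^*$ and $\BAD^*$ needed by the two acceptance tests, at cost $\OO(1/\alpha^5)$; an expected $O(1/\alpha)$ trials then suffice, contributing $\OO(1/\alpha^6)$. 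With $b^*$ fixed, each $a\in A$ is processed in $\OO(1/\alpha^4)$: first test $a\in A^*$ via one call to (a) and a single matrix query, then estimate $\Deg_{\BAD^*}(a)$ from $\OO(1/\alpha)$ samples $a'\in A^*$ (drawn by rejection sampling from $A$ at overhead $O(1/\alpha)$ per successful draw), each evaluated with one call to (b). The total is $\OO((1/\alpha)^4|A|)=\OO((1/\alpha)^5|A'|)$, using the bound $|A'|=\Omega(\alpha|A|)$ already shown in the Graph Lemma proof. Finally $B'$ is produced by scanning $b\in B$ and, for each $b$, using $\OO(1/\alpha)$ random samples from $A'$ to classify its $A'$-degree against $\alpha|A'|/4$, for $\OO((1/\alpha)|B|)$.

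\emph{Where the main work lies.}\quad The crux of the randomized analysis is verifying that the noisy estimates still yield sets $A',B'$ satisfying conclusions (i) and (ii) of the Graph Lemma. This is handled by widening each decision threshold by a constant factor so that every true positive and true negative is classified correctly with high probability, and then union-bounding over the polynomially many estimates performed; the hidden constants in (i)--(ii) are affected only up to a constant factor. For the deterministic version, the delicate step is evaluating $|\BAD^*(b^*)|$ for all $b^*$ by the batched product $FV$; a straightforward per-$b^*$ evaluation would cost $\Theta(|B|\cdot|A|^2)$ and break the target bound.
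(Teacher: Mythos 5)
Your proposal is correct and follows essentially the same route as the paper: for the deterministic bound you batch the co-degree computation as $MM^{\!\top}$ and the per-$b^*$ bad-pair counts as a second matrix product over a thresholded matrix (the paper's $X_1Y_1$ and $X_2Y_2$), and for the randomized bound you replace each degree/co-degree/size test by a Chernoff-bounded sampling estimator with threshold widening, exactly as the paper does with the samples $R_1,R_5,R_6,R_7,R_8$. The small differences (testing $A_0$ and $A^*$ membership on the fly instead of precomputing them, and the precise sample sizes used to hit the $\alpha^2$-scale thresholds) are bookkeeping variations that land on the same $O(\mathcal{M}(|A|,|A|,|B|))$ and $\widetilde O((1/\alpha)^5|A'| + (1/\alpha)|B| + (1/\alpha)^6)$ bounds.
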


\begin{proof}\

\paragraph{Deterministic time analysis:}
An obvious deterministic implementation would try
all $b^* \in B$ in lines 2--5 until we find one that satisfies
the test in line~6.
For line~4, we can compute $|A^*|$ for all $b^*$ in
$O(|A||B|)$ total time.
For line~5, we can compute $|\BAD^*|$ for all $b^*$ as follows:
First precompute cdeg$_G(a,a')$ for all $a,a' \in A$; this takes
$\MM(|A|,|B|,|A|)$ time by
computing a matrix product $X_1Y_1$ where
$X_1$ is the adjacency matrix of $G$
and $Y_1$ is its transpose.
Let $\BAD_0=\{(a,a')\in A_0\times A_0: \cdeg_G(a,a') \leq  \alpha^3 |B|/2048\}$.
For all $a\in A_0,b^*\in B$, precompute count$(a,b^*) = $
the number of $a'$ with $aa'\in\BAD_0$ and $a'\in N_G(b^*)$;
this takes $\MM(|A_0|,|A_0|,|B|)$ time by
computing a matrix product $X_2Y_2$ where
$X_2$ is the adjacency matrix of $\BAD_0$ and
$Y_2$ is the adjacency matrix of $G\cap (A_0\times B)$.
Then for all $b^*$, we can compute $|\BAD^*|$ by summing
count$(a,b^*)$ over all $a\in N_G(b^*)$.
Lastly, lines 6--7 take $O(|A||B|)$ time.
The total time is $O(\MM(|A|,|B|,|A|))$, since $\MM(\cdot,\cdot,\cdot)$
is known to be invariant under permutation of its three arguments.
This is subcubic in $|A|+|B|$.

\paragraph{Randomized time analysis:}
With Monte Carlo randomization, we now show how to improve the running time significantly to near linear in $|A|+|B|$, which is
\emph{sublinear} in the size of the input adjacency matrix.
To achieve sublinear complexity, we modify the algorithm where $\deg(\cdot)$ and $\cdeg(\cdot)$
are replaced by estimates obtained by \emph{random sampling}.
Let $\delta>0$ be a sufficiently small constant and $N=\sqrt{|A||B|}$.

The following fact will be useful: given a random sample $R\subseteq U$ of size $(1/\delta)^2(1/\alpha)\log N$, for any fixed subset $X$
we can estimate $|X|$ by $|R\cap X|\cdot |U|/|R|$ with additive error $O(\delta\cdot\max\{|X|,\alpha |U|\})$
w.h.p.  This follows from a Chernoff bound.\footnote{
Let $\mu=|X||R|/|U|$.  One version of the Chernoff bound
states that $\Pr[||R\cap X|-\mu| > \delta'\mu] \le
e^{-\Omega(\min\{\delta'^2\mu,\delta'\mu\})}$
(the first term of the min occurs when $\delta'\le 1$, the second
when $\delta'\ge 1$).  Set $\delta'\mu = c\delta\cdot \max\{\mu,\alpha |R|\}$ for an arbitrarily large constant $c$.  Then $\delta'\ge c\delta$ and $\delta'\mu\ge c\delta\alpha|R|$, implying that $\min\{\delta'^2\mu, \delta'\mu\} \ge \Omega(\min\{c^2\delta^2\alpha|R|, c\delta\alpha|R| \})\ge\Omega(c\log N)$.  Thus,
$||R\cap X| - \mu| \le O(\delta\cdot \max\{\mu,
\alpha|R|\})$ w.h.p.  Finally, multiply both sides by $|U|/|R|$.
}
In particular, w.h.p., $|R\cap X|\ge \alpha |R|$ implies $|X|\ge (1-O(\delta))\alpha |U|$, and $|R\cap X|\le \alpha |R|$ implies
$|X|\le (1+O(\delta))\alpha |U|$.

In line $1$, we draw a random sample $R_1\subseteq B$ of size $(1/\delta)^2(1/\alpha) \log N$.
Then for each $a \in A$, we can replace deg$_G(a)$ by
$|\{b\in R_1: (a,b)\in G\}| \cdot |B|/|R_1|$
with additive error $O(\delta\cdot \max\{\deg_G(a), \alpha|B|\})$ w.h.p.
This gives $A_0$ in $\OO((1/\alpha) |A|)$ time.

Line 4 takes $O(|A|)$ time.

In line 5, we draw another (independent) random sample $R_5\subseteq B$ of size $(1/\delta)^2(1/\alpha)^3 \log N$.
Then for each $a,a' \in A^*$, we can replace cdeg$_G(a,a')$ by
$|\{b\in R_5: (a,b),(a',b)\in G\}| \cdot |B|/|R_5|$
with additive error $O(\delta\cdot \max\{\cdeg_G(a,a'), \alpha^3|B|\})$ w.h.p.
We do not explicitly construct $\BAD^*$; rather, we can \emph{probe} any entry of
the adjacency matrix of $\BAD^*$ in $\OO((1/\alpha)^3)$ time.

In line~6, we draw another random sample $R_6\subseteq A^*\times A^*$ of size $(1/\delta)^2(1/\alpha)^2 \log N$.
Then we can replace $|\BAD^*|$ by
$|\{(a,a')\in R_6: (a,a')\in \BAD^*\}| \cdot |A^*|^2/|R_6|$
with additive error $O(\delta\cdot \max\{|\BAD^*|, \alpha^2 |A^*|^2\})$ w.h.p.
This takes $\OO((1/\alpha)^2)$ probes to $\BAD^*$, and thus $\OO((1/\alpha)^5)$ time.

Recall that the number of iterations for lines 2--6 is $\OO(1/\alpha)$ w.h.p.
Thus, the total time for lines 2--6 is $\OO((1/\alpha) |A| + (1/\alpha)^6)$.

In line 7, we draw another random sample $R_7\subseteq A^*$ of size $(1/\delta)^2(1/\alpha)^2 \log N$.
Then for each $a \in A^*$, we replace
deg$_{\BAD^*}(a)$ by
$|\{a'\in R_7: (a,a')\in \BAD^*\}| \cdot |A^*|/|R_7|$
with additive error $O(\delta\cdot\max\{\deg_{\BAD^*}(a),\alpha^2 |A^*|\})$ w.h.p.
This takes a total of $\OO((1/\alpha)^2 |A^*|)$
probes to $\BAD^*$, and thus $\OO((1/\alpha)^5 |A^*|)
= \OO((1/\alpha)^5 |A'|)$ time.

In line~8,  we draw one final random sample $R_8\subseteq A'$ of size $(1/\delta)(1/\alpha) \log N$.
Then for each $b \in B$, we can replace deg$_{G\cap (A'\times B)}(b)$ by
$|\{a\in R_8: (a,b)\in G\}| \cdot |A'|/|R_8|$
with additive error $O(\delta\cdot\max\{\deg_{G\cap (A'\times B)}(b), \alpha |A'|\})$ w.h.p.
This takes $\OO((1/\alpha)|B|)$ time.


The overall running time is $\OO((1/\alpha)|A|  + (1/\alpha)^5 |A'| + (1/\alpha)|B| + (1/\alpha)^6)$.
Since $|A'| \geq  \Omega(\alpha |A|)$, the first term can
be dropped.
The correctness proofs of (i) and (ii) still go through
after adjusting all constant factors by ${} \pm O(\delta)$, if we make
$\delta$ small enough.
\end{proof}

(We could slightly improve the $\alpha$-dependencies in the randomized time bound by incorporating matrix multiplication,
but they are small enough already that such improvements will not affect the final cost in our applications.)

\subsection{Proof of the BSG Theorem}\label{sec-bsg-proof}

We claim that the subsets $A'$ and $B'$ from the Graph Lemma
already satisfy the conditions stated in the BSG Theorem.
It suffices to verify condition (i).
To this end, let $S=\{a+b : (a,b)\in G\}$ and imagine
the following process:

\begin{quote}
\begin{algorithmic}

 \For {each $c \in A'+B'$}
 \State   take the lexicographically smallest $(a',b') \in  A'\times B'$ with $c = a'+b'$
 \For {each length-$3$ path $a'bab' \in G$}
 \State mark the triple $(a'+b, a+b, a+b')\in S^3$
\EndFor
\EndFor
 \end{algorithmic}
\end{quote}

By (i) in the Graph Lemma, the number of marks is at least
$\Omega(|A'+B'|\cdot \alpha^5 |A||B|)$.
On the other hand, observe that each triple $(a'+b, a+b, a+b')\in S^3$ is
marked at most once, because from the triple,
$c=(a'+b) - (a+b) + (a+b')$ is determined, from which
$a'$ and $b'$ are determined, and from which
$a=(a+b')-b'$ and $b=(a'+b)-a'$ are determined.
Thus, the number of marks is at most $|S|^3$.

Putting the two together, we get
$|A'+B'| \:\le\: O((1/\alpha)^5 |S|^3/(|A||B|))
\:=\: O((1/\alpha)^5 t^3 N)$.
\qed

\bigskip
The running time for the BSG Theorem is thus as given in Lemma~\ref{lem-graph-runtime}.

\subsection{Proof of the BSG Corollary}\label{sec-bsg-corollary-proof}

Note that although the BSG Corollary statement
has $|A||B|=O(N^2)$, we may assume that $|A||B|=\Theta(N^2)$, since
we can change parameters to $\hat{N}=\sqrt{|A||B|} = O(N)$,
$\hat{t}=tN/\hat{N}=\Omega(t)$, and $\hat{\alpha}=\alpha (N/\hat{N})^2=\Omega(\alpha)$.
Then $\hat{\alpha}\hat{N}^2 = \alpha N^2$, and
$(1/\hat{\alpha})^5\hat{t}^3\hat{N} =
 O((1/\alpha)^5 t^3 N)$.

We can construct the subsets $A_1,\ldots,A_k,B_1,\ldots,B_k$ and
the remainder set $R$ in the BSG Corollary, simply by
repeated applications of the BSG Theorem:

\begin{quote}
\begin{algorithmic}[1]
  \State $G_1 = \{(a,b)\in A \times B: a+b\in S\}$
  \For {$i=1,2,\ldots$}
    \State if $|G_i| \leq  \alpha N^2$ then set $k=i-1$, $R=G_{k+1}$, and return
    \State apply the BSG Theorem to $G_i$
with parameter $\alpha_i = \frac{|G_i|}{N^2}$
       to get subsets $A_i,B_i$
    \State  $G_{i+1} = G_i \setminus (A_i\times B_i)$
  \EndFor
 \end{algorithmic}
\end{quote}
\IGNORE{
In line~3, the parameters in the BSG Theorem are
$\hat{N}=\sqrt{|A||B|}\le N$, $\hat{t}=|S|/\hat{N}$, and $\hat{\alpha}_i=|G_i|/\hat{N}^2\ge \alpha$.
Note that $(1/\hat{\alpha}_i)^5\hat{t}^3\hat{N} \:=\:
(\hat{N}^2/|G_i|)^5 (|S|/\hat{N})^3 \hat{N}
\:\le\: (N^2/|G_i|)^5 (|S|/N)^3 N
\:\le\: (1/\alpha)^5 t^3 N$.
}

A naive upper bound on $k$ would be $O((1/\alpha)^2)$, since $\Omega(\alpha^2 N^2)$ edges are removed in each iteration.
For a more careful analysis, observe that
$$|G_{i+1}| \:\leq\:  |G_i| - \Omega(\alpha_i^2 N^2)
           \:=\: |G_i| \cdot \left(1 - \Omega\left(\frac{|G_i|}{N^2}\right)\right),$$
which implies that
$$ \frac{N^2}{|G_{i+1|}} \:\geq\:  \frac{N^2}{|G_i|} \cdot \left(1 + \Omega\left( \frac{|G_i|}{N^2}\right)\right)
                 \:=\:  \frac{N^2}{|G_i|} + \Omega(1).$$
Iterating, we get $N^2/|G_k| \geq  \Omega(k)$.
Since $|G_k| \geq  \alpha N^2$, we conclude that $k \leq  O(1/\alpha)$.
\qed


\subsection{Time Complexity of the BSG Corollary}\label{sec-bsg-corollary-time}

We now analyze the running time for the BSG Corollary.
We may assume that $|A|\le |B|$ without loss of generality.
We may also assume that $t\ge \alpha N/|A|$,
because otherwise
$|\{(a,b):a+b\in S\}|\:\le\: |S||A| \:\le\: \alpha N^2$ and so the trivial solution with
$k=0$ would work.
We may further assume that $N\ge (1/\alpha)^5 t^3$,
because otherwise $(1/\alpha)^5 t^3 N \ge N^2$, and so the trivial
solution with $k=1,A_1=A,B_1=B$ would already work.
Putting the two assumptions together, we have
$N\:\ge\: (1/\alpha)^5 (\alpha N/|A|)^3 \:=\: (1/\alpha)^2 (N/|A|)^3\ge (1/\alpha)^2 N/|A|$, and
so $|A|\ge (1/\alpha)^2$.

The following additional fact will be useful:
$\sum_{i=1}^k A_i = \OO(|A|)$.
To see this, observe that
$$ |G_{i+1}| \:\leq\:  |G_i| - \Omega(\alpha_i |A_i||B|)
             \: =\:  |G_i| \cdot \left(1 - \Omega\left(\frac{|A_i|}{|A|}\right)\right),$$
which implies that
$$|G_k| \:\leq\:  |G_1| \cdot e^{-\Omega( \sum_{i=1}^{k-1} |A_i|/|A| )}\ \ \Longrightarrow\ \
\sum_{i=1}^k |A_i| \:\leq\:  O\left(|A|\log\frac{|G_1|}{|G_k|}\right)
                    \:\leq\:  O(|A| \log(1/\alpha)).$$

This fact implies that the total cost of updating the
adjacency matrix as edges are deleted in line~5 is
at most $O(\sum_{i=1}^k |A_i||B_i|) \le O(\sum_{i=1}^k |A_i||B|) = \OO(N^2)$.
Furthermore, we can construct all the
sumsets $A_i+B_i$ naively, again in total time
$\OO(\sum_{i=1}^k |A_i||B_i|) = \OO(N^2)$.
It thus remains to bound the total cost of the invocations
to the BSG Theorem in line~4.

\paragraph{Deterministic time analysis.}
For the deterministic version of the algorithm,
we can naively upper-bound the total time of all
$k=O(1/\alpha)$ iterations by $O((1/\alpha)\MM(|A|,|A|,|B|))$.
We show how to improve the $\alpha$-dependency slightly.
To achieve the speedup, we modify the implementation
of the deterministic algorithm in
the Graph Lemma to support \emph{dynamic updates}
in $G$, namely, deletions of subsets of edges.

Suppose we delete $A_i\times B_i$ from $G$.
All the steps in the algorithm can be redone in
at most $O(|A||B|)$ time,
except for the computation of the products $X_1Y_1$
and $X_2Y_2$.  As $A_i\times B_i$ is deleted, $X_1$
undergoes changes to $|A_i|$
rows of $X_1$.  We can compute the
change in $X_1Y_1$ by multiplying the change in $X_1$
(an $|A_i|\times|B|$ matrix if the all-zero rows are ignored),
with the matrix $Y_1$, in $\MM(|A_i|,|B|,|A|)$ time.
Now, $Y_1$ also undergoes changes to $|A_i|$ columns.
We can similarly update $X_1Y_1$ under these changes
in $\MM(|A|,|B|,|A_i|)$ time.

The product $X_1Y_1$ itself
undergoes changes in $|A_i|$ rows and columns, and so does
the next matrix $X_2$.  Moreover, $X_2$ undergoes $z_i$
additional row and column deletions where $z_i$ is the number of
deletions to $A_0$.  Also, $Y_2$ undergoes
$|A_i|$ row changes and $z_i$ row deletions.
We can update $X_2Y_2$
under changes to $|A_i|+z_i$ rows in $X_2$ in
$\MM(|A_i|+z_i,|A|,|B|)$ time.  Next we can update $X_2Y_2$
under changes to $|A_i|+z_i$ columns in $X_2$ in
$\MM(|A|,|A_i|+z_i,|B|)$ time.   Lastly we can update $X_2Y_2$ under
changes to $|A_i|+z_i$ rows in $Y_2$ in $\MM(|A|,|A_i|+z_i,|B|)$ time.

Recall that $\MM(\cdot,\cdot,\cdot)$ is invariant under
permutation of its arguments, and $\sum_{i=1}^k |A_i| = \OO(|A|)$.  Moreover, $\sum_{i=1}^k z_i\le |A|$,
since $A_0$ undergoes only deletions.
The overall running time is
thus $O(\sum_{i=1}^k \MM(|A_i|+z_i,|A|,|B|))
\:\le\: O\left(\sum_{i=1}^k \left\lceil \frac{|A_i|+z_i}{\alpha |A|}\right\rceil\cdot \MM(\alpha |A|,|A|,|B|)\right) \:=\: O((1/\alpha) \MM(\alpha|A|,|A|,|B|))$.

According to known upper bounds on rectangular matrix multiplication \cite{HuangPan,LeGall1,LeGall2,Vas},
$$\MM(\alpha |A|, |A|,|A|)\:=\: O(\alpha^{\frac{2.3729-2}{1-0.3029}} |A|^{2.3729})\:=\:O(\alpha^{0.5349}|A|^{2.3729})$$ for
$\alpha |A| \gg |A|^{0.3029}$, which is true
since $|A|\ge (1/\alpha)^2$ by assumption.
So our time bound is
$O((1/\alpha) \MM(\alpha|A|,|A|,|B|))
\:\le\:O((1/\alpha) (|B|/|A|)\cdot \MM(\alpha|A|,|A|,|A|))
\:=\:O((1/\alpha)^{0.4651}|A|^{1.3729}|B|) \:=\:
O((1/\alpha)^{0.4651} N^{2.3729})$.

\paragraph{Randomized time analysis.}

For the randomized version of the algorithm, we can bound
the total time by $$\OO\left(N^2 + \sum_{i=1}^k ((1/\alpha)^5 |A_i| + (1/\alpha)|B| + (1/\alpha)^6)\right)\:=\: \OO(N^2 + (1/\alpha)^5 |A| + (1/\alpha)^2|B| + (1/\alpha)^7).$$
The third and fourth terms can be dropped, because they are
dominated by the first and second since $|A|\ge (1/\alpha)^2$ by assumption.
In the case $t\ge 1$, the second term can also be dropped,
because it is dominated by the first
since $N\ge (1/\alpha)^5 t^3$ by assumption.

Since we can efficiently check whether the solution is correct,
the Monte Carlo algorithm can
be converted into a Las Vegas algorithm.
This completes the proof of Theorem~\ref{runtime-corollary}.
\qed


\bigskip
A gap remains between the deterministic and randomized results.
For constant $\alpha$, we believe it should be possible
to reduce the deterministic running time in the BSG Corollary
to $\OO(N^2)$, by replacing
matrix multiplication with FFT computations, but we are currently
unable to bound the $\alpha$-dependency polynomially in the time
bound (roughly
because as we
iterate, the graph $G_i$ becomes less and less well-structured).

\section{Proof of the FFT Lemma}\label{sec-fft}

To complete the last piece of the puzzle, we now supply a proof of the FFT Lemma.
Note that although the statement of the FFT Lemma has $A,B\in [U]^d$,
we may assume that $d=1$, since we can map
each point $(x_1,\ldots,x_d)$ to an integer
$\sum_{i=1}^d x_i(2U)^{i-1}$.

The idea is to use hash to a smaller universe and
then solve the problem on the smaller universe by FFT\@.
As our problem involves sumsets,
we need a hash function that is ``basically'' additive.
The following definition suffices for our purposes: we say
that a function $h$ is \emph{pseudo-additive} if there
is an associated function $\hat{h}$ such that
$\hat{h}(h(a)+h(b)) = h(a+b)$ for every $a,b$.
For example, the function $h_p(x)= x\bmod{p}$ is pseudo-additive
(with the associated function $\hat{h}_p=h_p$).

\newcommand{\collide}{\textrm{collide}}
\newcommand{\HH}{{\cal H}}

We do not need a single perfect hash function (which would be
more time-consuming to generate and may not be pseudo-additive); rather,
it suffices to have a small number
of hash functions that ensure each element in $T$ has no collisions
with respect to at least one hash function.  To be precise,
define $\collide(h,x)=\{y\in T\setminus\{x\}: h(y)=h(x)\}$.
We say that
a family $\HH$ of functions is \emph{pseudo-perfect} for $T$
if for every $x\in T$ there is an $h\in \HH$ with $|\collide(h,x)|=0$.
\begin{description}
\item[New Problem:] Construct a family $\HH$ of $k$ pseudo-additive functions from $[U]$ to $[\OO(N)]$ that is pseudo-perfect for $T$.
\end{description}

\paragraph{Computing $A+B$, given a pseudo-perfect pseudo-additive family for $T$.}

\newcommand{\hh}{\hat{h}}
Given such an $\HH$, we can compute $A+B$ as follows.
For each $h\in\HH$, we first compute $h(A)+h(B)$ by FFT in
$\OO(N)$ time and obtain $\hh(h(A)+h(B))$.
Then for each $s\in T$, we identify an
$h\in\HH$ with $|\collide(h,s)|=0$, and if $h(s)\in \hh(h(A)+h(B))$, we report $s$.
The total time of the whole algorithm is $\OO(kN)$
(assuming that each $h$ and $\hh$ can be evaluated in constant time).
To prove correctness, just note that
for $a\in A,b\in B$, we have
$h(s)=\hh(h(a)+h(b))$ iff $h(s)=h(a+b)$ iff $s=a+b$,
assuming that $|\collide(h,s)|=0$ (since $a+b\in A+B\subseteq T$).

It remains to solve the problem of constructing the hash functions $\HH$.

\paragraph{A standard randomized construction of
a pseudo-perfect pseudo-additive family.}
With randomization, we can simply pick $k=\log N + 1$ random primes $p\in [cN\log^2 U]$ for a sufficiently large constant $c$,
and put each function $h_p(x)=x\bmod{p}$ in $\HH$.

To see why this works, consider a fixed $x\in T$.
For any $y\in X\setminus\{x\}$, the number of primes $p$ with $y\bmod{p}=x\bmod{p}$ is equal to the number of prime divisors of
$|x-y|$
and is at most $\log U$.  Since the number of primes in $[cN\log^2 U]$
is at least $2N\log U$ for a sufficiently large $c$ by the
prime number theorem, $\Pr_p[y\bmod{p}=x\bmod{p}]\le 1/(2N)$.
Thus, $\Pr_p[ |\collide(h_p,x)|\neq 0] \le 1/2$.
So, $\Pr[\forall h_p\in\HH, |\collide(h_p,x)|\neq 0]\le 1/2^k \le 1/(2N)$.  Therefore, the overall failure probability is at most $1/2$.

Note that we can compute the numbers
$|\collide(h,x)|$ for all $x\in T$ for any given hash function
in linear time after assigning elements to buckets.
In particular, we can verify whether the
construction is correct in $\OO(N)$ time.
We conclude that there is a Las Vegas algorithm with total expected time  $\OO(N)$.

\paragraph{A new deterministic construction of a pseudo-perfect pseudo-additive family.}
An obvious way to derandomize the previous method is to try
all primes in $[cN\log^2 U]$ by brute force, but the running time
would be at least
$\Omega(N^2)$.  Indeed, that was the approach taken by
Amir et al.~\cite{AKP07}.  We describe a faster deterministic
solution by replacing a large prime with multiple smaller
primes, using hash functions of the form
$h_{p_1,\ldots,p_\ell}(x)=(x\bmod{p_1},\ldots, x\bmod{p_\ell}) \in\Z^\ell$.  Such a function remains pseudo-additive
(with the associated function $\hat{h}_{p_1,\ldots,p_\ell}(x_1,\ldots,x_\ell)
= (x_1\bmod{p_1},\ldots, x_\ell\bmod{p_\ell})$).
The idea is to generate the $\ell$ smaller primes in $\ell$
separate rounds.  The algorithm works as follows:

\begin{quote}
\begin{algorithmic}[1]

 \State  $S=T$
 \While{$|S|\neq\emptyset$}
   \For{$i=1$ to $\ell$}
     \State pick a prime $p_i\in [cN^{1/\ell}\log^2 U]$ with
     \State \ \ \ \
$|\{x\in S: |\collide(h_{p_1,\ldots,p_i},x)| < N^{1-i/\ell}\}|
\:\ge\: |S|/2^i$
   \EndFor
   \State put $h_{p_1,\ldots,p_\ell}$ in $\HH$, and
remove all $x$ with $|\collide(h_{p_1,\ldots,p_\ell},x)|=0$ from $S$
 \EndWhile
 \end{algorithmic}
\end{quote}

Consider the inner for loop.  Lines 4--5 take $\OO(N^{1+1/\ell})$
time by brute force.  But why does $p_i$ always exist?
Suppose that $p_1,\ldots,p_{i-1}$ have already been chosen, and imagine that $p_i$ is picked at random.
Let $C_i(x)$ be a shorthand for $\collide(h_{p_1,\ldots,p_i},x)$.
Consider a fixed $x\in S$ with $|C_{i-1}(x)| < N^{1-(i-1)/\ell}$.
For any fixed $y\in T\setminus\{x\}$, $\Pr_{p_i}[y\bmod{p_i}=x\bmod{p_i}]\le 1/(2N^{1/\ell})$ by
an argument we have seen earlier.
Thus, $$\Ex_{p_i}[|C_i(x)|] \:=\: \Ex_{p_i}[|\{y\in C_{i-1}(x): y\bmod{p_i}=x\bmod{p_i}\}|]
\:\le\: |C_{i-1}(x)|/(2N^{1/\ell}) < N^{1-i/\ell}/2.$$
By Markov's inequality,
$\Pr_{p_i}[|C_i(x)| < N^{1-i/\ell}] \ge 1/2$.
Since we know from the previous iteration that there
are at least $|S|/2^{i-1}$ elements $x\in S$
with $|C_{i-1}(x)| < N^{1-(i-1)/\ell}$, we then have
$\Ex_{p_i}[|\{x\in S: |C_i(x)< N^{1-i/\ell}\}|]\ge |S|/2^i$.
So there exists $p_i$ with the stated property.

Line~7 then removes at least $|S|/2^\ell$ elements.
Hence, the
number of iterations in the outer while loop is
$k\le \log N/\log(\frac{2^\ell}{2^\ell-1})=O(2^\ell\log N)$.
Each function $h_{p_1,\ldots,p_\ell}$
maps to $[\OO(N^{1/\ell})]^\ell$,
which can easily be mapped back to one dimension in $[\OO(N)]$
while preserving pseudo-additivity,
for any constant $\ell$.  We conclude that there
is a deterministic algorithm with total running time
$\OO(N^{1+1/\ell})$ for an arbitrarily large constant $\ell$.
This gives $\OO(N^{1+\eps})$.  (More precisely, we can bound
the total deterministic time by $\OO(N 2^{O(\sqrt{\log N\log\log U})})$ by choosing a nonconstant $\ell$.)
%
\qed

\begin{remark}
In the above results, the $\OO$ notation hides
not only $\log N$ but also $\log U$ factors.
For many of our applications, $U=N^{O(1)}$ and so this is
not an issue.  Furthermore, in the randomized version,
we can use a different hash function~\cite{BDP08} to reduce $U$ to
$N^{O(1)}$ first, before running the above algorithm.
In the deterministic version, it seems possible to
lower the dependency on $U$ by using recursion.
\end{remark}

\begin{remark}
Our hash function family construction has other applications,
for example, to the \emph{sparse convolution}
problem: given two nonnegative vectors $u$ and $v$, compute
their classical convolution $\vec{u}*\vec{v} = \vec{z}$ (where $z_k =
\sum_{i=0}^k u_iv_{k-i}$) in ``output-sensitive'' time,
close to $||\vec{z}||$, the number of nonzeros in $\vec{z}$.
The problem was raised by Muthukrishnan~\cite{Muthukrishnan95},
and previously solved by Cole and Hariharan~\cite{CH02}
with a randomized Las Vegas algorithm in $\OO(||\vec{z}||)$ time.

Let $A=\{a: u_a\neq 0\}$ and $B=\{b: v_b\neq 0\}$.  Then
$||\vec{z}||$ is precisely $|A+B|$.
If we are given a superset $T$ of $A+B$ of size $O(||\vec{z}||)$, we can solve the problem deterministically
using a pseudo-perfect pseudo-additive family $\HH$ as follows:
For each $h\in\HH$,
precompute the vectors $\vec{u'}$ and $\vec{v'}$ of length $\OO(||\vec{z}||)$ where
$u'_i=\sum_{a:h(a)=i}u_a$ and $v'_j=\sum_{y:h(b)=j}v_b$.
Compute $\vec{u'}*\vec{v'}=\vec{z'}$ in $\OO(||\vec{z}||)$ time by FFT\@.
Compute $z''_\ell = \sum_{k:\hh(k)=\ell}z'_k$.
Then for each $s\in T$ with $|\collide(h,s)|=0$,
set $z_s=z''_{h(s)}$.
To prove correctness, first observe that
$z'_k=\sum_{a,b:h(a)+h(b)=k}u_av_b$.
If $|\collide(h,s)|=0$, then
$z''_{h(s)} = \sum_{a,b:\hh(h(a)+h(b))=h(s)}u_av_b
=\sum_{a,b: h(a+b)=h(s)} u_av_b = \sum_{a,b:a+b=s} u_av_b=z_s$.

Cole and Hariharan's algorithm is more general and does not require the superset $T$
to be given.  It would be interesting to obtain a deterministic
algorithm that similarly avoids $T$.
\end{remark}

\IGNORE{
To be more precise the algorithm proposed runs in time $O(||z||\log^2(||u||+||v||))$, given a specific family of random hash functions that are a pseudo-additive family for $X=\{i \ | z[i] > 0\}$. Their algorithm computes, in our terminology, $S_h(i) = \Sigma_{j=0}^{g-1}u_h[\hat{h}(h(i)+j)]v_h[j]$, where $g = O(||z||)$.

If we are given in advance a set $X$ that is a superset of $\{i \ | z[i] > 0\}$ then it is possible to employ the Cole and Hariharan~\cite{CH02} algorithm deterministically. This is true because one can show that their algorithm works for any pseudo-additive family. The crucial observation of correctness is as follows. Let $w[i]>0$. The claim is that $S_h(i)=w[i]$ for $h$ which satisfies that $|\collide(h,i)|=0$. The main concern for correctness is that a term $u_h[\hat{h}(h(i)+q)]v_h[q]$ of $S_h(i)$ is uncomputable because one of $u_h[\hat{h}(h(i)+q)]$ or $v_h[q]$ has two sources. This cannot be. Say, $j$ and $k$ are non-zero locations in $u$ and $q'$ is a non-zero in $v$ such that $h(q')=q$ and $h(j)=h(k)=\hat{h}(h(i)+q)$. Since $j\not= k$ one of $j-q'$ and $k-q'$ is different than $i$. Wlog, say $j-q'\not=i$. However, $h(j-q') = \hat{h}(h(j)-h(q')) = \hat{h}(\hat{h}(h(i)+q)-h(q')) = \hat{h}(h(i+q')-h(q')) = h(i+q'-q') = h(i)$, a contradiction to $|\collide(h,i)|=0$. The reverse case can be resolved symmetrically.

Using the deterministic pseudo-additive family for $X=\{i \ | z[i] > 0\}$ and this property, the rest of their algorithm is deterministic and works properly in the times mentioned above with a multiplicative factor of $2^\ell$.

{\bf Note:}
Cole and Hariharan's algorithm~\cite{CH02} does not require the superset $X$
to be given.  It would be interesting to obtain a deterministic
algorithm that similarly avoids $X$.

}

\begin{remark}
Another application is
\emph{sparse wildcard matching}.
The problem is: given a pattern and text that are sparse with few non-zeroes, find all alignments where every non-zero pattern element matches the text character aligned with it. Sparse wildcard matching has applications, such as subset matching, tree pattern matching, and geometric pattern matching; see~\cite{CH02}.

Cardoze and Schulman~\cite{CS98} proposed a Monte Carlo near-linear-time algorithm. Cole and Hariharan~\cite{CH02} transformed this into a Las Vegas algorithm. Amir et al.~\cite{AKP07} considered the indexing version of this problem where the text is preprocessed for subsequent pattern queries. In this setting they proposed an $\OO(N^2)$ preprocessing time algorithm, where $N$ is the number of non-zeroes in the text. The query time is then $\OO(N)$. The latter is essentially based upon the construction of a deterministic pseudo-perfect
pseudo-additive family for $T=\{i:\ \mbox{$i$ is a non-zero text location}\}$.
It can be checked that our new deterministic solution is
applicable here and thus improves their preprocessing time to $\OO(N^{1+\eps})$, yielding the first quasi-linear-time deterministic
algorithm for sparse wildcard matching.
%
\end{remark}

\IGNORE{

\section{Deterministic algorithms
for computing A+B given T}~\label{determinstic-algo}

 It is possible to
reduce the $\tilde{O}(|T|^2)$ preprocessing time to $\tilde{O}(|T|^{3/2})$ by
a two-level hashing scheme.  Here's the rough general idea:

At the first level, we find $O(\log n)$ primes of the order of $O(|T|^{1/2})$
so that each element of $T$ hashes to a bucket of size $O(|T|^{1/2})$ for
at least one of these primes.  This can be done in $\tilde{O}(|T|^{3/2})$ time
like in the Porat et al. paper by setting up a 2D table with the
elements of T as columns and primes as rows.

Note that we would get into
trouble in the next level because each bucket could then use a
different prime, which would destroy the additivity property we want.
But actually, we don't need to use different primes for the different
buckets!  A random prime (or more accurately a set of $O(\log n)$ random
primes) of the order of $O(|T|^{1/2})$ would work for all buckets
simultaneously.  So we can derandomize in the same way.

In other words, at the second level, we find another set of $O(\log n)$
primes of the order of $O(|T|^{1/2})$ so that for each first-level
bucket of size $O(|T|^{1/2})$, each of its elements hashes to a
second-level sub-bucket of size at most 1 for at least one of
these primes.  This again takes $\tilde{O}(|T|^{3/2})$ time by setting up
a 2D table (a single one for all the buckets at the same time)...

The final hash function would be $x \rightarrow (x \mod p_i, x \mod p_j)$
where there are $O(\log n)$ choices for $i$ and for $j$, and for each $x \in T$,
we can pre-compute one good $i$ and $j$ for $x$.  Of course, the pair
in $[O~(|T|^{1/2})]^2$ can be mapped down to a single integer in $[O~(|T|)]$
in the usual way and we have near additivity (the number of possibilities
for $h(x+y)$ in terms of $h(x)+h(y)$ becomes 4).

Repeating the idea for multiple but a constant number of levels should
bring the runtime to $\tilde{O}(|T|^{1+eps})$, and retain near additivity (with
4 increased to larger constants).

}

\section{Final Remarks}

We have given the first truly subquadratic algorithms for
a variety of problems related to 3SUM\@.
Although there is potential for improving the exponents in
all our results, the main contribution is that we have broken the
barrier.

An obvious direction for improvement would be to reduce the
$\alpha$-dependency in the BSG Theorem itself;
our work provides more urgency towards this well-studied combinatorial
problem.  Recently, Schoen~\cite{Sch14} has announced such an improvement of
the BSG Theorem, but it is unclear whether this result will be useful
for our applications for two reasons: First, the time complexity of
this construction needs to be worked out. Second, and more importantly,
Schoen's improvement is for a more basic version of the theorem without $G$,
and the extension with $G$ is essential to us.

Here is one specific mathematical question that is particularly relevant to us:
\begin{quote}
Given subsets $A,B,S$ of
an abelian group of size $N$, we want to
cover $\{(a,b)\in A\times B: a+b\in S\}$
by bicliques $A_i\times B_i$, so as to minimize the cost function
$\sum_i |A_i+B_i|$.  (There is no constraint on the
number of bicliques.)  Prove worst-case bounds on the minimum
cost achievable as a function of $N$.
\end{quote}
A bound $O(N^{13/7})$ follows from the BSG Corollary, simply by
creating $\alpha N^2$ extra ``singleton'' bicliques
to cover $R$, and setting $\alpha$ to minimize
$O(\alpha N^2 + (1/\alpha)^6 N)$.  An improvement on this
combinatorial bound would have implications to at
least one of our algorithmic applications, notably
Theorem~\ref{thm-preproc1}.

We hope that our work will inspire further applications of additive combinatorics in algorithms.
For instance, we have yet to study special cases of $k$SUM for
larger $k$;
perhaps some multi-term extension of the BSG Theorem~\cite{BC} would be
useful there.  As an extension of bounded monotone (min,+)
convolution, we may also consider (min,+) matrix
multiplication for the case of integers in $[n]$ where the
rows and columns satisfy monotonicity or the bounded differences
property.
It would be exciting if the general integer 3SUM or APSP
problem could be solved using tools from
additive combinatorics.

}

\bibliographystyle{plain}
\bibliography{bsg}

\end{document}